\newtheorem{theorem}{Theorem}[section]
\newtheorem{lemma}[theorem]{Lemma}
\newcommand{\sfrac}[2]{\ensuremath{{\textstyle{\frac{#1}{#2}}}}}
\newcommand{\half}{\sfrac{1}{2}}
\newcommand{\bra}[1]{\ensuremath{\langle{#1}\vert}}
\newcommand{\ket}[1]{\ensuremath{\vert{#1}\rangle}}
\newcommand{\be}{\begin{equation}}
\newcommand{\ee}{\end{equation}}
\newcommand{\ba}{\begin{array}}
\newcommand{\ea}{\end{array}}
\newcommand{\calC}{{\cal C }}
\newcommand{\calP}{{\cal P }}
\newcommand{\Gate}[1]{\textsc{#1}}
\newcommand{\cnotgate}{\Gate{cnot}}
\newcommand{\swapgate}{\Gate{swap}}
\newcommand{\ibmwashington}{{\textit{ibm\_washington}}}
\newcommand{\ibmithaca}{{\textit{ibm\_ithaca}}}
\title{Single-shot error mitigation by coherent Pauli checks}
\author{Ewout van den Berg, Sergey Bravyi, Jay M. Gambetta, Petar Jurcevic,\\[2pt] Dmitri Maslov, and Kristan Temme}
\affil{IBM Quantum, IBM T.J. Watson Research Center, Yorktown Heights, NY 10598, USA}
\date{\today}
\begin{document}

\maketitle

\begin{abstract}
Generating samples from the output distribution of a quantum circuit is
a ubiquitous task used as a building block of many quantum algorithms. 
Here we show how to accomplish this task
on a noisy quantum processor lacking full-blown error correction
for a special class of quantum circuits dominated by Clifford gates.
Our approach is based on Coherent Pauli Checks (CPCs)
that  detect errors in a Clifford 
circuit  by verifying commutation rules between 
random Pauli-type check operators and the considered circuit.  
Our main contributions are as follows.
First, we derive 
a simple formula
for the probability that a Clifford circuit
protected by CPCs contains a logical error.
In the limit of a large number of checks, the logical error probability is shown to approach the value ${\approx} 7\epsilon n/5$, where $n$ is the number of qubits and 
$\epsilon$ is the depolarizing error rate. 
Our formula agrees nearly perfectly with the numerical simulation results.
Second, we show that CPCs are well-suited for quantum processors with a limited
qubit connectivity. For example, the difference between all-to-all and linear qubit
connectivity is only a 3$\times$ increase in the number of {\sc{cnot}} gates required to implement CPCs.
Third, we describe simplified one-sided CPCs which are well-suited for mitigating
measurement errors in the single-shot settings.
Finally, we report an experimental demonstration of CPCs with up to 10 logical qubits and 
more than 100 logical {\sc{cnot}} gates. 
Our experimental results show that CPCs provide a marked improvement in the
logical error probability 
for the considered task of sampling the output distribution of quantum circuits. 
\end{abstract}

\section{Introduction}

Quantum error mitigation (QEM)  is a versatile set of tools for 
improving reliability of quantum circuits executed on noisy hardware~\cite{temme2017error,li2017efficient,kandala2019error}.
QEM supplements more traditional approaches to quantum fault-tolerance 
based on error-correcting codes. It is well-suited for quantum processors available today 
that do not yet meet stringent gate fidelity requirements
of full-blown quantum error correction. Most of the known  QEM methods combat noise by measuring a redundant set of data
generated by a suitable ensemble of noisy quantum circuits. 
Classical post-processing is then applied to the measured data
to filter out the contribution of noise and predict the outcome that would be 
observed in the absence of noise.
A comprehensive review of modern QEM protocols can be found in~\cite{endo2021hybrid}.

In contrast to quantum error correction, QEM introduces only a minor (if any) overhead in terms of
ancillary qubits and circuit depth while obviating the need to compile a circuit using a fault-tolerant gate set
such as the Clifford+T library, and tolerates error rates above the threshold of the known quantum codes.
However, QEM has two major limitations. First, the error mitigation overhead, as measured by the
number of circuit repetitions, scales exponentially with the circuit size, limiting the scope
of QEM to relatively shallow circuits.  While this overhead appears unavoidable, 
the exponential scaling becomes very mild in the regime of small error rates enabling
QEM demonstrations for medium-size circuits with
20 or more qubits and 1000 or more gates~\cite{kim2021scalable,vandenBerg2022pec}. 
Perhaps more importantly, the scope of almost all known QEM protocols 
is severely limited in terms of how
the output of a quantum circuit can be accessed. Namely, these protocols
apply only to quantum algorithms with an {\em expected value readout}.
Such algorithms can use the output state of a quantum circuit only to measure
the expected value of some observable such as a Pauli operator,
a Hamiltonian composed of several Pauli terms, or a projector onto some basis state.
Notable  examples of quantum algorithms
with the expected value readout are variational quantum simulators~\cite{peruzzo2014variational,kandala2017hardware,motta2020determining} and supervised learning with quantum kernels~\cite{schuld2019quantum,havlivcek2019supervised}.
However, unlocking the full computational power of quantum algorithms may require
a {\em single-shot readout} -- the ability to generate samples from the probability distribution
describing the output of a quantum circuit.
Thus, if $\psi$ denotes the output state of a quantum circuit, we would like to sample a bit string $x$
from the probability distribution $|\langle x|\psi\rangle|^2$.
For example, simulation algorithms based
on the quantum phase estimation~\cite{abrams1999quantum}, 
Shor's factoring algorithm~\cite{shor1999polynomial},
Grover's search~\cite{grover1997quantum}, 
quantum approximate optimization algorithm~\cite{farhi2014quantum} (QAOA),
quantum volume~\cite{jurcevic2021demonstration}, random circuit sampling~\cite{aaronson2016complexity}, state learning algorithms based on 
classical shadows~\cite{huang2020predicting},
and quantum-enhanced Markov Chain Monte Carlo algorithms~\cite{layden2022quantum}
all require single-shot readout. 
Moreover, certain families of quantum circuits with the expected value readout 
can be efficiently simulated on a classical computer (in time polynomial or quasi-polynomial
in the number of qubits), whereas their counterparts with the single-shot readout are believed to be hard for classical simulators.
This is the case, for example, for geometrically local constant-depth circuits on a
finite-dimensional grid of qubits~\cite{bravyi2021classical,coble2022quasi,dontha2022approximating},
instantaneous quantum polynomial circuits~\cite{bremner2017achieving}, 
and QAOA circuits with a few entangling steps~\cite{farhi2016quantum,wang2018quantum,bravyi2020obstacles,bravyi2022hybrid}.
These results strongly suggest that single-shot readout can 
endow quantum circuits with extra computational power.
Thus the ability to do error mitigation for quantum circuits
with the single-shot readout is a highly desirable yet elusive goal.

In the present paper, we examine QEM protocols pioneered by Roffe et al.~\cite{roffe2018protecting}
and developed further by Debroy and Brown~\cite{debroy2020extended}.
The key building block of these protocols is a {\em coherent Pauli check}.
It enables single-shot error mitigation for arbitrary circuits composed of Clifford gates
as well as layers of Clifford gates embedded into a larger, possibly non-Clifford circuit.
A coherent Pauli check (CPC) detects errors 
by verifying commutation rules between Pauli and Clifford gates, as described in more detail in Section~\ref{sec:CPC}. A single CPC applied to a payload circuit with $n$ qubits
requires the overhead of only one ancillary qubit and at most $O(n)$ gates while eliminating roughly 
$50\%$ of the errors that may occur in the payload circuit.
Despite their promise, 
QEM protocols based on CPCs have received surprisingly little attention.
Recent works by Debroy and Brown~\cite{debroy2020extended}, and 
Gonzales et al.~\cite{gonzales2022quantum} examined the effectiveness of CPCs using numerical simulations.
Here, we propose a simple theoretical model
that can be used to predict the performance of QEM protocols with multiple CPCs
for a very large number of qubits. Our model takes into account errors that occur in the payload circuit as well as errors introduced by CPCs themselves. 
We observe a nearly perfect agreement
between the predictions of our model and numerical simulation results.
Next, we show how to enhance the performance of CPCs by augmenting them with flag qubits and how to efficiently implement QEM protocols with multiple
CPCs for the linear nearest neighbor (LNN) qubit connectivity. 
Finally, we report an experimental demonstration of error-mitigated quantum
circuits with CPCs and single-shot readout.

Let us briefly comment on the earlier work relevant for our study. The key ideas behind CPCs
are analogous to entanglement assisted quantum error correction 
proposed by Brun, Devetak, and Hsieh~\cite{brun2006correcting}. These authors explored
catalytic quantum codes described by Pauli check operators that do not obey the
standard pairwise commutativity condition.  It was observed that catalytic codes can
nevertheless be useful for quantum communication in the presence of entanglement shared
between the sender and the receiver. Moreover,
it was found that even a small amount of preexisting entanglement 
can enable reliable transmission of a large number of qubits. The authors of~\cite{brun2006correcting}
also commented that ``catalytic quantum codes open the possibility of application to error correction
in quantum computing where we can think of decoherence as a channel into the future."
This possibility was explored further by Chancellor, Roffe et al.~in ~\cite{chancellor2016coherent,roffe2018protecting,roffe2019coherent}
who introduced the notion of CPCs and used them as a  tool for constructing conventional
quantum error-correcting codes. Ref.~\cite{roffe2018protecting} also reported
the first experimental demonstration of CPCs. However, the main focus of~\cite{chancellor2016coherent,roffe2018protecting,roffe2019coherent} was on realizing a
fault-tolerant quantum memory (the identity payload circuit). 
A seminal work by Debroy and Brown~\cite{debroy2020extended} pioneered applications
of CPCs in the context of quantum computing and circuit verification.
Ref.\cite{debroy2020extended} developed strategies for optimizing Pauli checks
and numerically observed that fidelities of small Clifford and near-Clifford circuits can be significantly improved 
in the presence of CPCs. Most of the constructions used in the present paper were introduced in~\cite{debroy2020extended}.  More recently, Gonzales et al.~\cite{gonzales2022quantum}
analyzed the performance of CPCs in the presence of coherent (non-Pauli) errors
and proposed an efficient algorithm for finding Pauli checks compatible with 
a given payload circuit composed of Clifford gates and single-qubit (non-Clifford) $Z$-rotations. We note that CPCs can also be viewed as a partially fault-tolerant implementation of error correction with flag qubits introduced by Chao and Reichardt~\cite{chao2018quantum}.

\section{Coherent Pauli checks}
\label{sec:CPC}

In this section, we summarize the construction of CPCs proposed
in~\cite{roffe2018protecting, debroy2020extended,gonzales2022quantum}.
Let $\calP_n$ and $\calC_n$ be the groups of $n$-qubit Pauli
and Clifford operators respectively. By definition, 
any element of $\calP_n$ has a form 
$\omega Q_1\otimes Q_2 \otimes \cdots \otimes Q_n$,
where $Q_j \in \{I,X,Y,Z\}$ are single-qubit Pauli operators
and $\omega \in \{\pm 1,\pm i\}$ is a phase factor.
The Clifford group $\calC_n$ contains all $n$-qubit unitary
operators $U$ such that $U\calP_n U^\dag = \calP_n$.  

\subsection{Two-sided checks}

For any Clifford circuit $U\in \calC_n$ 
and any Pauli $L\in \calP_n$ with corresponding $R = ULU^{\dag} \in \calP_n$ it  holds that
\be
\label{LUR}
\Qcircuit @C=1em @R=.7em {
      & \gate{L} & \gate{U}  & \gate{R} & \qw
      & = & &  \gate{U} & \qw
}\;,
\ee
where the desired Pauli operator $R$ can be efficiently computed using the standard stabilizer formalism.
The circuit identity Eq.~\ref{LUR} holds even if $U$ is a part of a larger
quantum circuit that possibly contains non-Clifford gates. 
Suppose now that $U$ contains some faulty gates. For simplicity, we will consider
the depolarizing noise model such that a faulty gate is modeled by an
ideal gate followed by a Pauli error. Since $U$ contains only Clifford gates, any Pauli
error can be propagated to the beginning of $U$, which results in a noisy circuit
$\tilde{U}=UE$ for some Pauli error $E\in \calP_n$.
We conclude that a noisy version of the identity Eq.~\ref{LUR} is 
\be
\label{LURnoisy}
\Qcircuit @C=1em @R=.7em {
      & \gate{L} & \gate{\tilde{U}}  & \gate{R} & \qw & &
      & = (-1)^s & &
      & \gate{\tilde{U}} & \qw
}
\ee
where $s=0$ if the error $E$ commutes with $L$ and $s=1$ if $E$ anti-commutes with $L$. 
We will refer to $s$ as an {\em error syndrome}. By definition, $s=0$ with certainty
in the absence of errors. 
A CPC works by measuring the syndrome $s$ using one ancillary qubit and post-selecting on the outcome $s=0$. 
The simplest version of a syndrome measurement circuit is illustrated in Figure~\ref{fig:CPC1}
for $n=3$ qubits. We refer to the operators $L$, $U$, and $R$ as the left Pauli check, the payload circuit, and the right Pauli check respectively.

We can detect an error $E$ whenever it anti-commutes with the left Pauli check $L$. Therefore, if we know anything about the expected distribution of the errors $E$, we could choose $L$ such that the probability of catching an error is maximized (the corresponding right Pauli check $R$ is uniquely determined using Eq.~\ref{LUR}). When the error distribution is not known in advance, a good strategy is to pick the $L$ at random from the uniform distribution on the Pauli group $\calP_n$.
This choice of $L$ ensures that any non-identity error $E$ occurring in the payload circuit
anti-commutes with the check $L$ with probability $1/2$. If the errors are uniformly distributed, we can detect half of all the errors in the payload circuit using only a single Pauli check. 

\begin{figure}[t]
\centering
\includegraphics[height=3cm]{./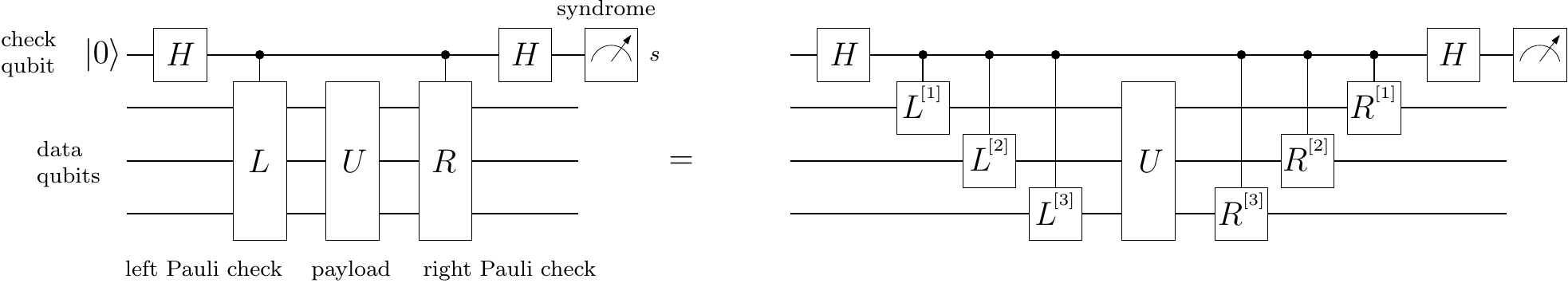}\\[2pt]
\caption{Example of a  coherent Pauli check with $n=3$ data qubits. Superscript indices in the right-side circuit denote the different components of the three-qubit Pauli terms L and R.}\label{fig:CPC1}
\end{figure}

To increase the fraction of detected errors, the CPC construction can be extended to multiple Pauli checks. The simplest version of this is illustrated in Figure~\ref{Fig:MultipleChecks} for $m=3$ checks. Alternatively, we could view the circuit formed by the previous $m-1$ checks as yet another Clifford circuit, and apply a check on it to obtain nested checks. This scheme also provides some means of detecting an error in the previous checks themselves.
For our simple scheme, we can pick the left checks $L_i$ with $i=1,2,\ldots,m$ uniformly at random
from the Pauli group $\calP_n$ and determine the corresponding right checks as $R_i = UL_iU^{\dag}$. Since there is no advantage in applying the same check twice, we can sample from the Pauli group without replacement to ensure that all the left checks are unique. The identity operator commutes with all possible errors, and we therefore omit this element from the Pauli group when sampling.
In the generalized scheme, we again post-select on the zero syndrome for each of the $m$ checks.
As the number of checks $m$ grows, CPC is capable of detecting more and more errors in the payload circuit, at the cost of an exponentially decreasing
post-selection probability.

\begin{figure}[!h]
\centering
\includegraphics[height=3.5cm]{./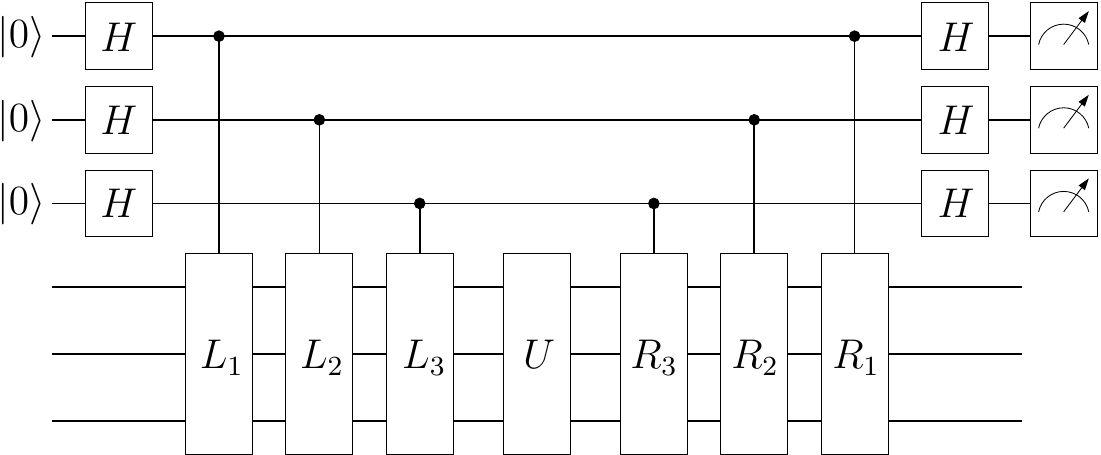}
\caption{Example of a coherent Pauli check on three data qubits. The check consists of three nested left-right checks, each defined by a pair of Pauli terms $L_i$ and $R_i$ and implemented using a single ancillary check qubit. For post-selection, the measurements on all three check qubits should be zero.}\label{Fig:MultipleChecks}
\end{figure}

\subsection{One-sided checks}\label{Sec:OneSided}

We now propose a special optimization to the CPC scheme for the case where the data qubits are measured directly after applying the (checked) payload circuit. The error mitigation protocol based on one-sided CPCs may find applications in quantum state tomography based on classical shadows~\cite{huang2020predicting} where a random Clifford operator
applied to the state of interest is directly followed by the measurement of each qubit in the standard basis. 
Instead of considering some Pauli error $E'$ occurring at the beginning of the payload circuit, we can push it through the circuit and obtain the equivalent error $E = UE'U^{\dag}$. The error syndrome $s$ now depends on whether $E$ commutes or anti-commute with the right Pauli check $R$. Given that we are now at the end of the circuit, we can disregard all Pauli-Z components in $E$, since these do not affect measurements in the standard basis. This leaves us with an effective error $E \in \{I,X\}^{\otimes n}$. Without loss of generality, it suffices to then choose the right Pauli check $R$ from the $n$-qubit Pauli-Z group $\{I,Z\}^{\otimes n}$, and set the associated left check to $L = U^{\dag}RU$. By inserting a pair of Hadamard gates between the left and right Pauli check gates on the ancillary check qubit, and inserting pairs of Hadamard gates between successive checks within the right check (see right-hand side of Figure~\ref{Fig:OneSided}), the right check consists of a series of subcircuits of the form

\begin{center}
\begin{tabular}{ccc}
\raisebox{-12pt}{\includegraphics[height=34pt]{./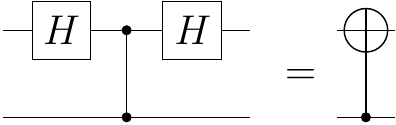}}&
\hspace*{12pt}
&
\begin{tabular}{ccc}
\ket{00} $\rightarrow$ \ket{00}\\
\ket{01} $\rightarrow$ \ket{11}\\
\ket{10} $\rightarrow$ \ket{10}\\
\ket{11} $\rightarrow$ \ket{01}
\end{tabular}
\end{tabular}
\end{center}
Next to the subcircuit, we show the effect it has on the computational basis states. For an input state $\ket{q_1, q_2}$ we can concisely represent the result of this transformation as $\ket{q_1 {\oplus} q_2, q_2}$, where $\oplus$ denotes the Boolean exclusive-{\sc{or}} ({\sc{xor}}) operation. In the absence of readout errors, we can perform these operations on the classical bits representing the measurements and therefore implement the right Pauli check entirely classically. When measurement errors can be modeled as a product of symmetric bit-flip channels, we can equivalently apply any readout error before or after the measurement. When implementing the right check as a quantum circuit any readout error on the payload qubits will not be detected and may pass post-selection, depending on the measured syndrome value, which itself may be affected by readout errors. When implementing the right check classically, the readout error on the data qubits can be viewed as a part of the payload circuit error. Detection depends on whether the combined error commutes with the right check or not, and whether errors are present on the check qubits, either during the application of the circuit or during readout.

\begin{figure}[!h]
\centering
\begin{tabular}{ccc}
\includegraphics[height=100pt]{./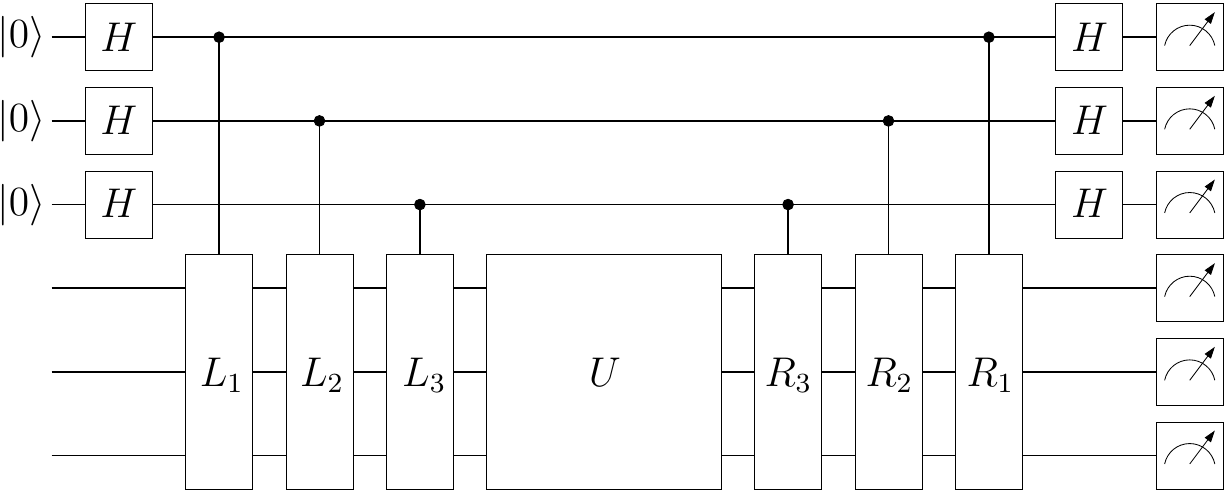}&
& \hspace*{12pt}
\includegraphics[height=100pt]{./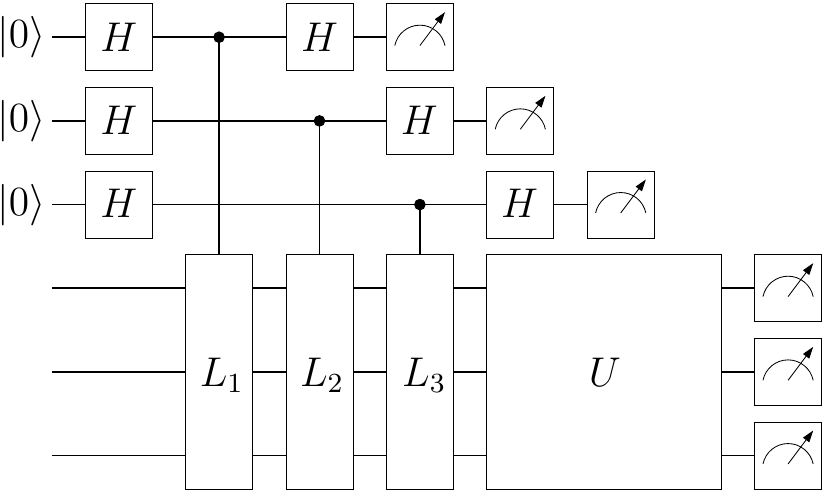}\\[4pt]
({\bf{a}}) & & ({\bf{b}})
\end{tabular}
\caption{Illustration of (a) a Clifford payload circuit $U$ with Pauli-Z right checks $R_i$ and corresponding left checks $L_i = U^{\dag}R_iU$, just prior to measurement; (b) simplified left-only Pauli check circuit in which the right checks are evaluated classically. Note that check qubits can be measured before the payload circuit has completed or even started.}\label{Fig:OneSided}
\end{figure}

\subsection{Flag qubits}

As gates implementing the left and right checks are themselves subject to noise, errors can occur on them. For two-sided check circuits, there can additionally be a considerable amount of idle time between the left and right checks as payload circuits and possibly other checks are applied. Such long idle times leave check qubits susceptible to noise such as thermal relaxation or coherent errors. Although such noise can be partially alleviated using techniques such as dynamical decoupling, some overall noise will remain. We can push any noise term occurring on the check qubits toward the end of the circuit.
In the absence of readout errors, any Pauli X or Y error on the check qubit will result in a syndrome value of one, directing the rejection of the data, since the post-selection criterion is not satisfied. When the error on the check qubit is Z, it will remain undetected by measurement in the computational basis. Although having a Pauli-Z error itself prior to measurement is harmless, it can be caused by a Pauli-Y error prior to the final Hadamard gate in the check circuit. This is important since pushing a Pauli X or Y noise term on the control of a conditional-P gate with $P\in \{X,Y,Z\}$ introduces a Pauli P error on the target qubit. That means that some error on the check qubit, which eventually reveals itself as a Pauli Z error could have introduced errors on the data qubits along the way. As such, we may want to have a mechanism for catching this type of error as well. For this, we introduce a second ancillary qubit that flags errors on the check qubits as follows:

\begin{center}
\includegraphics[height=65pt]{./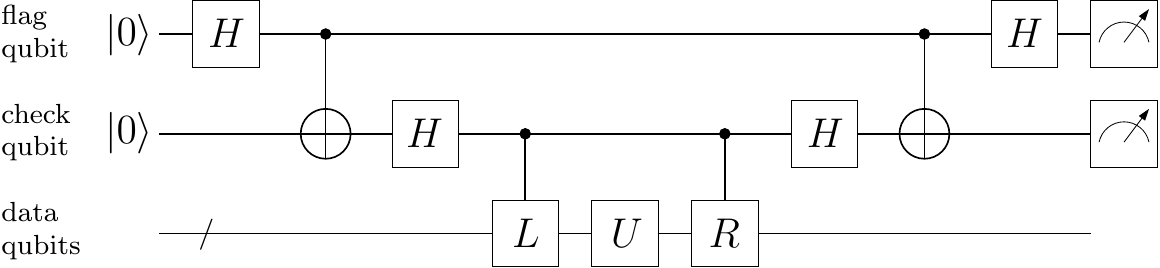}
\end{center}

This can be seen as a coherent Pauli-X check that applies only to the check qubit. If needed, this scheme can be repeated by adding another flag qubit to guard against errors on the first flag qubit, and so on. Given that the idle time of nested checks only increases we can expect the outer levels to be more susceptible to noise. Given that, ideally, idle time is a particular form of an identity operation it is also possible to apply the CPC approach to guard against error during this time by applying one or more checks with idle time as the payload circuit.

\subsection{Readout-error mitigation}\label{Sec:ReadoutIntro}

In Section~\ref{Sec:OneSided} we noted that one-sided checks allow us to incorporate readout errors on the data qubits as a part of the payload error. A special case of this is to have an empty payload on a single target qubit at the end of the circuit and catch any Pauli-X or Y errors (which manifest themselves as bit-flips in the computational basis measurement) using a one-sided Pauli-Z check. In essence, this amounts to setting up a repetition code prior to readout. Adding two levels of checks results in a three-bit measurement for the target qubit. We can then post-select the measurement if all bits are equal and the code word is valid, or loosen this criterion and apply majority voting on the measured bits to resolve the measurement of the target qubit. Repeated Pauli-Z checks are implemented by applying the $\cnotgate$ gates on each ancillary repeat qubit controlled by the target qubit, or previously connected repeat qubits. Similar ideas for readout-error mitigation were proposed in~\cite{Guenther_2022,PhysRevA.105.012419}.

\begin{figure}[t]
\centering
\begin{tabular}{ccc}
\includegraphics[height=145pt]{./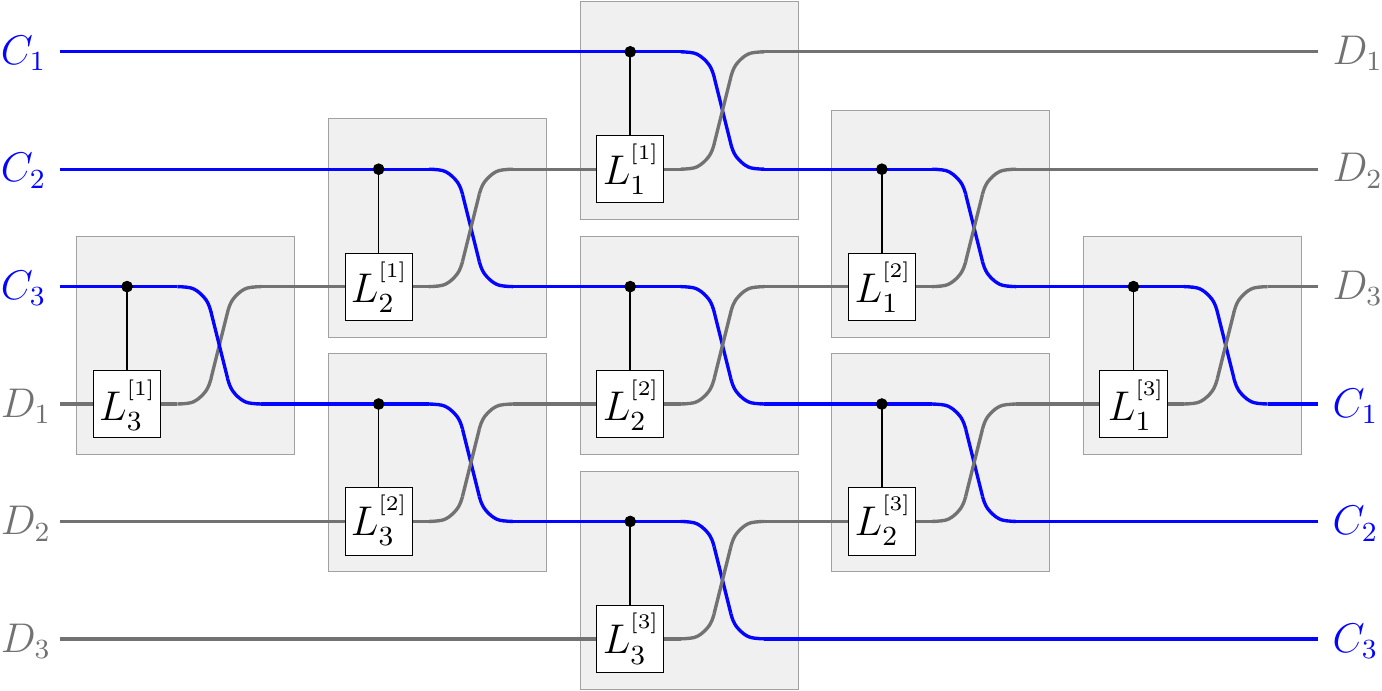}
&
\hspace*{12pt}
&
\includegraphics[height=145pt]{./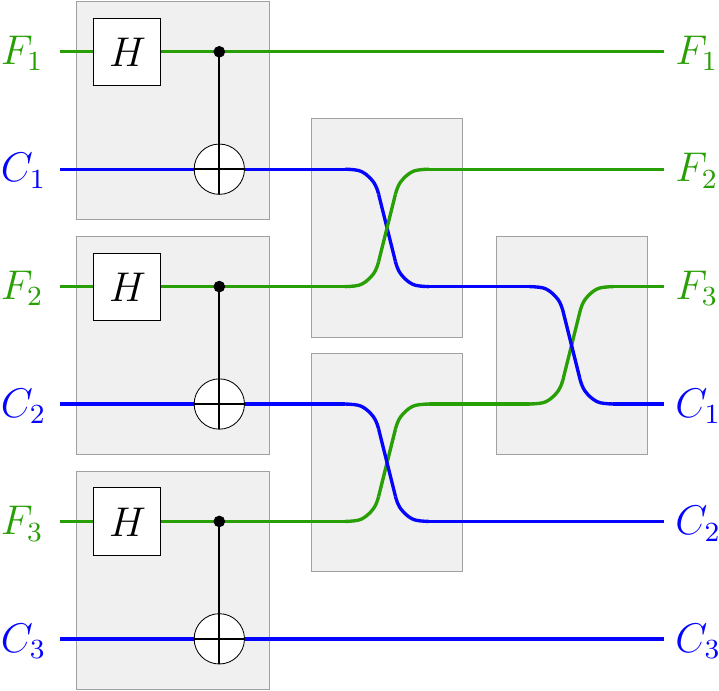}\\[4pt]
({\bf{a}}) & & ({\bf{b}})
\end{tabular}

\vspace*{8pt}
\begin{tabular}{ccccccc}
\includegraphics[height=45pt]{./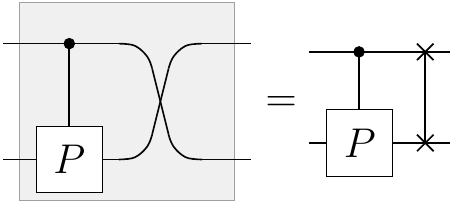}&\ &
\includegraphics[height=45pt]{./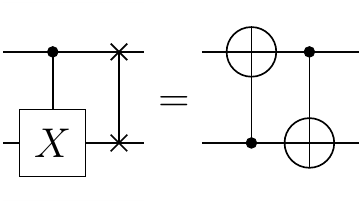}&\ &
\includegraphics[height=45pt]{./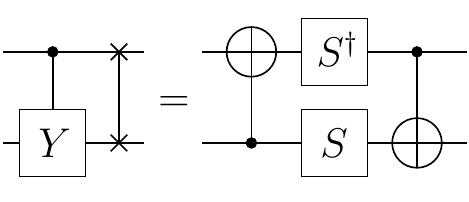}&\ &
\includegraphics[height=45pt]{./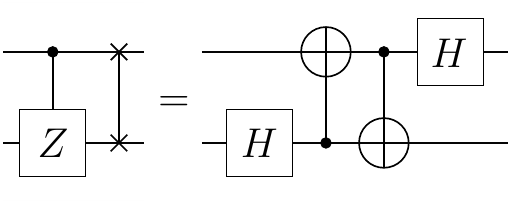}\\[0pt]
({\bf{c}}) && ({\bf{d}}) && ({\bf{e}}) && ({\bf{f}})
\end{tabular}
\caption{Efficient implementation of (a) the left check circuit, and (b) the left-side of the flags over LNN. The building block (c) consisting of a controlled single-qubit Pauli followed by a $\swapgate$ can be simplified for X, Y, and Z Paulis (d--f). A layout for implementing Pauli measurements similar to (a) was given in~\cite{PhysRevA.98.062339}.}\label{Fig:LNN}
\end{figure}

\subsection{Linear qubit connectivity}\label{Sec:LNNCircuit}

The implementation of coherent Pauli checks based on the circuit expansion shown in Figure~\ref{fig:CPC1} requires high-degree qubit connectivity. Most  contemporary quantum processors support only limited qubit connectivity and therefore require such circuits to be implemented using a set of two-qubit gates enabled by the qubit-to-qubit connectivity map. Without special care, such transformations could result in an unnecessarily large number of $\swapgate$ operations. We now present an approach that efficiently maps check circuits onto a chain of qubits with linear nearest neighbor (LNN) connectivity. We illustrate this for a three-qubit left-side check circuit on three data qubits in Figure~\ref{Fig:LNN}(a), omitting for clarify the Hadamard gates that precede the conditional Pauli operations. As a result of the linear connectivity, two-qubit gates are available only on neighboring qubit pairs. The corresponding right-side circuit mirrors this along the y-axis with left checks $L$ substituted by the appropriate right checks $R$. The initial qubit order starts with all check qubits $C_i$, and is followed by the data qubits $D_j$. The idea of the construction is to apply controlled single-qubit Pauli operations on the adjacent check and data qubits, followed by a $\swapgate$ operation to update the qubit order (illustrated by the crossing qubit lines implementing the $\swapgate$ operation). When flag qubits are desired the initial qubit order interleaves check and flag qubits, as shown in Figure~\ref{Fig:LNN}(b). We then apply all checks and repeatedly swap qubits until all check qubits are clustered together and ready for application of the left check circuit. When a quantum processor only supports operations locally equivalent to the $\cnotgate$ gate, it would seem that each of the gray blocks in Figure~\ref{Fig:LNN}(c) requires three $\cnotgate$ operations to implement the $\swapgate$, and one additional $\cnotgate$ operation, possibly combined with single-qubit gates, whenever $L_{i}^{[j]} \in \{X,Y,Z\}$.  The efficiency of the proposed approach stems from the fact that this template can be simplified, as shown in Figures~\ref{Fig:LNN}(d)--(f): Controlled X, Y, and Z operations followed by a $\swapgate$ can all be implemented using two rather than four $\cnotgate$ gates.  Interestingly, the conditional identity operation, which itself does not require any gates, followed by a $\swapgate$ is now the most expensive, requiring three $\cnotgate$ gates.

\section{Analysis}

\subsection{Asymptotic Pauli-check performance}\label{Sec:AsympototicCPC}

The performance of coherent Pauli checks is characterized by the post-selection rate (that is, the fraction of shots that pass the selection criteria), and the logical error rate in the shots that passed the selection. We now show how to compute these quantities,
both for a finite number of checks and in the asymptotic regime where the number of checks goes to infinity.

\subsubsection{Markov model}\label{Sec:MarkovModel}

In order to model the performance of the Pauli-check framework we iteratively add single checks and use a simple Markov model to update a state vector that represents the probability of being in one of the following three states: (1) detected error, (2) undetected error, and (3) no error.
The detected-error state means that, up to that point, at least one of the checks was activated, which means it will fail post-selection and we will eventually discard data from this circuit run. As such, once this state is reached, we remain in this state. The undetected-error state means that some error occurred in the payload or check gates, but none of the checks so far was activated. The no-error state indicates that no errors occurred in either the payload circuit or the checks considered so far. Denoting by $\epsilon_{pl}$ the probability that the payload circuit is affected by an error, we can write the initial state, including only errors in the payload circuit, as
\begin{equation}\label{Eq:InitialMarkovState}
\pi^{(0)} =
\left(
\begin{array}{c}
\pi_1^{(0)}\\ \pi_2^{(0)}\\ \pi_3^{(0)}
\end{array}
\right)
=
\left(\begin{array}{c}0\\\epsilon_{pl}\\ 1-\epsilon_{pl}\end{array}\right).
\end{equation}
The state is updated in a Markovian manner and the state resulting from the application of the $c$-th check is given by $\pi^{(c)} = T^{(c)}\pi^{(c-1)}$. The only way we can transition into an error-free state $\pi_3^{(c)}$ is if we started in an error-free state and had no error in the check gates, which occurs with some probability $t_{ok}^{(c)}$. We can arrive at or stay in the undetected state $\pi_2^{(c)}$ in several ways. Starting from an error-free state, we could have noise in the check that goes undetected, which happens with some probability $t_u^{(c)}$. If we are already in the undetected-error state, we can stay there if the current error commutes with the check, which is assumed to occur with probability one half, and we either have no error or an undetectable error. Alternatively, also with probability one half, the error anti-commutes with the check, but detection is then negated by a detectable error, which occurs with probability $t_d^{(c)}$. Since $t_{ok}^{(c)}+t_u^{(c)}+t_d^{(c)} = 1$ it follows that the overall transition probability is 1/2. Applying similar logic to the detected-error state and assuming that the various probabilities are independent of the check index $c$, we obtain the following transition matrix:
\begin{equation}\label{Eq:MarkovModelTransition}
T = \left(
\begin{array}{ccc}
1 & \half & t_d \\[5pt]
0 & \half & t_u\\[5pt]
0 & 0 & t_{ok}
\end{array}
\right).
\end{equation}
The post-selection and logical error rates for a state $\pi = \pi^{(c)}$ are given by:
\[
P(\mbox{postselect}) = \pi_2{+}\pi_3 \text{ \;and\; }
P(\mbox{logical error}) = \frac{\pi_2}{\pi_2+\pi_3}.
\]

\subsubsection{Errors in the check gates}

We now study the transition probabilities $t_d$ and $t_{ok}$, and consequently $t_{u} = 1 - (t_d+t_{ok})$. For this we first consider two-sided Pauli checks on a fully-connected topology.
Each check is implemented using a set of gates before and after the payload circuit, which we respectively refer to as the left-check and right-check gates. For simplicity we assume that single-qubit gates are noiseless and that the controlled-Pauli gates CP with $P \in \{X,Y,Z\}$ are affected by identical two-qubit depolarizing noise channels
\begin{equation}\label{Eq:DepolarizingNoise}
\mathcal{D}_{\epsilon}(\rho) = (1-\epsilon)\rho
+ \frac{\epsilon}{15}\sum_{i=1}^{15}P_i\rho P_i^{\dag}.
\end{equation}
Since $\mathcal{D}_{\epsilon}$ is invariant under conjugation by any two-qubit Clifford gate, we can freely choose whether the noise appears before or after the gate. In fact, we can push the noise channel through any adjacent one- and two-qubit Clifford gates and always assume the error occurs directly prior to
each left check and directly following each right check. For each CP gate in the right check the error occurs with probability $\epsilon$. The resulting Pauli term on the control qubit following the check will then be $I$ with probability $3/15$ and one of $X$, $Y$, or $Z$ with probability $4/15$ each.
We assume that any error on a check qubit affects the data qubits, and are therefore interested only in the overall Pauli term on the check qubit. Given that all errors due to the check gates can be assumed to occur before and after the left and right checks, we can logically cancel the check and combine all error terms on the check qubit. Since there are no longer any gates connecting the check and data qubits, we can disregard the payload circuit. An error is detectable if the overall Pauli term on the check qubit is either $X$ or $Y$, and is undetectable otherwise. Since an even number of detectable errors results in an undetectable $I$ or $Z$ term we can only detect an error when there is  odd number of detectable errors.
 For a check that is implemented using $k$ CP gates, each with a detectable error probability of $p = 8\epsilon/15$, the overall rate of detectable errors is given by
\begin{equation}\label{Eq:td}
t_d = \sum_{\mathrm{odd}\ \ell} \binom{k}{\ell}p^{\ell}(1-p)^{k-\ell}
= \half\left(1 - (1-2p)^k\right).
\end{equation}

For the implementation of the checks on a linear topology we can first push the error for each of the two or three gates in a single check to the beginning or end of the block. As illustrated in Figure~\ref{Fig:LNN}(a), we see that although the check qubit moves between physical qubits, it can be regarded as a fixed qubit. Therefore, there is no logical difference to the all-to-all connectivity and the above analysis continues to hold with $p = 8\epsilon/15$, albeit with a different number $k$ of CP gates. Likewise, for one-sided checks, we can push all errors to the end of the left check and consider detectable errors as a bit flip of the readout. Regardless of the setting, the probability of having no errors is given by $t_{ok} = (1-\epsilon)^k$. The probability of undetectable errors is given by $t_u = 1 - (t_d + t_{ok})$.
As a simplification, we disregarded the possibility that noise in the check circuit could leave the data qubits untouched. For instance, the only noise affecting a check could be an XI term following the final two-qubit gate in the right-side check, or noise terms on the data qubits could cancel. Some of these errors may be classified as detected errors, which decreases the model's post-selection probability. In addition, it may increase the probability of undetected errors, which increases the logical error rate. As such, this simplification may lead to a slightly pessimistic result. Under the assumption that errors commute or anticommute with the checks with probability 1/2, the model would then give a lower bound on the post-selection probability and an upper bound on the logical error rate.

We now consider the number $k$ of two-qubit gates needed to implement the checks. The expected number of gates for uniformly sampled payload operators $U$ is given in Table~\ref{Table:KValues}. Since we randomly sample the checks, there will clearly be some variation around the expected value. For two-sided checks we have the following result:

\begin{lemma}
Let $L$ be a left check sampled uniformly at random from the Pauli group, and define $R = ULU^{\dag}$, with arbitrary Clifford $U$. Then the number of gates $k$ needed for the implementation of a two-sided Pauli check associated with L and R satisfies
\[
\beta n - \delta\sqrt{2n} \leq
k
\leq \beta n + \delta\sqrt{2n}
\]
with probability at least $1 {-} 2e^{-\delta}$, where $\beta = \sfrac{3}{2}$ for all-to-all and $\beta = \sfrac{9}{2}$ for linear nearest neighbor connectivity.
\end{lemma}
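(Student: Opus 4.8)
The plan is to reduce the gate count $k$ to a simple function of the Pauli \emph{weights} of $L$ and $R$ -- by weight I mean the number of tensor factors differing from the identity, which I write $|L|,|R|$ -- and then invoke concentration of these weights about their means. The center of the interval, $\beta n$, will come from $\mathbb{E}k$, and the $\pm\delta\sqrt{2n}$ window from a tail bound on $|L|+|R|$.

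First I would read off the per-qubit gate cost from the constructions already given. For all-to-all connectivity (Figure~\ref{fig:CPC1}) the left check contributes one controlled-Pauli, hence one two-qubit gate since single-qubit gates are free, for each non-identity factor of $L$, and the right check one for each non-identity factor of $R$; thus $k = |L| + |R|$. For LNN connectivity the gate-identity simplifications of Figure~\ref{Fig:LNN}(d)--(f) show that each data-qubit position costs two \cnotgate{} gates when its factor lies in $\{X,Y,Z\}$ and three when it is $I$ (a bare \swapgate), so the left check costs $2|L| + 3(n-|L|) = 3n - |L|$ and the two sides together give $k = 6n - |L| - |R|$. In both cases $k$ is affine in $|L|+|R|$ with $|k-\beta n| = \big||L|+|R|-\tfrac{3}{2}n\big|$.

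Next I would pin down the laws of $|L|$ and $|R|$. Since $L$ is uniform on $\calP_n$, its $n$ single-qubit factors are i.i.d.\ uniform over $\{I,X,Y,Z\}$, each non-identity with probability $3/4$, so $|L| \sim \mathrm{Bin}(n,3/4)$. The crucial observation is that $R = ULU^\dagger$ is \emph{also} uniform on $\calP_n$, because conjugation by the Clifford $U$ is a bijection of $\calP_n$ onto itself; hence $|R| \sim \mathrm{Bin}(n,3/4)$ too, and this holds for \emph{arbitrary} $U$. Therefore $\mathbb{E}|L| = \mathbb{E}|R| = 3n/4$ and $\mathbb{E}k = \beta n$ with $\beta = 3/2$ (all-to-all) and $\beta = 9/2$ (LNN), fixing the center.

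The concentration step is where the main obstacle lies. The weights $|L|$ and $|R|$ are \emph{not} independent -- they are coupled through $U$ -- so one cannot treat $k$ as a sum of $2n$ independent indicators, and a bounded-differences argument on the $2n$ symplectic bits of $L$ also fails, since for a generic $U$ a single-bit change of $L$ can spread across many factors of $R$, making the per-coordinate Lipschitz constant as large as $n$. The way around this is to never touch the joint law: each marginal is a sum of $n$ independent indicators in $[0,1]$, so Hoeffding gives $\Pr\big[\,\big||L|-3n/4\big| \ge \delta\sqrt{n/2}\,\big] \le 2e^{-\delta^2}$, and likewise for $|R|$. To combine the two deviations cleanly I would bound the moment generating function of $(|L|-3n/4)+(|R|-3n/4)$ by Cauchy--Schwarz, which decouples the two independent blocks and yields $\Pr\big[\,\big||L|+|R|-\tfrac{3}{2}n\big| \ge \delta\sqrt{2n}\,\big] \le 2e^{-\delta^2}$ without the factor-four loss of a plain union bound; relaxing the Gaussian tail $e^{-\delta^2}$ to the weaker $e^{-\delta}$ for $\delta \ge 1$ then produces the stated form $1-2e^{-\delta}$. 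I expect the genuine content to be the uniformity of $R$ for arbitrary $U$, which removes all dependence on the payload, together with handling the $L$--$R$ coupling at the level of marginals; reading off $\beta$ and matching the Hoeffding constants to $\delta\sqrt{2n}$ is then routine.
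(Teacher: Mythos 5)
Your proposal is correct and takes essentially the same route as the paper: the same per-qubit gate counts ($0$ or $1$ \textsc{cnot}s per factor for all-to-all, $2$ or $3$ for LNN), the same key observation that $R = ULU^{\dag}$ is uniformly distributed on the Pauli group for any fixed Clifford $U$, and Hoeffding applied separately to the marginals of the two check weights precisely to sidestep the $L$--$R$ correlation, exactly as the paper does. The only difference is in how the two tails are merged --- the paper uses a union bound giving each side a $\delta\sqrt{n/2}$ window, while you bound the moment generating function of the sum via Cauchy--Schwarz, which is marginally tighter --- and both arguments share the same harmless looseness in relaxing the Gaussian-type tail $e^{-\delta^2}$ to the stated $e^{-\delta}$ (valid for $\delta \geq 1$; the bound is vacuous for small $\delta$ anyway).
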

\begin{proof}
Consider a left Pauli check $L=\prod_{i=1}^n L^{[i]}$.
Let $k_i$ be the number of {\sc{cnot}} gates used to implement the controlled-$L^{[i]}$ gate.
In the all-to-all settings, $k_i=0$ with the probability $1/4$ (when $L^{[i]}=I$) and
$k_i=1$ with the probability $3/4$ (when $L^{[i]}=X,Y,Z$). Thus the expected value  of $k_i$ equals $3/4$. In the LLN settings, $k_i=3$ with the probability $1/4$ (when $L^{[i]}=I$) and $k_i=2$ with the probability $3/4$ (when $L^{[i]}=X,Y,Z$). Thus the expected value of $k_i $ equals $9/4$. 
The number of {\sc{cnot}} gates used to implement the full left check (i.e. controlled-$L$ gate)
is $k=\sum_{i=1}^n k_i$. It has the expected value $\beta n/2$, where  
$\beta = 3/2$ and $\beta=9/2$ for the all-to-all and LLN settings respectively.
By Hoeffding's inequality, the random variable $k$ deviates from the expected value of $\beta n/2$ by more than 
$\delta\sqrt{n/2}$ with probability at most $e^{-\delta}$. 
Exactly the same arguments apply to the right check
$R = ULU^{\dag}$, since $R$  is distributed uniformly on the Pauli group for any fixed Clifford $U$.
Thus, on average, one needs $\beta n$ {\sc{cnot}} gates to implement a two-sided Pauli check. 
Although $R$ and $L$, and therefore the number of gates in the left and right checks, are correlated, it follows from the union bound that the probability of at least one of the gate counts exceeding the given range is bounded by $2e^{-\delta}$. It follows that their sum deviates from $\beta n$ by no more that $2\delta\sqrt{n/2} = \delta\sqrt{2n}$ with probability at least $1{-}2e^{-\delta}$, as stated.
\end{proof}

For left-only checks, we randomly sample the right checks uniformly at random from the Pauli-Z or identity. The weight of the left checks, and consequently the number of gates needed to implement the check, depends on the payload circuit. When the payload circuit implements a random permutation on $n$ data qubits the weight of the left checks matches that of the right checks and has an expected value of $n/2$. In order to characterize bounds on the number of gates, we therefore also need to assume that $U$ is sampled uniformly at random from the group of $n$-qubit Clifford operators. The result then holds with $\beta = 3/4$ for all-to-all and $\beta = 9/4$ for linear nearest neighbor connectivity.

\begin{table}
\centering
\begin{tabular}{lcc}
\hline
{\bf{Description}} && $k$\\
\hline
Fully connected, left \& right checks && $3n/2$\\[4pt]
Fully connected, left checks only && $3n/4$\\[4pt]
Linear, left \& right checks && $9n/2$\\[4pt]
Linear, left checks only && $9n/4$\\
\hline
\end{tabular}
\caption{The number of {\sc{cnot}} gates for a single check on $n$ data qubits is concentrated around the given $k$ values.}\label{Table:KValues}
\end{table}

\subsubsection{Asymptotic logical error rate}

For the asymptotic logical error rate, we can disregard the part of the Markov model that is associated with the probability mass of the detected errors. In particular, we consider only the lower-right block of the transition matrix:
\[
T' = \left(
\begin{array}{cc}
\frac{1}{2} & t_u\\[5pt]
0 & t_{ok}
\end{array}
\right)
\]
The case $t_u = 0$ occurs only when $\epsilon = 0$, in which case we have $t_d = 0$ and $t_{ok} = 1$. For each added check the number of undetected errors in post-selection can only decrease, while the fraction of circuit instances with no error remains the same. Consequently, the logical error rate will decrease to zero. In the general case where $t_u > 0$, scaling the relevant part of the state vector is needed, we can write
\begin{equation}\label{Eq:TPrime}
\left(
\begin{array}{c}
\half + \alpha t_u\\
\alpha t_{ok}
\end{array}
\right)
=
T' \left(\begin{array}{c}1\\\alpha\end{array}\right).
\end{equation}
A fixed point occurs whenever $(1,\alpha)^T$ is a (scaled) eigenvector of $T'$, which implies
\begin{equation}\label{Eq:Alpha}
\frac{\alpha}{1} = \frac{\alpha t_{ok}}{\half + \alpha t_u}.
\end{equation}
This is satisfied for $\alpha = 0$ or $\alpha = (t_{ok} - \half) / t_u$. Since the state vector cannot have negative entries, we must have $\alpha = 0$ for $t_{ok} \leq \half$, implying an asymptotic logical error rate of 1. For $t_{ok} > \half$ it follows from the upper-diagonal form of $T'$ that $t_{ok}$ is the dominant eigenvalue of $T'$. When $\epsilon_{pl} < 1$, the initial $\alpha$ will be strictly positive,
and it follows from Eq.~\ref{Eq:Alpha} that $\alpha = (t_{ok}-\half) / t_u$. This means we converge to a logical error rate of
\[
\frac{\pi_2}{\pi_2 + \pi_3}=
\frac{1}{1+\alpha} =
\frac{t_u}{t_u + t_{ok} - \half} =
\frac{t_u}{\half - t_d}.
\]
Putting everything together, we can concisely express the asymptotic logical error rate as
\begin{equation}\label{Eq:AsymptoticError}
E_{\mathrm{asymp.}} =
\begin{cases}
t_u / \big(\half - t_d\big) & \mbox{if $t_{ok} > \half$ and $\epsilon_{pl}< 1$}\\
1 & \mbox{otherwise}.
\end{cases}
\end{equation}
We can expand the first case as follows:
\[
t_u/(\half-t_d)
= \frac{1-(t_d+t_{ok})}{\half - t_d}
= 1 + \frac{1-2t_{ok}}{1-2t_d}
= 1 + \frac{1 - 2(1-\epsilon)^k}{(1 - \frac{16}{15}\epsilon)^k}
= \frac{14}{15}k\epsilon + \mathcal{O}(k^2\epsilon^2),
\]
where the last expression follows from Taylor series expansion around $\epsilon=0$, and $k$ is as given by Table~\ref{Table:KValues}. For left-right checks on a fully connected topology this gives an approximate logical error rate of $7n\epsilon /5$.
As for post-selection, it follows from Eq.~\ref{Eq:TPrime} that the ratio of successive post-selection rates is given by
\[
s(\alpha) := \frac{\half + \alpha t_u + \alpha t_{ok}}{1+\alpha}.
\]
For $t_{ok} > \half$ and $\alpha \geq 0$ we have
\[
\frac{ds}{d\alpha}(\alpha)
= \frac{t_u + t_{ok}}{1+\alpha} - \frac{\half + \alpha(t_u+t_{ok})}{(1+\alpha)^2}
=
\frac{t_u+{t_{ok}} - \half}{(1+\alpha)^2} \geq 0,
\]
which means that $s(\alpha)$ is monotonically non-decreasing for $\alpha \,{\geq}\, 0$. As a result, it follows that
\[
\half = s(0) \leq s(\alpha) \leq \lim_{\alpha'\to\infty}s(\alpha') = t_u+t_{ok}.
\]
In other words, at worst the post-selection rate is halved at every iteration; at best is it multiplied by $t_u + t_{ok}$.
The logical error rate is invariant under the normalization of $(\pi_2',\pi_3') = (1,\alpha)$ and can therefore be expressed as $1/(1+\alpha)$. With decreasing logical error rate or, equivalently, increasing $\alpha$, the decrease in post-selection rate slows down.
Finally, it follows from Eq.~\ref{Eq:TPrime} and the asymptotic value of $\alpha = (t_{ok}-\half)/t_u$, that the post-selection rate asymptotically decreases by a factor
\[
s\big((t_{ok}-\half)/t_u\big)
=
\frac{\half + (t_{ok} - \half) + t_{ok}(t_{ok}-\half)/t_u}{(t_u + t_{ok}-\half)/t_u}
=
\frac{t_{ok}(t_u + t_{ok} - \half)}{t_u + t_{ok}-\half}
= t_{ok}.
\]

\subsubsection{Numerical simulations}

\begin{figure}
\centering
\begin{tabular}{ccc}
\includegraphics[width=0.32\textwidth]{./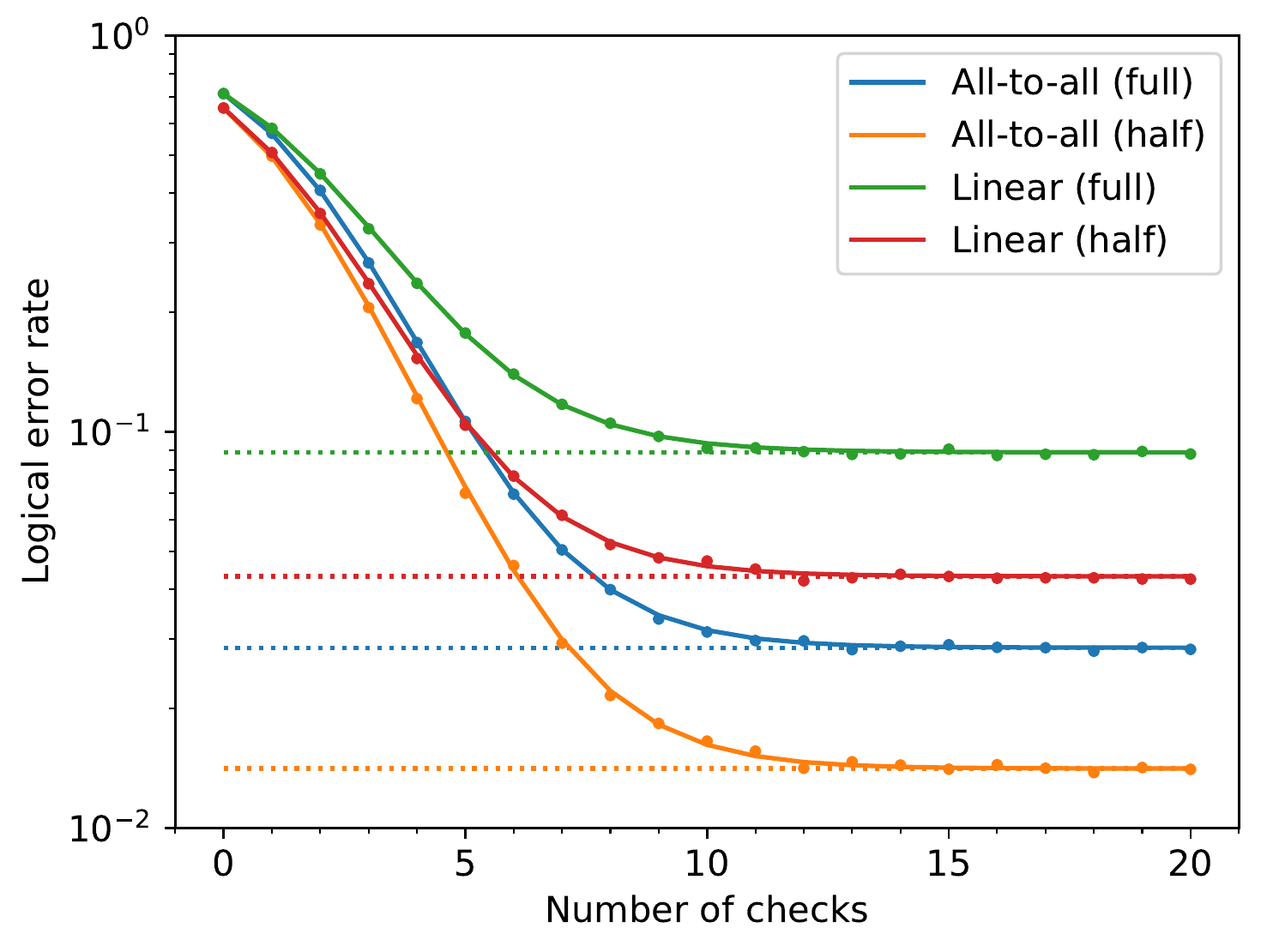}&
\includegraphics[width=0.32\textwidth]{./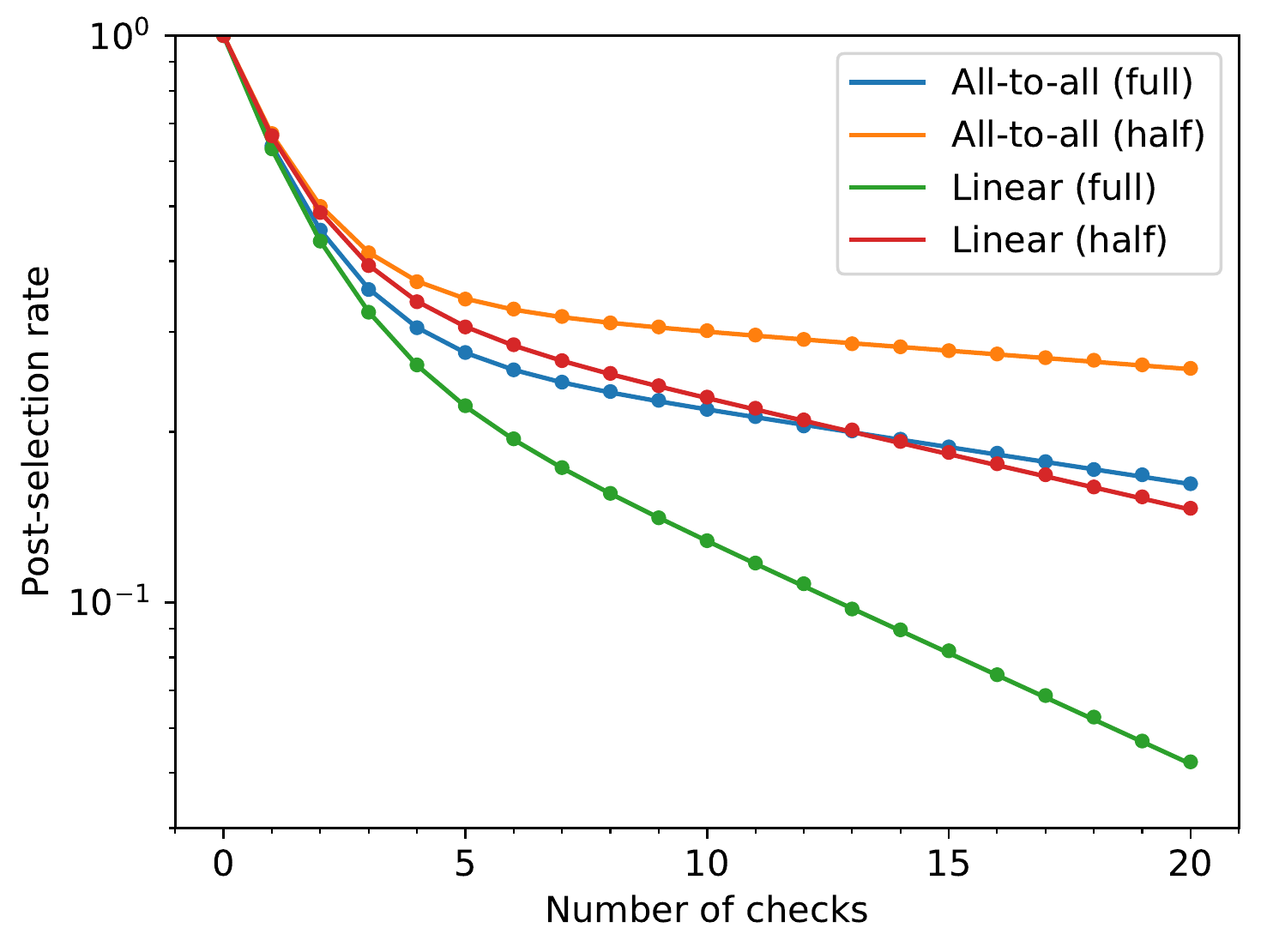}&
\includegraphics[width=0.32\textwidth]{./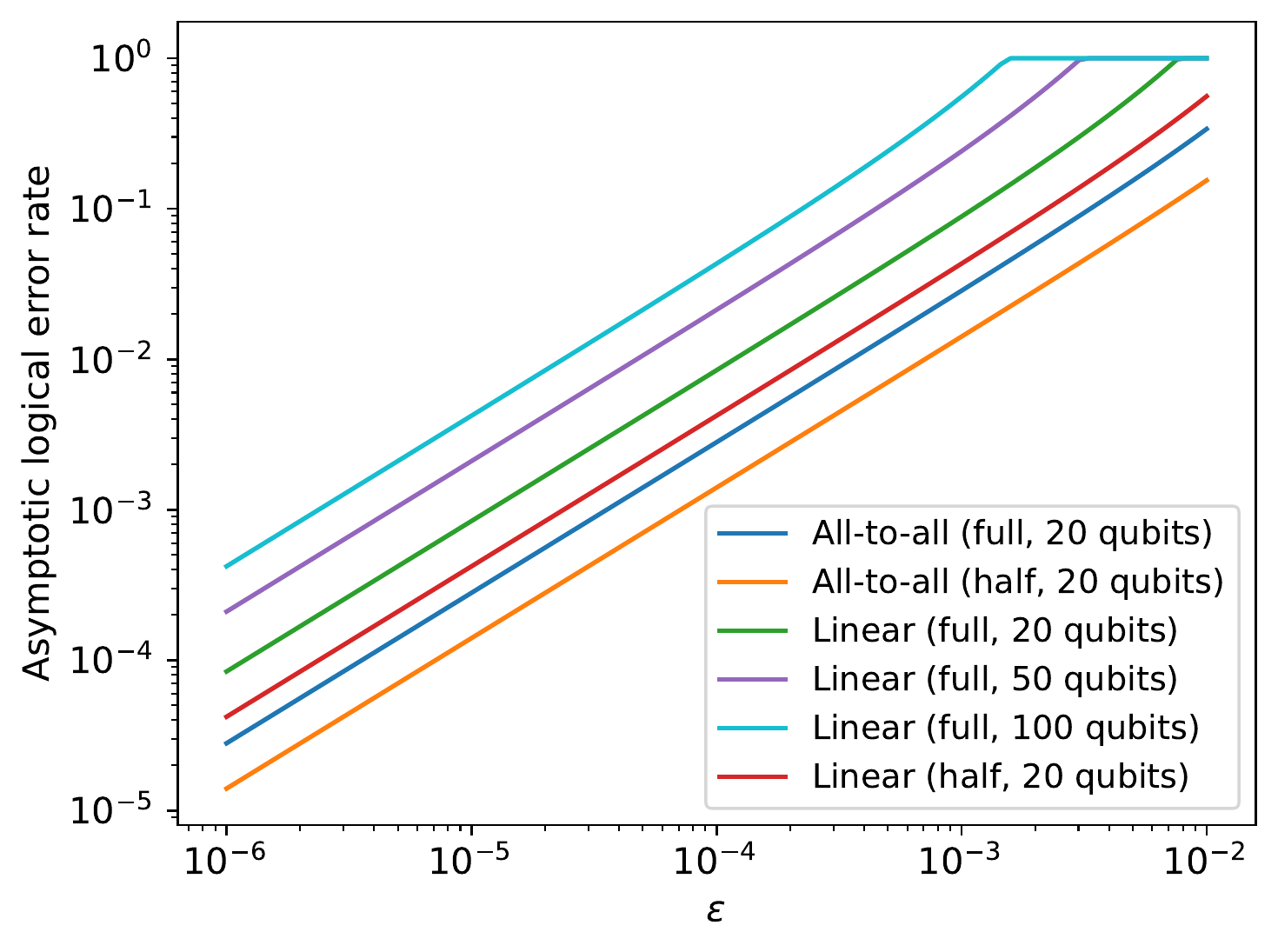}\\
({\bf{a}}) & ({\bf{b}}) & ({\bf{c}})
\end{tabular}
\caption{Plots of (a) the logical error rate and (b) the post-selection rate as a function of the number of Pauli checks based on simulated data (dots) and evaluation of the theoretical model (solid lines) for a randomly sampled 20-qubit Clifford circuit with depolarizing noise ($\epsilon=0.001$) on the two-qubit gates. The asymptotic logical error rates for the settings in plot (a) are indicated by the horizontal dotted lines. The simulated results are obtained by combining data from $10^5$ noisy circuit instances each for 20 randomly sampled check instances. Plot (c) shows the asymptotic logical error rates as a function of $\epsilon$ for different settings.}\label{Fig:Bounds001}
\end{figure}

For a better understanding of the performance of CPC and the derived asymptotic floor values of the logical error rate we numerically simulate the method. As the first step, we sample a 20-qubit Clifford operator uniformly at random \cite{bravyi2021hadamard} and map it to a quantum circuit that is optimized for the LNN connectivity (see~\cite{bravyi2021hadamard, maslov2022cnot} for more details). We then generate checked circuits with randomly sampled one- or two-sided checks over either all-to-all or LNN architecture, giving four distinct settings. The resulting circuits are all Clifford, which allows to simulate them in a compact tableau representation based on the stabilizer formalism~\cite{aaronson2004improved}. Instead of tracking the state as it evolves by successive application of the gates, we represent by each row of the tableau the accumulated error of a single circuit instance. Application of a gate then amounts to pushing the error through the gate. For a noisy gate we sample a random Pauli term according to the associated Pauli channel and multiply it by the existing noise term.
When all gates in the circuit have been processed we end up with sampled error terms as they would occur just prior to measurement. In our simplified setting, we assume that state preparation, readout, and all single-qubit operations are noiseless. We further assume that all two-qubit gates are affected by depolarizing noise channels (see Eq.~\ref{Eq:DepolarizingNoise}) with identical $\epsilon$ values. Without loss of generality we can assume that the initial state of the data qubits is given by $U^{\dag}\ket{0}$, which means that, in the absence of gate errors, we should measure the all-zero state. Based on this assumption, we can process the errors captured by the tableau and, possibly after classical application of the right check, determine whether the sample is accepted during post-selection, and whether an error occurred on the data qubits. For the one-sided check, we disregard any Pauli-$Z$ components in the errors since these do not affect the measurements. For the two-sided checks, we assume that the final state is not yet measured but instead participates in further computations. Any non-identity Pauli terms on the data qubits are therefore considered to be an actual error.

For the simulations, we allow up to 20 checks, and for each setting, we determine the number of correct and post-selected shots as the average over 20 random check instances, each with $10^5$ shots. The resulting logical error and post-selection rates, based on the depolarizing strength $\epsilon=0.001$, are shown as dots in Figure~\ref{Fig:Bounds001}. We superimpose as solid lines the values predicted by the Markov model using the $k$ values from Table~\ref{Table:KValues} and a payload error rate $\epsilon_{pl}$ as estimated by the numerical simulation with zero checks. Finally, we indicate the asymptotic logical error rate as given by Eq.~\ref{Eq:AsymptoticError} by a horizontal dotted line. Despite the simplifying assumption, we observe that the theoretically predicted values are remarkably close to the simulated values.

\subsection{Readout-error mitigation using checks}

We now consider an instance of the readout-error mitigation scheme described in Section~\ref{Sec:ReadoutIntro}. As mentioned in Section~\ref{Sec:OneSided}, measurement errors that occur during one-sided Pauli-Z checks can be considered to be errors associated with the payload circuit. By defining an empty payload at the end of the circuit, just prior to measurement, we can therefore use a Pauli-Z check to detect measurement errors. Repeating the same on the measurement of the check itself, we obtain a quantum circuit with nested checks, as illustrated in Figure~\ref{Fig:RepeatedReadoutSims}(a). The purpose of the circuit is to obtain an accurate readout of the object qubit at the bottom using the ancillary qubits above it. The dashed boxes model the locations where independent bit-flip errors are expected. By applying the right-side checks classically it can be verified that an outcome is accepted only when all measured bits match. As can also be seen directly from the circuit itself, we effectively encode the object qubit using a repetition code prior to readout, and accept only valid code words during decoding.

\begin{figure}[!b]
\centering
\begin{tabular}{ccc}
\raisebox{45pt}{\includegraphics[width=0.25\textwidth]{./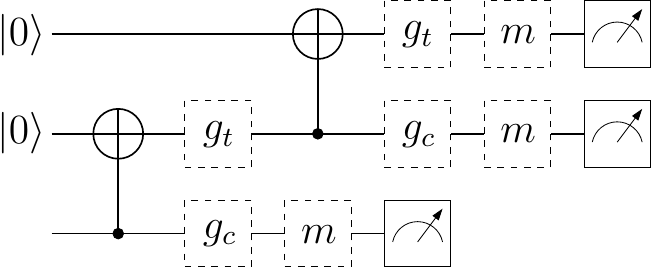}}&
\includegraphics[width=0.345\textwidth]{./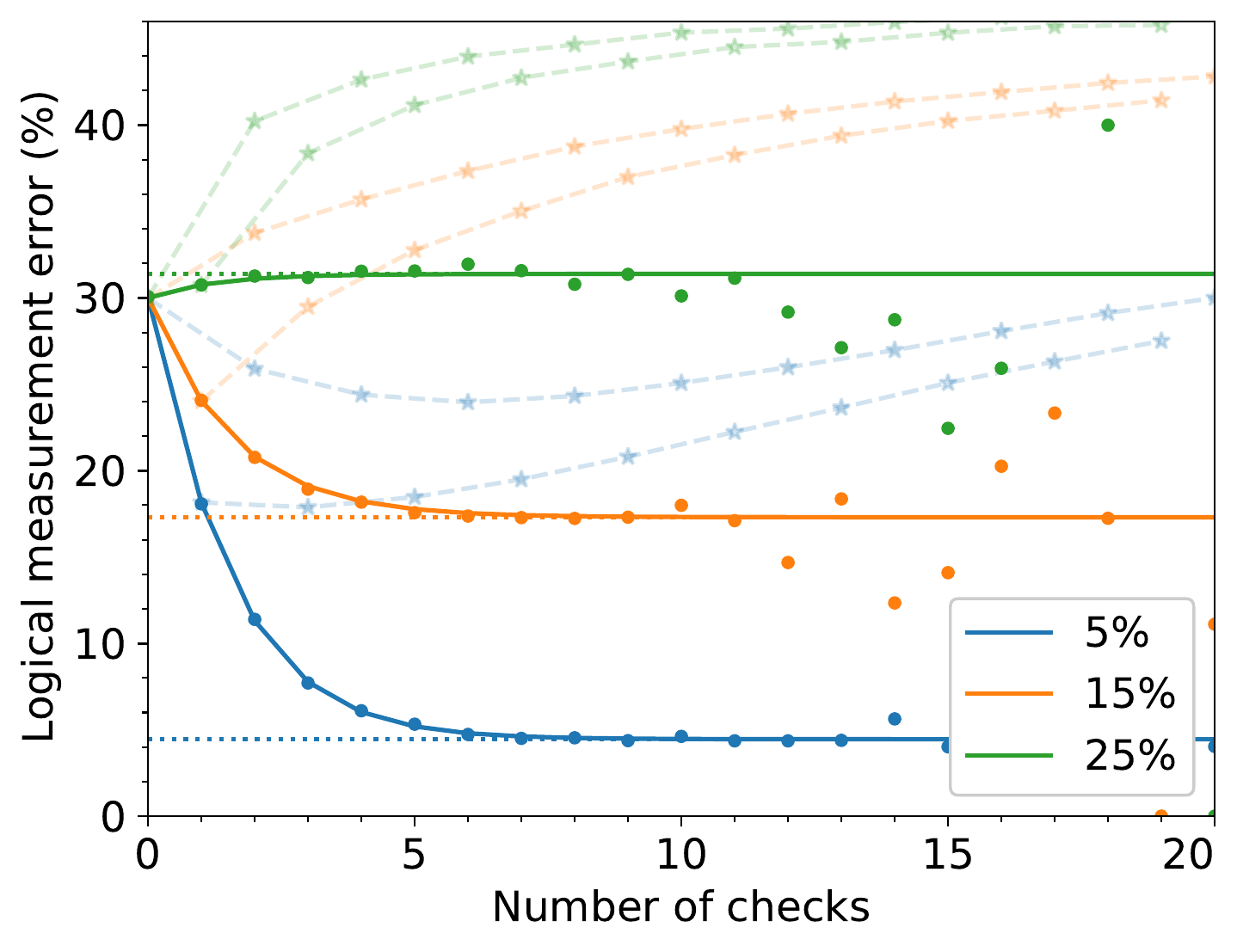}&
\includegraphics[width=0.345\textwidth]{./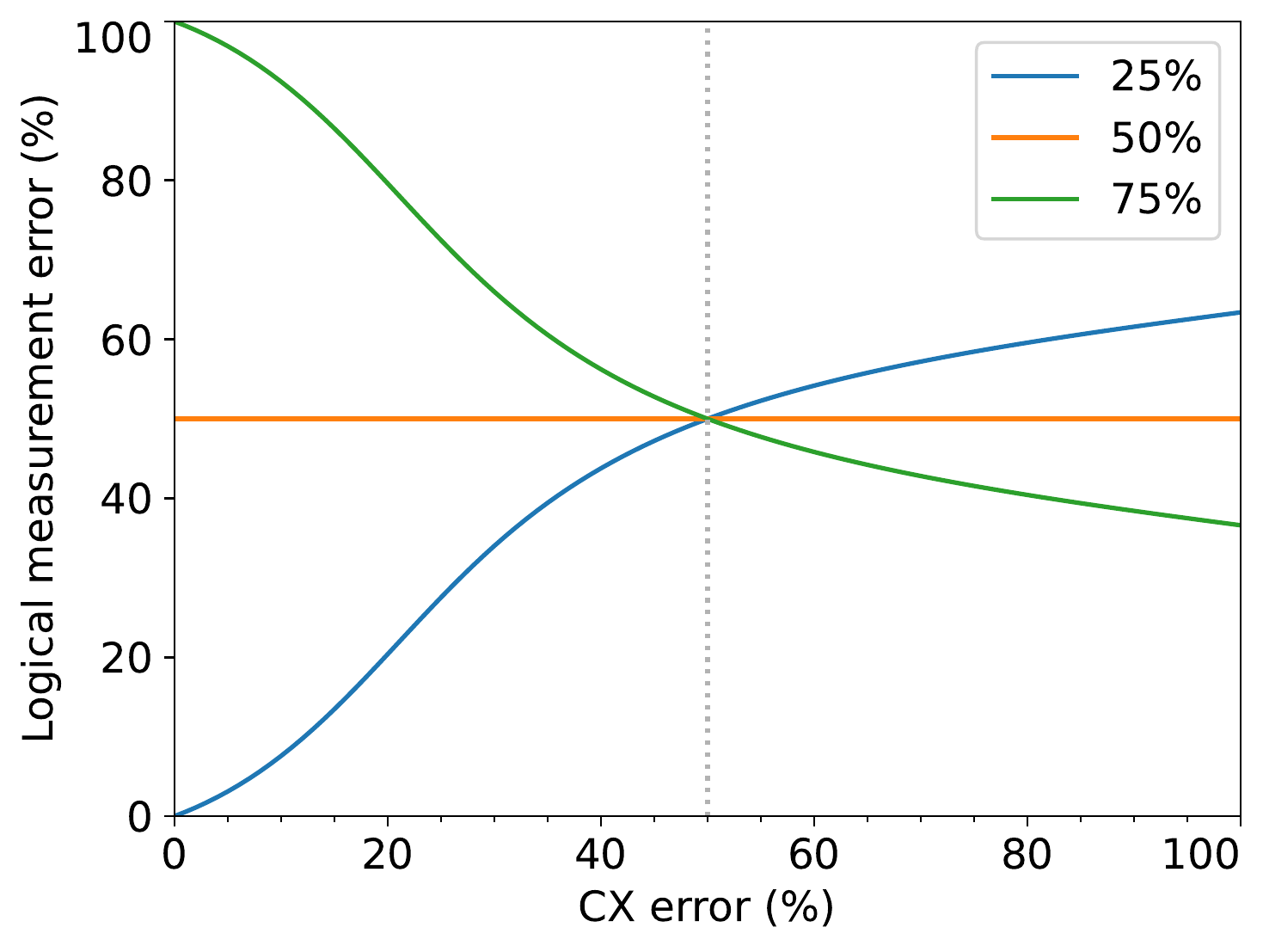}\\
({\bf{a}}) & ({\bf{b}}) & ({\bf{c}})
\end{tabular}
\caption{(a) Instance of a quantum circuit for reading out the bottom qubit with two checks. The dotted boxes indicate the location and probabilities $c$ and $m$ of bit-flip errors in the model due to $\cnotgate$ gates and measurements, respectively. (b) The logical measurement error obtained using simulation of the model based on $10^6$ samples (dots) as well as using the theoretical model (solid lines) when we fix the physical readout error rate $m$ to 30\% and use {\sc{cnot}} gates with error rates $g_c=g_t$ chosen from the set 5\%, 15\%, 25\%. The asymptotic logical measurement error is indicated by the horizontal dotted line. For comparison we also plot the results obtained using majority voting of the measured bits for even and odd numbers of checks (asterisks and light dashed lines). (c) The asymptotic logical measurement error as a function of {\sc{cnot}} error rate $c$ for three different readout error rates $m$.}\label{Fig:RepeatedReadoutSims}
\end{figure}

\subsubsection{Error modeling}

For modeling of the logical readout error, we first assume that each measurement is affected by independent symmetric bit-flip channels, each with a bit-flip probability $m$. As seen in Figure~\ref{Fig:RepeatedReadoutSims}(a), we model the measurement errors prior to measurement. This is merely for convenience and we could equivalently have modeled them as classical noise following an ideal measurement. Each {\sc{cnot}} gate is followed by independent symmetric bit-flip channels on the control and target qubits, with transition probabilities $g_c$ and $g_t$, respectively.

If there are no checks, we accept all measurements and therefore have a logical readout error rate equal to $m$. For the remainder of this discussion we assume there is at least one check. This means that the overall measurement error on the objective qubit combines the {\sc{cnot}} control error and the measurement error, resulting in a combined bit-flip channel with transition probability $m' = g_c(1-m) + (1-g_c)m$. Given the symmetry of the noise channels, we can assume, without loss of generality, that the object qubit is in the $\ket{0}$ state. Measurement of the object qubit therefore results in $0$ with probability $1{-}m'$ and $1$ with probability $m'$.

We analyze the performance of the Pauli-checked readout using a Markov model with each step representing an additional check. For the measurement outcome to be accepted we require successive checks to match the outcome of the object qubit. We also keep track of the state of the (ancillary) qubit prior to the {\sc{cnot}} control and measurement errors, as this is the state that will propagate to the next ancillary qubit. We represent the current state and the required measurement value as a tuple. Finally, we need an error state that indicates that a mismatch in the measured values was encountered. Using these components we represent the initial state as
\[
\mbox{init}
= \left(\begin{array}{c} \mbox{\ket{0}, measure 0}\\ \mbox{\ket{1}, measure 0}\\
\mbox{\ket{0}, measure 1}\\
\mbox{\ket{1}, measure 1}\\
\mbox{error}\end{array}
\right)
= \left(\begin{array}{c}1-m'\\0\\m'\\0\\0\end{array}\right).
\]
Application of the {\sc{cnot}} gate on ancillary qubit initialized to $\ket{0}$ results in a state that matches the previous state. However, we then need to apply noise on the target qubit (that is, the current ancillary qubit), which can flip the current state. This amounts to the multiplication of the current state by the transition matrix
\[
G = 
\left(\begin{array}{cc|cc|c}
1-g_t & g_t & 0 & 0 & 0\\
g_t & 1-g_t & 0 & 0 & 0\\
\hline
0 & 0 & g_t & 1-g_t & 0\\
0 & 0 & 1-g_t & g_t & 0\\
\hline
0 & 0 & 0 & 0 & 1
\end{array}\right).
\]
At this point we leave the current state unaffected and merely determine the probability with which the (noisy) measurement matches the desired result, and with which probability it fails to match, which gives a state transition to the error status. Depending on whether we are dealing with an intermediate or the final check, the measurement error is given by $m'$ or $m$. Denoting this error by $\alpha$ we have a transition matrix
\[
M(\alpha) = 
\left(\begin{array}{cc|cc|c}
1-\alpha & 0 & 0 & 0 & 0\\
0 & \alpha & 0 & 0 & 0\\
\hline
0 & 0 & \alpha & 0 & 0\\
0 & 0 & 0 & 1-\alpha & 0\\
\hline
\alpha & 1-\alpha & 1-\alpha & \alpha & 1\\
\end{array}\right).
\]
When applying $k$ checks we have $k{-}1$ intermediate checks with combined measurement error $m'$, and one final check with measurement error $m$. The final state after $k {\geq} 1$ checks can therefore be expressed as
\begin{equation}\label{Eq:FinalState}
\mathrm{final}(k) = M(m)G\Big(M(m')G\Big)^{k-1}\mathrm{init}.
\end{equation}
Given a final state we can express the post-selection and correct measurement probabilities respectively by
\begin{align*}
\langle \mbox{postselect},\mbox{final}\rangle &\quad\mbox{with}\quad
\mbox{postselect} = (1,1,1,1,0)^T\\
\langle \mbox{correct},\mbox{final}\rangle& \quad\mbox{with}\quad \mbox{correct} = (1,1,0,0,0)^T.
\end{align*}
Here and below we write $\langle a,b\rangle\equiv \sum_{i=1}^5 a_i b_i$ for the inner-product of five-dimensional vectors $a$ and $b$.
We can consequently write the logical measurement success rate as  $\langle \mbox{correct},\mbox{final}\rangle/ \langle \mbox{postselect},\mbox{final}\rangle$.

\subsubsection{Asymptotic measurement error}

In order to compute the asymptotic measurement success rate, it helps to consider the eigendecomposition $M(m')G = V\Lambda V^{-1} = \sum_i \lambda_iv_i w_i^T$, 
where $\Lambda$ is a diagonal matrix containing the eigenvalues $\lambda_i$, the columns $v_i$ of V represent the associated right eigenvectors, and the left eigenvectors $w_i$ are given by the columns of $(V^{-1})^T$.  Using the definition of the success rate and Eq.~\ref{Eq:FinalState}, we have
\begin{equation}\label{Eq:MeasurementSuccess}
\frac{\langle \mbox{correct}, \mbox{final}(k)\rangle}{\langle \mbox{postselect}, \mbox{final}(k)\rangle}
=
\frac{\sum_i \lambda_i^{k-1}\langle \mbox{correct}, M(m)Gv_i\rangle\cdot
\langle w_i, \mbox{init}\rangle}{\sum_i \lambda_i^{k-1}\langle \mbox{postselect}, M(m)Gv_i\rangle\cdot
\langle w_i, \mbox{init}\rangle}
\end{equation}
The largest eigenvalue of $M(m')G$ is $\lambda_0 = 1$ with corresponding eigenvector $w_0 = (0,0,0,0,1)^T$.
Observe, however, that we can completely ignore this term in Eq.~\ref{Eq:MeasurementSuccess}, since $\langle w_0, \mbox{init}\rangle = 0$. As the number of checks $k$ goes towards infinity, the only remaining $\lambda_i$ terms of relevance are those that match the second largest eigenvalue $\lambda_{\mathrm{mid}}$. Denoting the indices $i$ for which $\lambda_i = \lambda_{\mathrm{mid}}$ and observing that the scalar term $\lambda_{\mathrm{mid}}^{k-1}$ appears  in both the enumerator and the denominator, we find that
\begin{equation}\label{Eq:AsymptoticMeasurementSuccess}
\lim_{k\to\infty}
\frac{\langle \mbox{correct}, \mbox{final}(k)\rangle}{\langle \mbox{postselect}, \mbox{final}(k)\rangle}
=
\frac{{\sum_{i \in\mathcal{I}}} \langle \mbox{correct}, M(m)G v_i\rangle\cdot
\langle w_i, \mbox{init}\rangle}%
{\sum_{i\in\mathcal{I}}
\langle \mbox{postselect}, M(m)Gv_i\rangle\cdot
\langle w_i, \mbox{init}\rangle}.
\end{equation}
Keep in mind, however, that the asymptotic post-selection rate will go to zero with the number of checks, unless the measurement error rate $m$ and $\cnotgate$ gate error rates $g_c$ and $g_t$ are all zero.

\subsubsection{Simulation of measurement checks}

\begin{figure}[!t]
\centering
\begin{tabular}{cc}
\includegraphics[width=0.35\textwidth]{./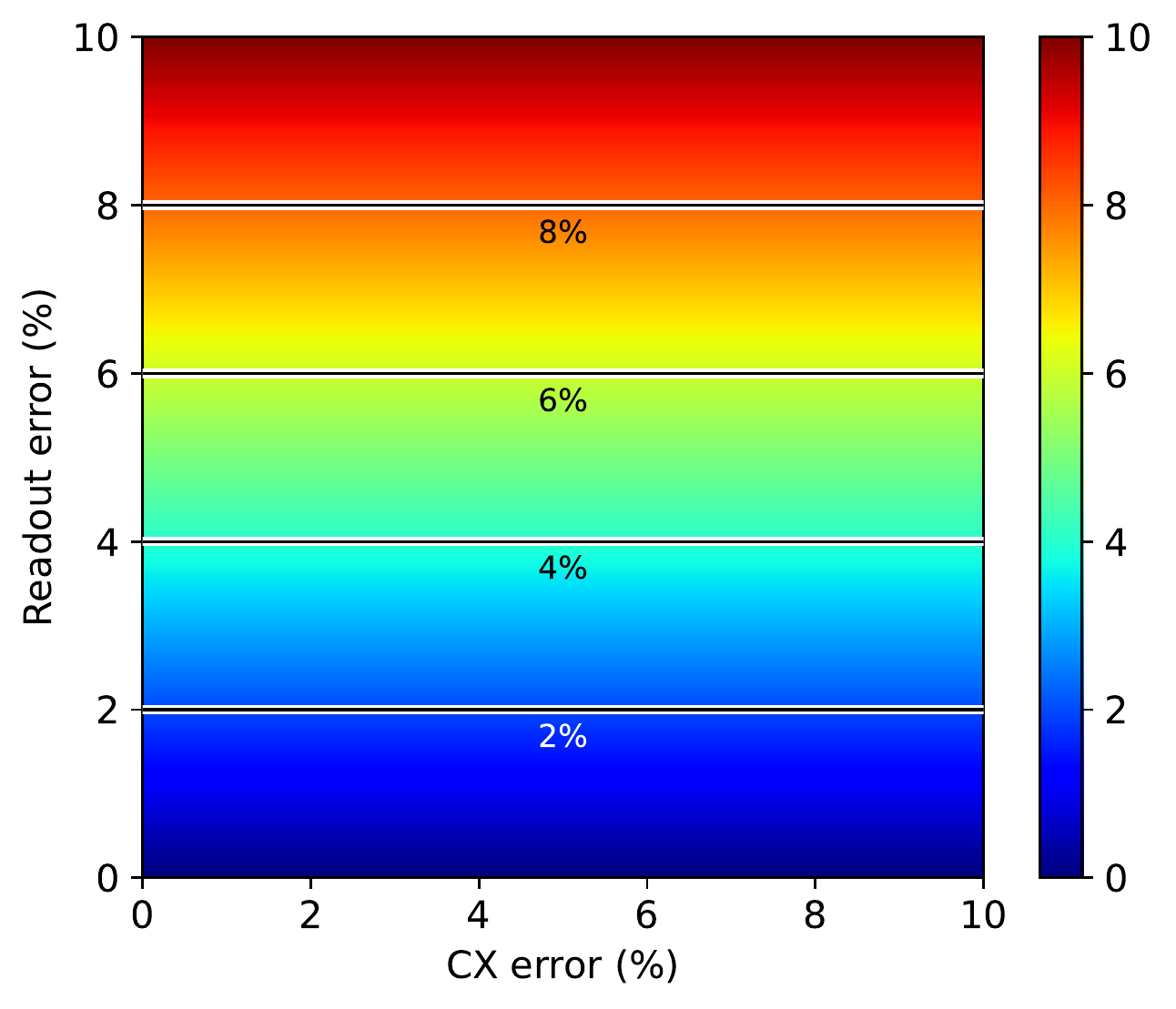}&
\includegraphics[width=0.35\textwidth]{./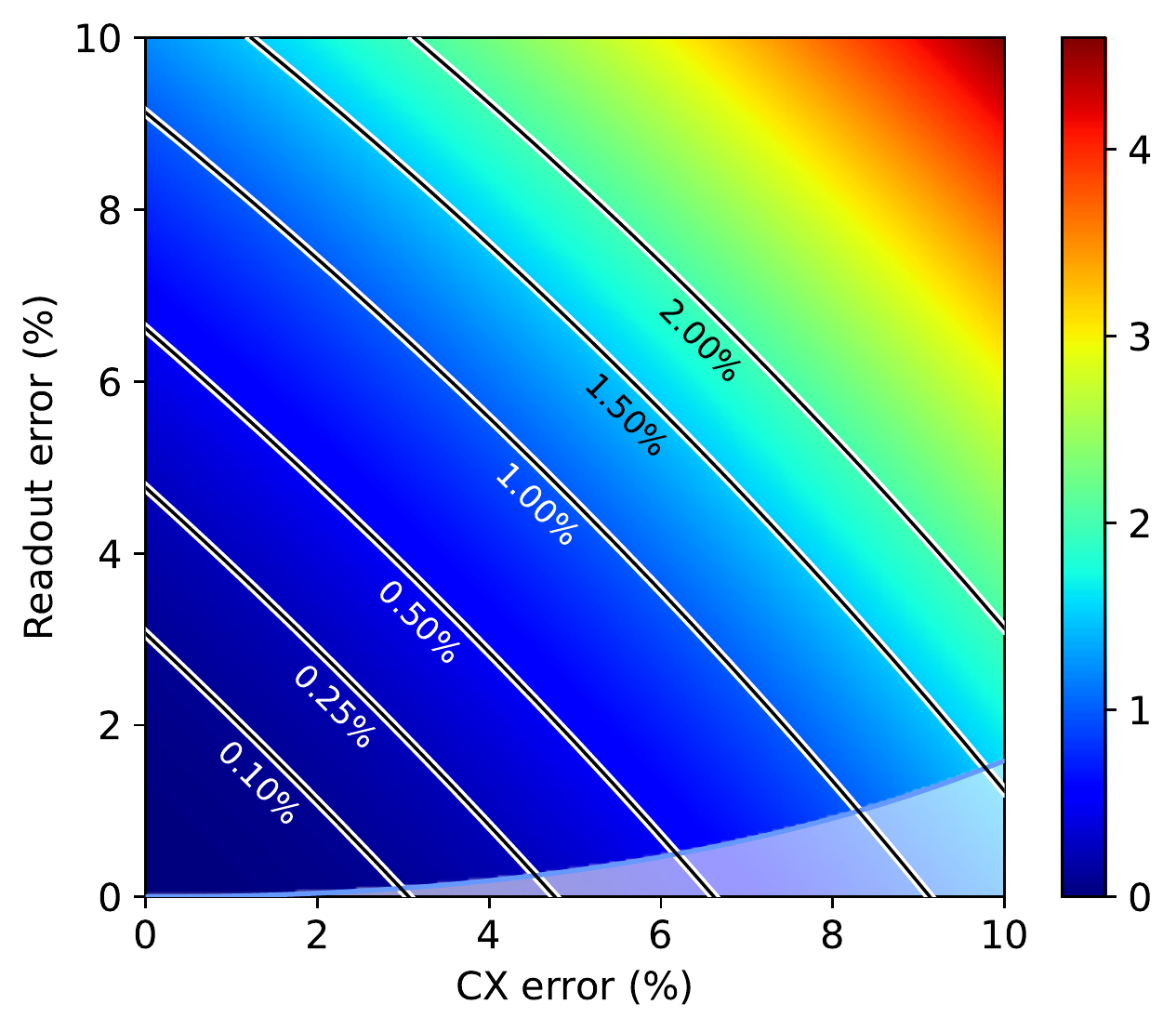}\\
({\bf{a}}) Direct readout & ({\bf{b}}) Single check\\[8pt]
\includegraphics[width=0.35\textwidth]{./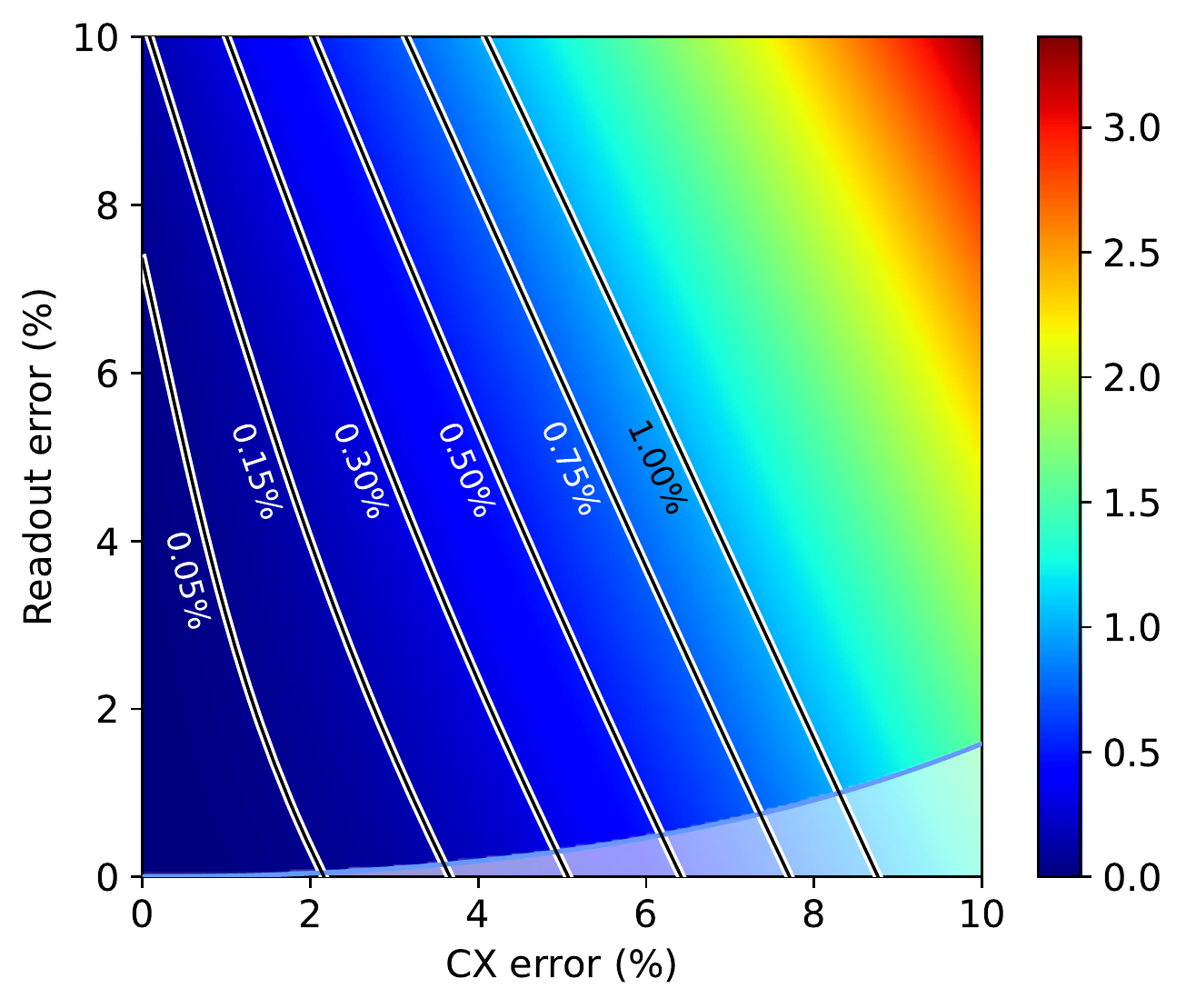}&
\includegraphics[width=0.35\textwidth]{./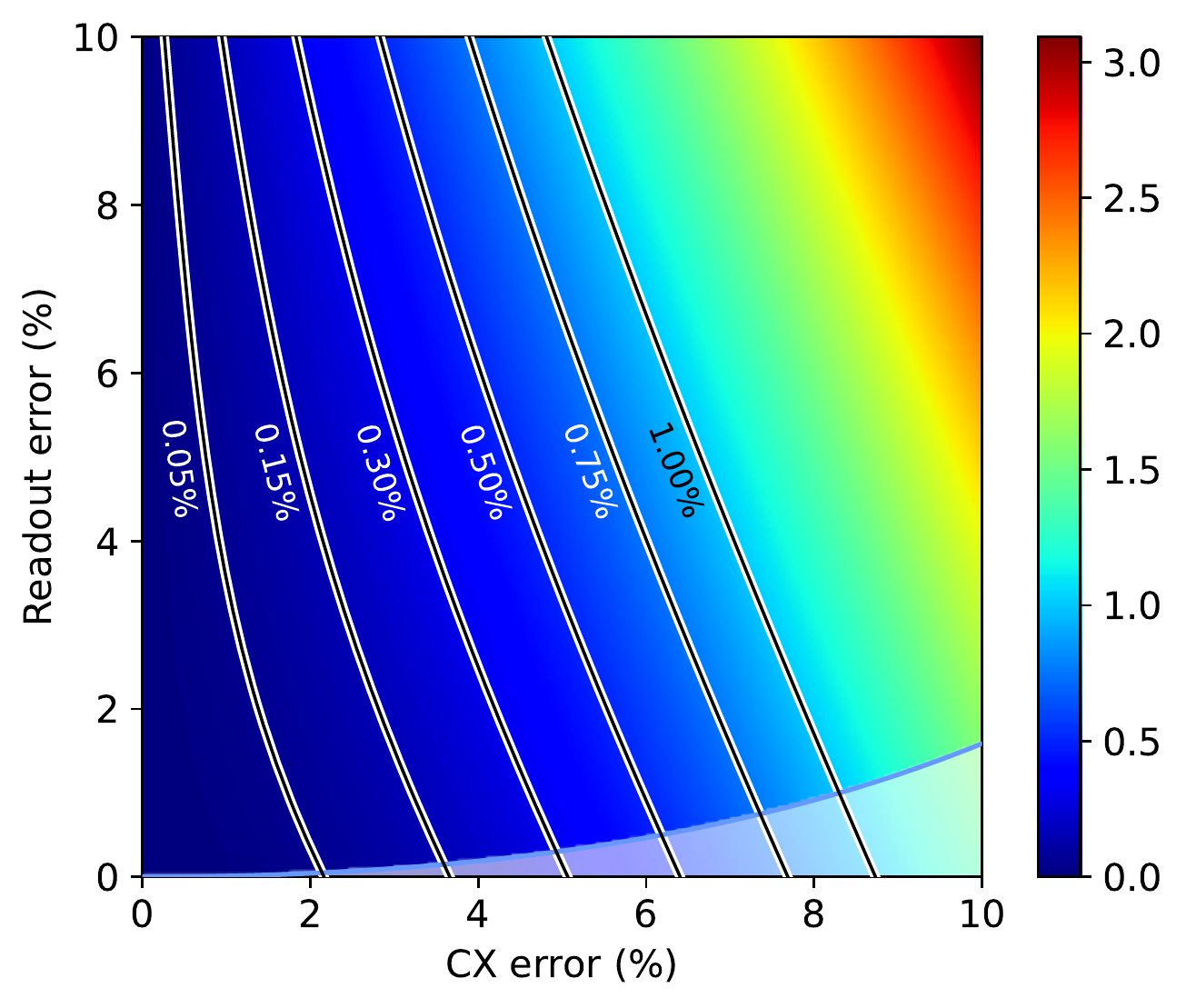}\\
({\bf{c}}) Two checks & ({\bf{d}}) Three checks
\end{tabular}
\caption{Logical measurement error rates as a function of both {\sc{cnot}} and physical measurement error rates along with iso-contours indicating a fixed logical error rate. The curves and scale of the error bar changes as the number of checks is increased from zero in (a) to three in (d). The saturated regions in the bottom right of plots (b)--(d) indicate parameter combinations for which the checked result is worse than direct measurement.}\label{Fig:PracticalReadout}
\end{figure}

The output distribution of the circuit in Figure~\ref{Fig:RepeatedReadoutSims}(a) is easily sampled using numerical simulation. This allows us to evaluate the performance of the measurement checks in various settings. We first validate the Markov model used to predict the error rates. For this, we fix the physical readout error $m$ to $30\%$ and use {\sc{cnot}} gates with errors $g_c{=}g_t$ chosen from the set $\{5\%$, $15\%$, $25\%\}$. We generate $10^6$ samples for different numbers of checks and plot the computed logical measurement error rates as dots in Figure~\ref{Fig:RepeatedReadoutSims}(b). For small numbers of checks, the sampled values closely match those generated by the model, indicated by solid lines. As the number of checks increases, the post-selection rate decreases, leading to larger variations in the sampled values. The asymptotic error rates are shown as horizontal dotted lines. For comparison, we also show the logical measurement error rates obtained using majority voting of the measurement for even and odd numbers of checks, shown as asterisks connected by light dashed lines.
For odd numbers of checks, we reject any samples where the number of 0 and 1 bits in the measurement matches. When the number of checks is zero or one, the two schemes are equivalent. For larger numbers of checks, we see that the logical measurement error for majority voting is significantly higher than that obtained using the more stringent requirement that all measurement bits match, as used in the Pauli-check approach. (Similar conclusions regarding the difference in majority and unanimous decoding were found in~\cite{PhysRevA.105.012419}.) Figure~\ref{Fig:RepeatedReadoutSims}(c) shows the asymptotic error rate as a function of the {\sc{cnot}} error rates $g_c{=}g_t$ for three initial values of $m$. When $m < \half$, the asymptotic error rate goes towards zero as the {\sc{cnot}} error decreases. When $w=\half$, the logical error rate remains at a half regardless of the {\sc{cnot}} error rate. The same applies irrespective of the measurement error rate when the {\sc{cnot}} error rate is one half.
The asymptotic readout error rate for $w > \half$  goes towards one with decreasing {\sc{cnot}} error rate.

\begin{figure}
\centering
\begin{tabular}{cc}
\includegraphics[width=0.425\textwidth]{./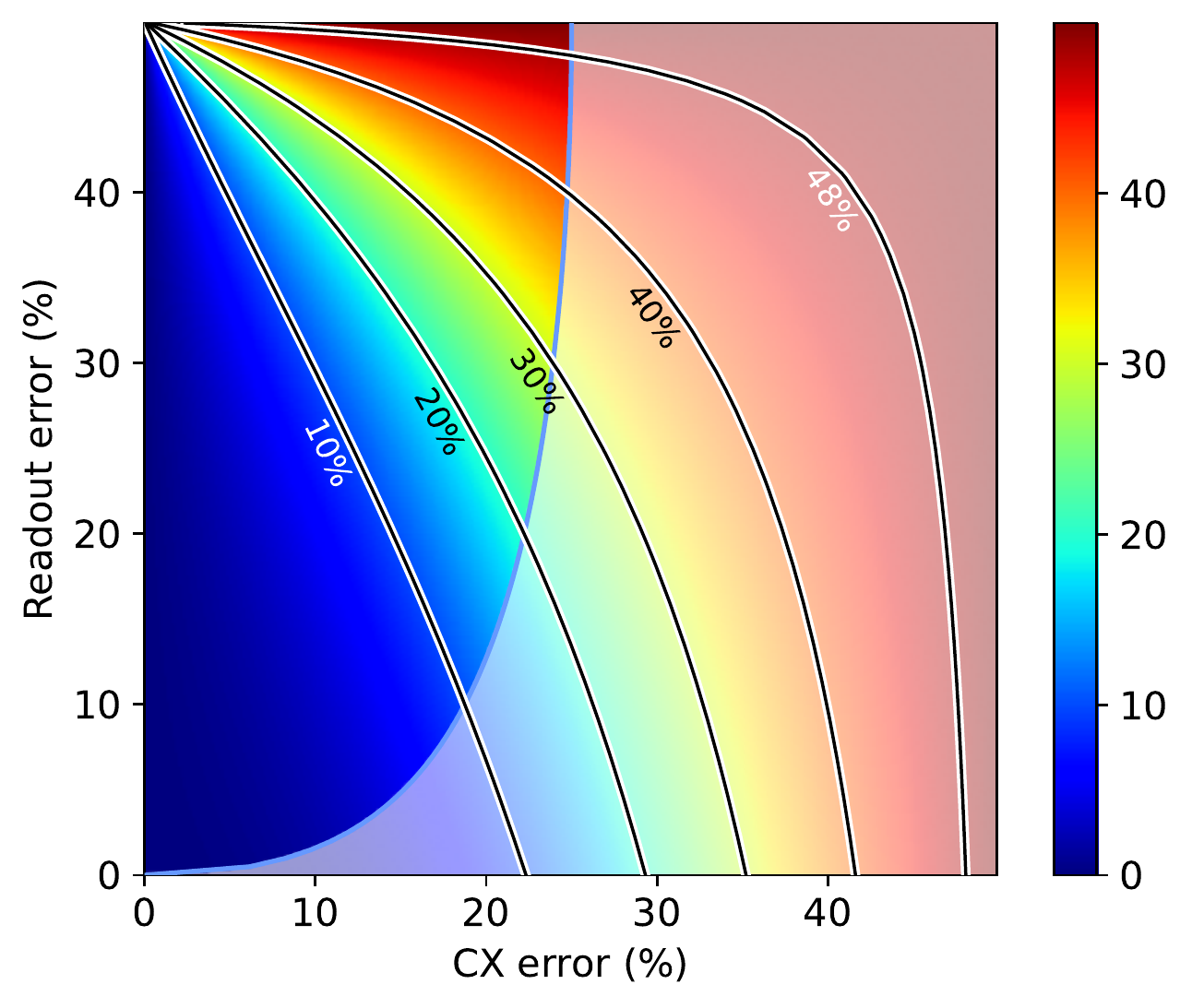}&
\includegraphics[width=0.425\textwidth]{./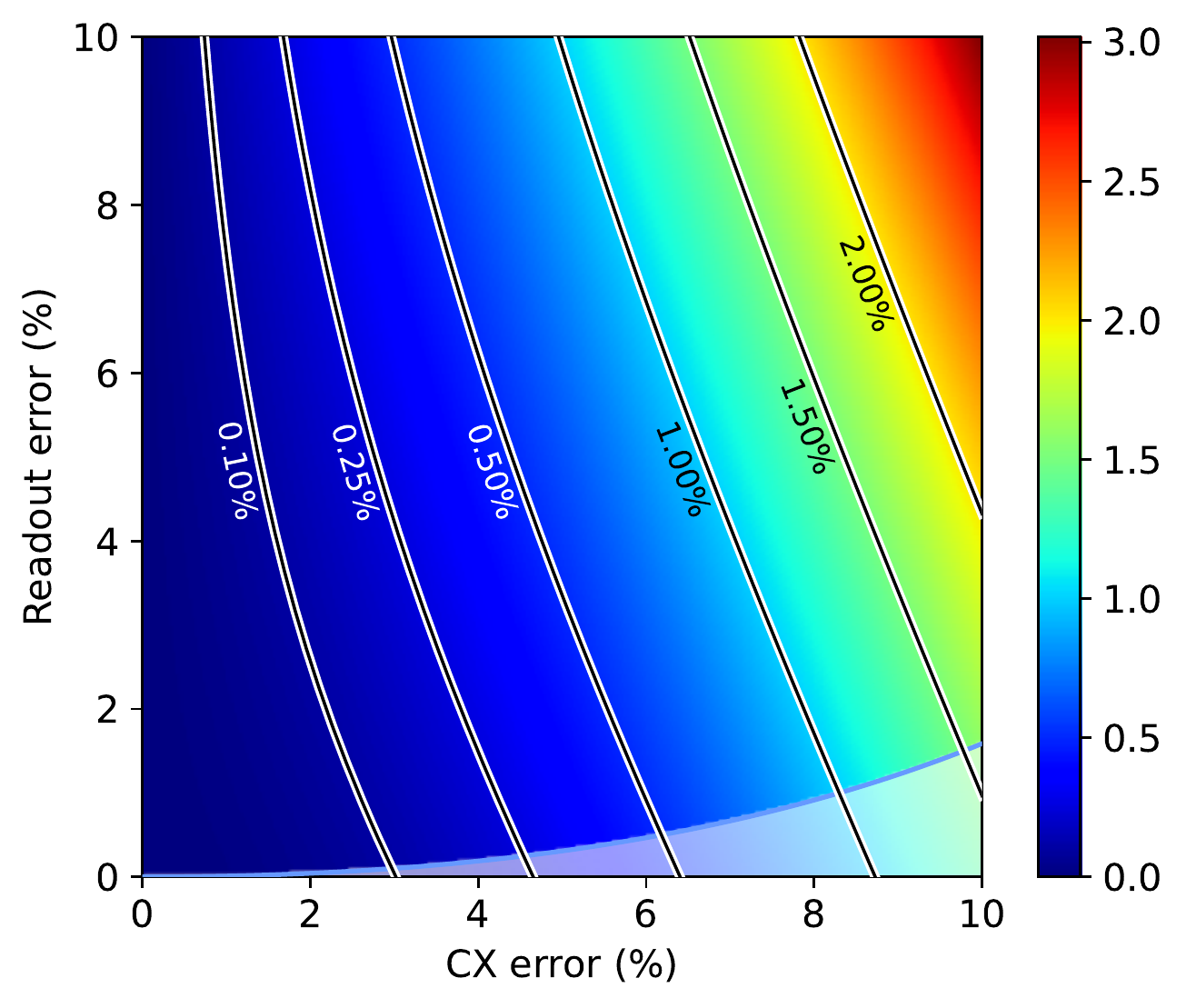}
\end{tabular}
\caption{Asymptotic logical readout-error rate after mitigation using a repetition code implemented with a chain of {\sc{cnot}} gates of length going towards infinity (in the limit, the post-selection rate will be zero). The readout errors are given as a bit-flip channel with equal probability of transitioning from 0 to 1 and vice versa. The {\sc{cnot}} error rates are the $g_c$ and $g_t$ values, each indicating a bit-flip probability. Measurements are accepted only when all bits match. The saturated region indicates parameter combinations where the asymptotic result is worse than direct measurement without checks.}\label{Fig:AsymptoticReadout}
\end{figure}

In Figure~\ref{Fig:PracticalReadout} we show the logical measurement error rates obtained for different combinations of {\sc{cnot}} and measurement error rates when using up to three checks. Compared with the asymptotic results in 
Figure~\ref{Fig:AsymptoticReadout} we see that the logical error rate obtained using a limited number of checks quickly approaches the asymptotic value. Contour lines in the figures show parameter combinations resulting in the same logical error rate. The bottom-right regions of the plots with saturated colors indicate parameter combinations for which the logical error rate exceeds the physical error rate. In those regions, there is clearly no advantage in using Pauli checks as they will only deteriorate the measurement accuracy.

\section{Non-asymptotic performance}\label{Sec:NonAsymptotic}

The theoretical model derived in Section~\ref{Sec:AsympototicCPC} allows us to evaluate the asymptotic performance of CPC under simplifying assumptions. In order to evaluate the performance in a more practical setting, the model requires a number of changes. First, we need to consider noise on gates other than {\sc{cnot}} gates and replace the uniform depolarizing noise with more general Pauli channels that are specific to each gate. Moreover, given that check and flag qubits in two-sided checks are generally idle for at least the duration of the payload circuit, errors due to thermal relaxation can be substantial and should therefore be included as well, along with readout errors. Second, in order to evaluate the performance of CPC with additional flag qubits, we need to extend the Markov model and keep track of the exact Pauli term on the check qubits. Finally, instead of assuming the payload error to be known in advance, we need to estimate or bound the error rates in a tractable manner. In this section, we discuss these extensions along with other techniques needed to obtain more accurate performance estimates of CPC in the non-asymptotic regime.

\subsection{Circuit preparation}

When preparing a quantum circuit for execution on a quantum processor we need to make sure that all gates are supported. Typically, the processor only provides a limited set of elementary gate types such as {\sc{cnot}} and {\sc{rz}} gates. In addition, certain two-qubit gates may be defined only on qubits that are physically connected, thereby further limiting the set of available gates. We assume that the payload circuit has been provided in such a way that it can run directly on the selected qubits. We therefore only need to make sure we appropriately implement the check and flag circuits. Here we assume that the controlled Pauli gates and swap operations already follow the qubit topology and that no additional swap operations are needed. Successive single-qubit operations can be combined and converted into an appropriate sequence of elementary single-qubit gates. For further analysis by the model described later in this section we need to keep track of which gates belong to which part of the circuit. Most notably, we keep track of the Pauli-swap blocks illustrated in Figure~\ref{Fig:LNN} and their inverses for the left and right checks. For each check, we keep track of the blocks used to implement it, and likewise for the flags. Gates have different durations, and the next step is to schedule the operations and assign a start and stop time for each gate. This allows us to identify qubit idle times which are padded with delays for simulation to capture thermal noise. When submitting to quantum processors, idle times can be replaced by appropriate dynamical decoupling~\cite{viola1998dynamical,ezzell2022survey} sequences.

\subsection{Tableau simulation}

Clifford circuits can be efficiently simulated using binary tableaus, where each row represents an $n$-qubit Pauli operator as a bit string of length $2n$. The initial $n$-qubit state $\ket{0}\bra{0}$ can be expressed in the Pauli-Z basis, resulting in a tableau of size $n\times 2n$. Applying gates such as {\sc{cnot}} and {\sc{s}} to the current state amounts to simple predetermined updates to the tableau that only affect few columns, thus enabling simulation of deep circuits. Sampling a single measurement outcome, however, may require updating the entire tableau and therefore forms a major computational expense, despite being polynomial in the number of qubits.
Fortunately, we can avoid simulating measurements if we assume that measurement errors can be modeled as the product of single-qubit symmetric bit-flip channels. Doing so ensures that the measurement noise is independent of the state, and allows us to model measurement errors as a Pauli-X noise channel just prior to an ideal measurement.
Instead of simulating an initial state and sampling at the end, we use the tableau representation to simulate the evolution of Pauli error strings as they change and accumulate throughout the circuit.
That is, each row in the tableau represents the noise term for a single run of the circuit, initialized to the identity operator and possibly updated by a Pauli noise channel representing state preparation. Per gate we then update the tableau as before, which conjugates the noise terms, effectively pushing them through the gate. Per row, we then sample a Pauli term from the noise channel associated with the gate and update the existing term using the exclusive-{\sc{or}} operator on the two bit strings. (If gate noise is modeled to occur prior to the gate we first sample and update the tableau before applying the ideal gate.) Once all gates have been applied, possibly including the noise channels for state preparation and measurement, we are left with the Pauli noise terms that apply just prior to measurement.  As for the outcomes themselves, we know that in the absence of noise the check and flag qubits will be zero. We would therefore measure a one if and only if the corresponding Pauli error term is X or Y. In the two-sided Pauli-check scheme we assume no measurements are made on the data qubits, and therefore only need to know whether an error occurred on these qubits or not. That is, we only need to check whether the Pauli string has identity terms on the data qubits. For the one-sided Pauli check we ignore the Pauli-Z component of the errors and perform the right-check classically based only on the Pauli-X component.

\subsection{Performance model}\label{Sec:PerformanceModel}

With the increased complexity of the noise model and the addition of timing information to capture thermal relaxation, it is no longer feasible to find a closed-form expression for the logical error and post-selection rates. Nevertheless, we can still obtain estimates of these rates by numerically evaluating an extended version of the Markov model used in Section~\ref{Sec:MarkovModel}. The global state vector is modified to include the probabilities of four states that indicate whether or not there is an error on the data qubits, and whether or not a check or flag was raised so far. This state is updated one (flagged) check at a time, and is initialized based on the error probability of the payload circuit. We no longer assume this probability as given and will derive upper and lower bounds in the next section that can be evaluated for a given payload circuit.

For a single step in the Markov model, we consider the probability of introducing different Pauli error terms on the check and flag qubits for a single check and whether the check introduced any new error on the data qubits. After taking into account any symmetrized readout errors, we determine if the check was raised by checking if the Pauli terms contain any X or Y components. The probabilities differ in case any existing errors are present on the data qubits, and we, therefore, evaluate the probabilities both with and without existing errors. Given these two vectors of probabilities, it is straightforward to update the global state vector.

We separately consider the left and right sides of each check and will refer to these as segments. For each segment, we first evaluate the probability of having an I, X, Y, or Z Pauli term on the check qubit, combined with a state that indicates whether the segment introduced any error term on the data qubits. This gives a state vector with eight probabilities. For two-sided checks, we evaluate the probabilities in the state vector by pushing all errors toward the beginning of the left segment and towards the end of the right segment. Logically, we can then cancel both sides of the check and combine the two state vectors by combining the Pauli terms, assuming that any newly introduced error terms on the data qubits do not cancel. At this point, if we assume the data qubits have  existing errors, we introduce an I or an X term on the check qubit with equal probability, reflecting the assumption that any error on the data qubits (anti)commutes with the check with probability one half. This is conveniently implemented by averaging the I and X and the Y and Z probabilities.

When flag qubits are present, we extend the Pauli terms by an identity term on the flag qubit, and extend the model to cover all 16 two-qubit Pauli terms, for a total of 32 probabilities. For all-to-all connectivity, we can easily deal with flag qubits, since there is never any interaction between qubits associated with different Pauli checks. For the linear nearest neighbor setting, however, this does not generally hold. For instance, in the example circuit shown in Figure~\ref{Fig:LNN}(b), the right-most \swapgate\ gate acts on the check qubit $C_1$ for the first check, and the flag qubit $F_3$ for the third check, thus breaking the Markovian assumption. For two-sided checks with flags on the linear nearest neighbor topology, we therefore factor the noise for the \swapgate\ gates in the flag circuit by looking at the marginal probabilities of the Pauli terms on each of the two qubits separately. This effectively decouples the noise terms and restores Markovianity at the expense of modifying the noise channel.

\begin{figure}[!t]
\centering
\setlength{\tabcolsep}{12pt}
\begin{tabular}{ccccc}
\includegraphics[height=48pt]{./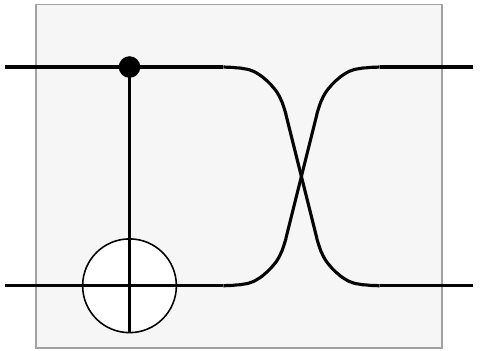}&
\includegraphics[height=48pt]{./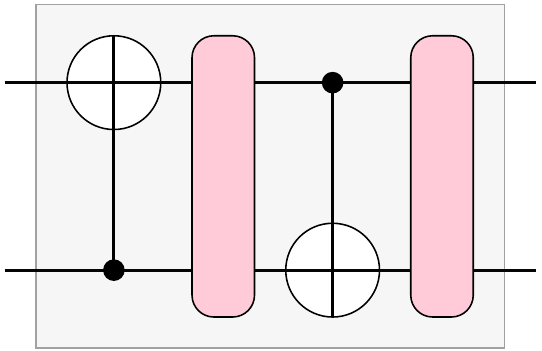}&
\includegraphics[height=48pt]{./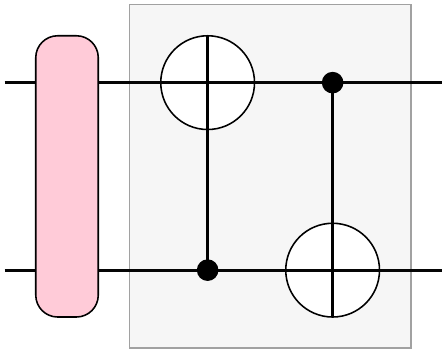}&
\includegraphics[height=48pt]{./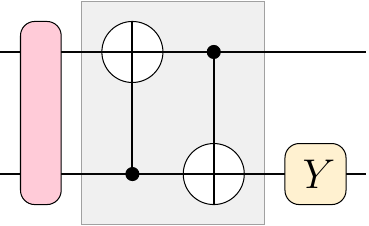}&
\includegraphics[height=48pt]{./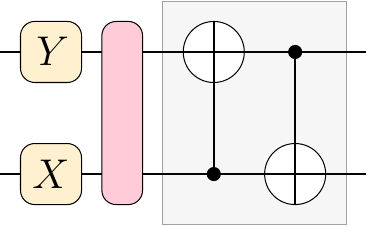}\\[5pt]
({\bf{a}}) & ({\bf{b}}) & ({\bf{c}})
& ({\bf{d}}) & ({\bf{e}})
\end{tabular}
\caption{Example of (a) a single Pauli-X element from a left check along with (b) its optimized noisy implementation. Gate noise (indicated by the rounded red boxes) can be pushed towards the beginning of the block (c). Any existing noise term on the check qubit (d) can be pushed through the gate block (e) for combination with the noise channel.}\label{Fig:XCheckBlock}
\end{figure}

For a given left or right check we evaluate the state vector using a Markov model that iterates over the individual Pauli elements that constitute the check. These elements are implemented using a single controlled-Pauli gate in the all-to-all case, and using the blocks shown in Fig.~\ref{Fig:LNN} in the case of linear nearest neighbors. Consider, for instance, a Pauli-X element from a left check, as shown in Fig.~\ref{Fig:XCheckBlock}(a). By combining the controlled-Pauli and \swapgate\ gates, the implementation would consists of two {\sc{cnot}} gates with noise, indicated by the red boxes in Fig.~\ref{Fig:XCheckBlock}(b). Since each check element is a quantum circuit on two qubits, we can explicitly evaluate the individual noise channels and push them through to the beginning or end of the element by conjugating the Pauli terms. We then combine the Pauli channels using convolution (efficiently implemented by element-wise multiplication of the Pauli fidelities) to obtain a single noise channel associated with the element (see Fig.~\ref{Fig:XCheckBlock}(c)).
With this we can consider the update process of the state vector. Each state has a given Pauli noise term on the check qubit, for instance, a Pauli-Y term in Fig.~\ref{Fig:XCheckBlock}(d). We push this through the element, as shown in  Fig.~\ref{Fig:XCheckBlock}(e) and can then determine the contributions to the next state vector. Conceptually, suppose the red noise channel yields a Pauli ZY term. Then, following the multiplication of YX and ZY, the overall noise term would be XZ. This has an X term on the check qubit, and given that the second term is not the identity, this introduces an error on the data qubits. In case the second term would be the identity,
we would simply maintain the current state of whether or not an error occurred. We can process all of the 8 elements of the current state this way to determine the updated state. Note that the updated state only records whether or not a Pauli error occurred on the data qubits; the exact terms are discarded and their individual probabilities are consolidated in the overall state vector.

For left-only checks, we compute the state vector by 
pushing noise towards the end of the checks, processing elements from left to right. Since left-only checks are only applicable when data qubits are measured at the end, we can disregard Pauli-Z errors on those qubits at the end of the circuit. In order to account for this to some extent, we slightly modify the processing of individual elements when updating the state vector. Instead of just looking at the Pauli term that appears on the data qubit, we augment it with identity terms and push it through the payload circuit. Whenever the resulting term contains only I and Z terms we treat it as not introducing any new error onto the data qubits. For simplicity, and indeed tractability, we assume that X or Y errors on the data qubits do not cancel or combine to Pauli Z terms. Finally, note that readout errors on the data qubits are included in the overall error probability of the payload circuit and that left-only checks are never combined with flag qubits.

\subsection{Bounds on the payload circuit error probability}\label{Sec:PayloadErrorBounds}

The logical error and post-selection probabilities depend on the error rate of the payload circuit. For a given Clifford payload circuit, it is therefore important that we can estimate or at least bound this error rate using the available information on the individual the gate errors. Assuming that gate errors are Pauli channels, it is in principle possible to propagate all gate errors to the end of the circuit and form the overall Pauli noise channel affecting the payload. However, this approach scales exponentially in the number of qubits and is therefore impractical for all but the smallest payload circuits. What we can do is  compute aggregated error channels $C_i$ for successive gates on small subsets of qubits and use these combined channels in further calculations.
We characterize each channel $C_i$ by two scalar values. The first, $s_i$, denotes the `success probability' of the channel, namely the probability that no error occurred. The second, $\alpha_i$, denotes  the largest coefficient of a non-identity Pauli coefficient in
the channel. Both values are invariant under conjugation of the noise channel with Clifford operations, and without loss of generality, we can therefore assume that all intermediate noise channels occur at the end of the circuit. We now define an aggregated noise channel $\mathcal{C}_{\ell}$ that combines channels $C_1$ through $C_{\ell}$. The overall success probability $S_{\ell}$ is now defined as the probability that the aggregated channel appears noiseless; either because none of the sub-channels had any noise, or because noise terms canceled. We obtain a lower bound $L_{\ell}$ on the overall success probability if we discount error cancellation and require that all sub-channels are error-free:
\[
L_{\ell} = \prod_{i=1}^{\ell} s_i.
\]
For an upper bound we need to consider a more optimistic scenario where errors cancel. For a pair of Pauli channels with coefficients $u_i$ and $v_i$ respectively for Pauli $P_i$, the total probability of canceling errors is given by their inner product $\langle u,v\rangle = \sum_i u_iv_i$ since only matching coefficients cancel. It is then natural to ask which channel coefficients $v$ maximize this probability for a given $u$. In case the noise-free probabilities $u_0$ and $v_0$ are fixed, and denoting by $\bar{u}$ and $\bar{v}$ the non-identity Pauli coefficients, this amounts to solving
\[
\mathop{\mathrm{maximize}}_{\bar{v} \geq 0}\quad \langle \bar{u},\bar{v}\rangle \quad \mbox{subject to}\quad \Vert \bar{v}\Vert_1 = 1-v_0
\]
This expression is closely related to the definition of the dual norm  of the one norm, and the optimum is given by $\Vert \bar{u}\Vert_{\infty} = (1-v_0)\cdot \max_i\{ \bar{u}_i\}$. Applying this to the combination of channels $C_1$ with $C_2$, we obtain the upper bound $s_1s_2 + \alpha_2(1-s_1)$, which can be rewritten as $(s_2 - \alpha_2)s_1 + \alpha_2$. By repeatedly adding single channels to previously combined channels, we can obtain an upper bound on the success probability for $\mathcal{C}$.
In case $s_i \geq \alpha_i$, which is always true if the error rate $1-s_i \leq 1/2$, we can define the upper bound
\[
U_{i} := (s_i-\alpha_i)U_{i-1} + \alpha_i,
\]
starting with $U_0 = 1$.
Since Pauli channels commute, we permute the order in which we specify the channels. This does not affect the actual noise channel, but can have an effect on the the upper bound $U_k$. As such, it is possible to further lower the upper bound by carefully selecting the channel order.

In the special case where all channels satisfy $s_i=s$ and $\alpha_i=\alpha$, we have
\begin{equation}\label{Eq:UpperBound1}
U_{k} = (s-\alpha)^{k} + \alpha\sum_{i=0}^{k-1}(s-\alpha)^i
= (s-\alpha)^{k} + \alpha\frac{1 - (s-\alpha)^k}{1 - (s-\alpha)}.
\end{equation}
 For a payload circuit consisting of $k$ two-qubit gates, each affected by a depolarizing channel with error probability $\epsilon$, we have $s=1-\epsilon$ and $\alpha = \epsilon/15$ and it therefore follows from Eq.~\ref{Eq:UpperBound1} that
\begin{align}
U_k
&= (1 - (\epsilon+\alpha))^k + \frac{\alpha}{\epsilon+\alpha}(1 - (1 - (\epsilon+\alpha))^k)\notag\\
&= \frac{\epsilon}{\epsilon+\alpha}(1 - (\epsilon+\alpha))^k + \frac{\alpha}{\epsilon+\alpha}
 = \frac{15}{16}\left(1 + \frac{16\epsilon}{15}\right)^k + \frac{1}{16}.
 \label{Eq:DepolarizingSuccessBound}
\end{align}

\begin{figure}
\centering
\begin{tabular}{cc}
\includegraphics[width=0.4\textwidth]{./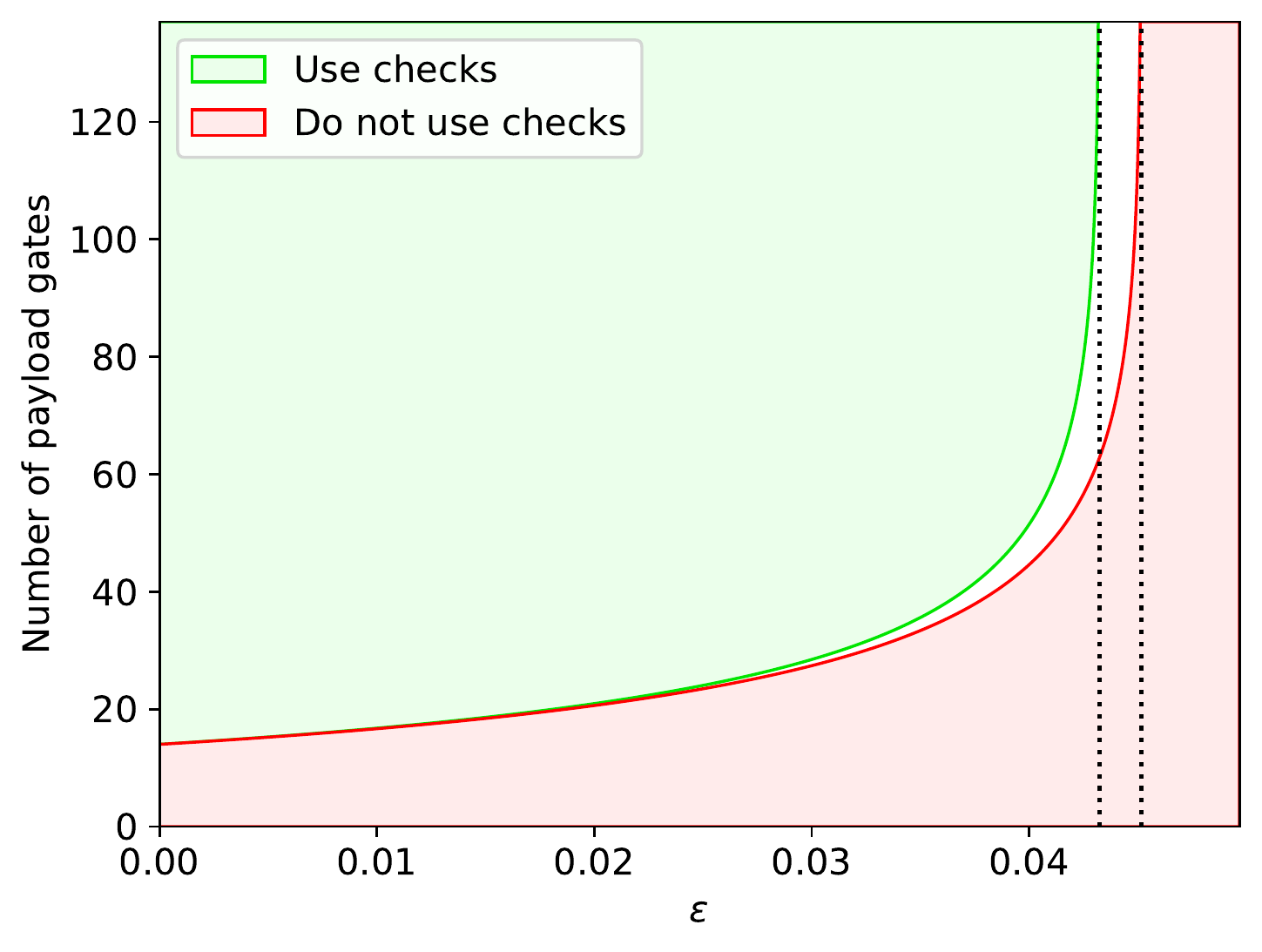}&
\includegraphics[width=0.4\textwidth]{./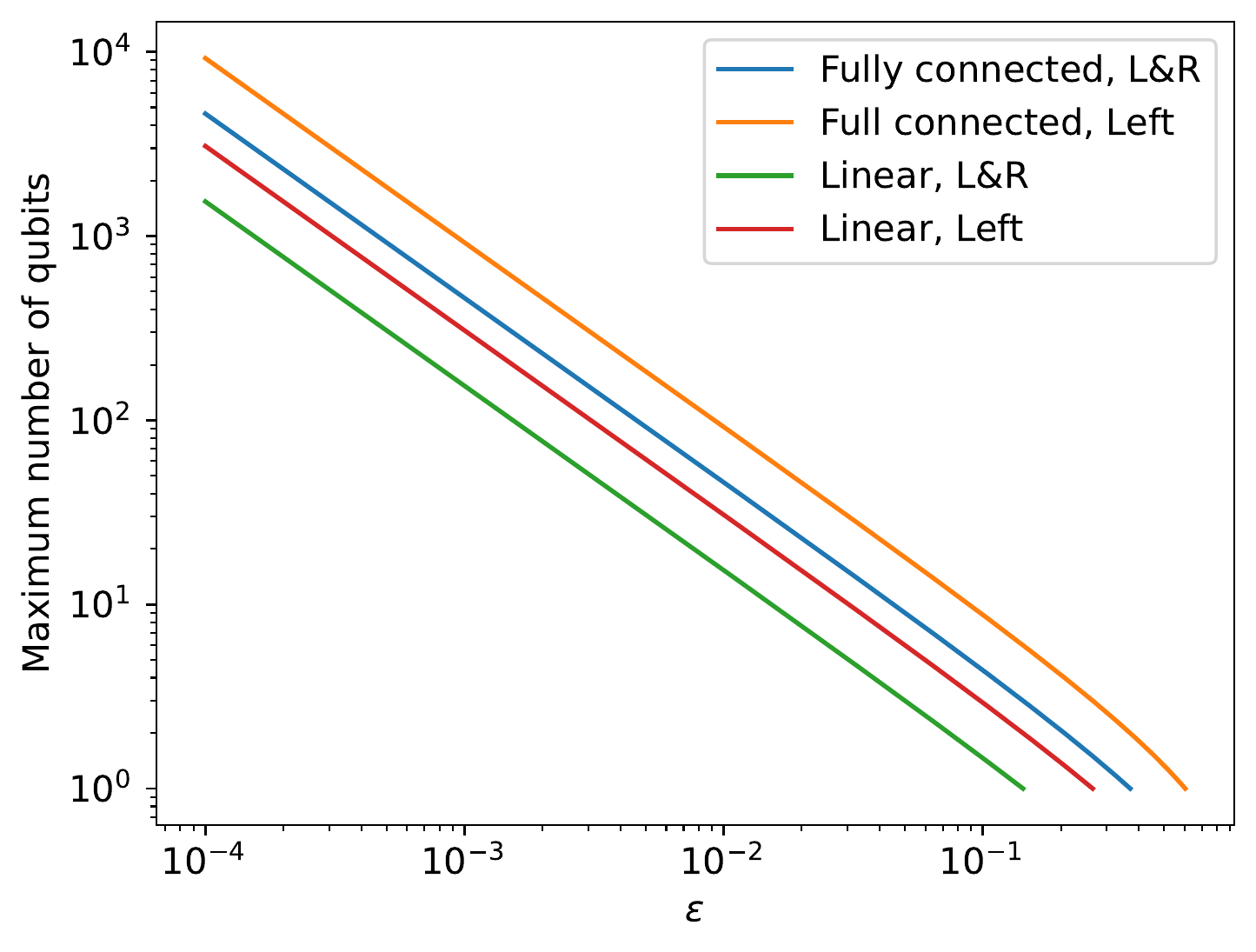}\\
({\bf{a}}) & ({\bf{b}})
\end{tabular}
\caption{Assuming depolarizing channels with error rate $\epsilon$ for all two-qubit gates, plot (a) shows the regions, in terms of the number of two-qubit gates in a 10-qubit  payload circuit, for which two-sided Pauli checks with all-to-all connectivity will improve or deteriorate performance, along with critical threshold based on upper and lower bounds on the payload error rate. For the region in between the two boundaries, performance improvements depend on the exact error rate of the payload circuit. The left dotted line corresponds to the limit of $P_{\min}$ as the number of gates goes to infinity, and the right dotted line gives the value of $\epsilon$ for which $t_{ok}$ reaches 1/2. Plot (b) illustrates the maximum number of qubits in the payload circuit for which $L_k < 1/2$. This is the largest number of qubits for which $t_{ok}$ may not reach the critical value of 1/2. For a guaranteed performance we could find the largest number of qubits for which the number of check gates $k$ satisfies $t_{ok} \leq U_k < 1/2$.
}\label{Fig:Bounds}
\end{figure}

Given the final upper and lower bounds on the success probability, $L_k$ and $U_k$, we immediately obtain bounds on the error rate by setting $P_{\min} := 1 - U_k$ and $P_{\max} := 1 - L_k$. In the depolarizing case, whenever the number of gates $k$ is large enough for $P_{\min}(k)$ to exceed the critical error rate $P_{\mathrm{critical}}$, we can improve the logical error rate by applying Pauli checks. On the other hand, when $k$ is sufficiently small such that  $P_{\max}(k) < P_{\mathrm{critical}}$ it does not make sense to apply Pauli checks, since doing so increases the logical error rate. Earlier we also noted that the asymptotic logical error rate goes to one as $t_{\mathrm{ok}} \leq 1/2$. The bounds derived in this section can be used to bound $t_{\mathrm{ok}}$ based on the number of gates $k$ used to implement the check (see also Table~\ref{Table:KValues}). For a given error rate $\epsilon$, the lower bounds derived in this section allow us to determine the maximum number of qubits in the payload circuit for which $t_{\mathrm{ok}} < 1/2$ may still hold. If the lower bound attains or exceeds one-half, we are guaranteed that the logical error rate increases to one. We illustrate the various bounds in Fig.~\ref{Fig:Bounds}.

When analyzing the payload errors in the context of left-only checks, we can disregard all Pauli-Z components in the error. In this case, we can still follow the same derivation as above, albeit with some minor changes. In order to obtain the noise channels $C_i$ we still propagate errors to the end of a contiguous group of gates on a subset of the qubits, but now need to push all Pauli terms through the remaining gates to the end of the payload circuit. We then discard the Pauli-Z component of the channel terms and form a new channel consisting only of Pauli-X operators. In addition to this, we generally want to include one additional noise channel to model the readout errors. The combination of successive noise channels into aggregated upper and lower bounds remains unchanged.

\subsection{Simulation and model results}

Given the simulation and modeling tools developed earlier in this section, we can now compare the performance of coherent Pauli checks with and without flags, as well as the relative performance of one- and two-sided checks in terms of their logical error and post-selection rates. In the case of two-sided checks, the expanded noise model also allows us to study the effect of thermal relaxation during the idle times between the left and right checks. 
Simulation allows us to consider both all-to-all and linear nearest neighbor architectures, and enables
us to evaluate the accuracy of the model, which can be substantially faster than sampling, certainly in regimes where the post-selection rate is low.

\subsubsection{Noise modeling}

Before we can run the simulation, we need to construct a noise model that better captures noise in actual processors. For this, we start with the error model for IBM's 127-qubit superconducting quantum processor \ibmwashington, which is periodically updated and available through Qiskit Aer~\cite{qiskit}. Gate noise estimates are provided as local noise channels, which we simplify to Pauli channels based on the Pauli fidelities. Measurement errors are modeled as classical bit flips occurring after qubit measurements and represented by $2\times 2$ stochastic matrices, one for each qubit. For simplicity, we average the off-diagonal elements and renormalize to obtain symmetric bit-flip channels. Finally, following~\cite{sarvepalli2009asymmetric}, we model the combined effect of amplitude damping and dephasing on idle qubits as single-qubit Pauli channels with coefficients
\[
p_x(t) = p_y(t) = \sfrac{1}{4}\big(1 - e^{-t / T_1}\big),\quad
p_z(t) = \sfrac{1}{4}\big(1 + e^{-t/T_1}
- 2e^{-t/T_2}\big),
\quad
p_i(t) = 1 - (p_x + p_y + p_z)(t),
\]
where $T_1$ and $T_2$ represent the qubit thermal relaxation and dephasing times, and $t$ denotes the duration of the qubit idle time.
To control the strength of the noise, we introduce a scaling factor $\alpha$ for each of the noise types. For Pauli noise channels associated with gates, we change the channel such that each Pauli fidelity $f$ is mapped to $f^{\alpha}$. For $\alpha = 1/2$ this means that we halve the noise level in the sense that we need to apply the Pauli channel twice to obtain the original noise level. For measurement errors, we multiply the off-diagonal elements of the stochastic transition matrices by $\alpha$, followed by renormalization. Finally, the noise associated with qubit idle time is scaled by dividing both the $T_1$ and $T_2$ times by $\alpha$.

The qubits used in the linear nearest neighbor setting are easy to embed in the heavy-hex connectivity. This means that all gates and their associated noise channels are directly available. Given the heavy-hex connectivity of the processor and correspondingly, the restricted set of gates, we need to artificially extend the noise model if we are to simulate circuits that assume a fully-connected topology.
Given a target number of qubits, we sample gates, along with their noise model and duration, uniformly at random with replacement, from the gates defined on the selected qubit chain. The same is done for the $T_1$ and $T_2$ times and the measurement errors. For two-qubit gates we ensure that both orientations of the gate share the same noise and duration.

\subsubsection{All-to-all connectivity}

With the noise model in place, we are now in the position to compare the performance of the one- and two-sided Pauli checks with all-to-all qubit connectivity. As the payload, we randomly sample a 10-qubit Clifford payload circuit that is implemented over LNN~\cite{bravyi2021hadamard}. The circuit contains 274 {\sc{cnot}} gates and, considering only two-qubit gates, has a circuit depth of 88. With the {\sc{cnot}} noise scaled by a factor of 0.3 and ignoring idle time within the payload circuit, the simulated payload error rate using 10 million samples is around 81.95\% when considering all Paulis errors and 76.21\% when disregarding Pauli-Z errors. The corresponding bounds evaluated using the model given in Section~\ref{Sec:PayloadErrorBounds} are 79.82--81.96\% and 72.77--76.66\%, respectively.

We scale the noise on delay gates, which represent idle time, by factors 0.2, 0.6, and 1.0 and evaluate the performance using both simulation and the performance model described in Section~\ref{Sec:PerformanceModel}, with error rate set to the modeled upper bound. In both cases we sample up to 20 random Pauli checks in such a way that per set of checks, each qubit has a non-identity Pauli in at least one check. Using 10 million shots per problem instance, we obtain the results shown in Figure~\ref{Fig:Model003_10Q}.
With limited noise on qubit idle time, we see from
Figure~\ref{Fig:Model003_10Q}(a) that all three methods (two-sided checks with and without flags and one-sided checks) significantly reduce the logical error rate with added checks. Performance stabilizes around ten checks; using additional checks does not lower the logical error rate and only reduces the post-selection. Among the three methods, the one-sided check has both the lowest logical error rate and the highest post-selection rate.  A two-sided check with flags has a lower error rate compared to a two-sided check without flags, at the cost of a much lower post-selection rate. This trend becomes even clearer when increasing the noise on idle qubits in Figures~\ref{Fig:Model003_10Q}(b) and~\ref{Fig:Model003_10Q}(c). With increased noise between the two segments of the checks, the logical error rate increases and eventually causes the flags and checks to deteriorate performance rather than improve it. The performance of one-sided checks is largely unaffected by these changes in the noise, as a result of the limited idle time present in the circuits for one-sided checks. The modeled logical error and post-selection rates, indicated by the solid lines in Figure~\ref{Fig:Model003_10Q}, closely match those obtained using the simulation. The computational time for modeling is only a small fraction of the time needed for the simulation, especially for settings where the post-selection rate is low and many samples are required for accurate estimates.

\begin{figure}
\centering
\begin{tabular}{ccc}
\includegraphics[width=0.3\textwidth]{./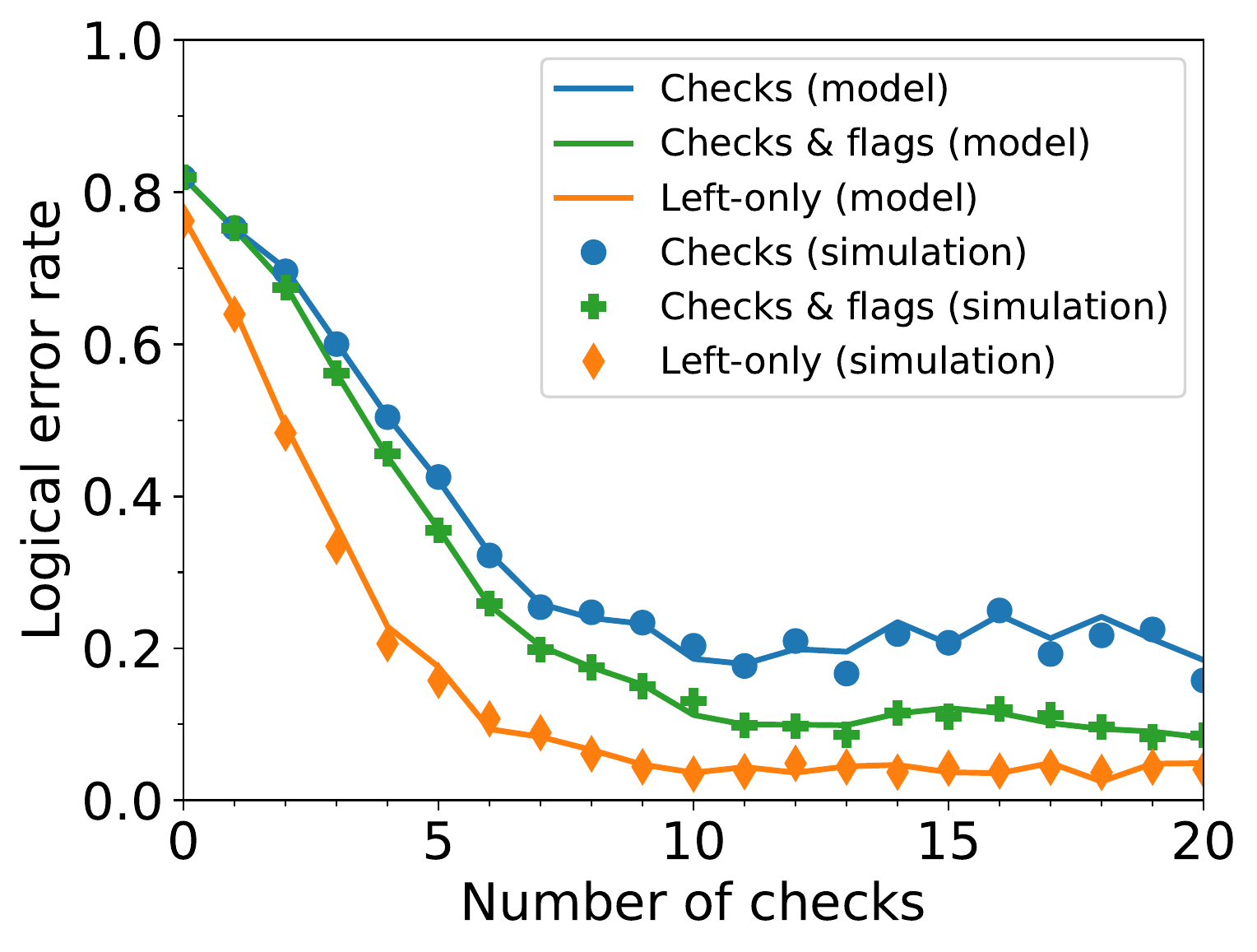}&
\includegraphics[width=0.3\textwidth]{./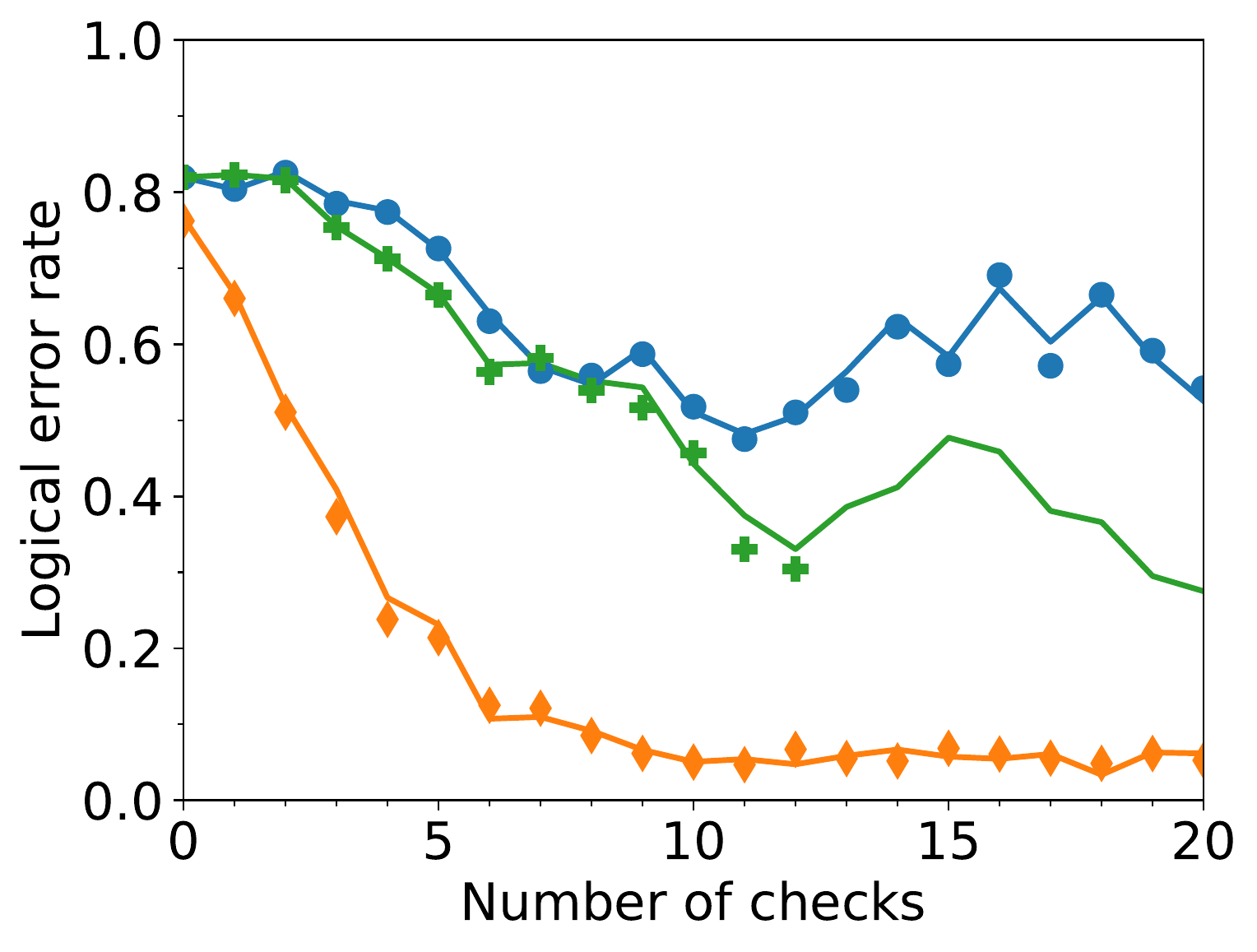}&
\includegraphics[width=0.3\textwidth]{./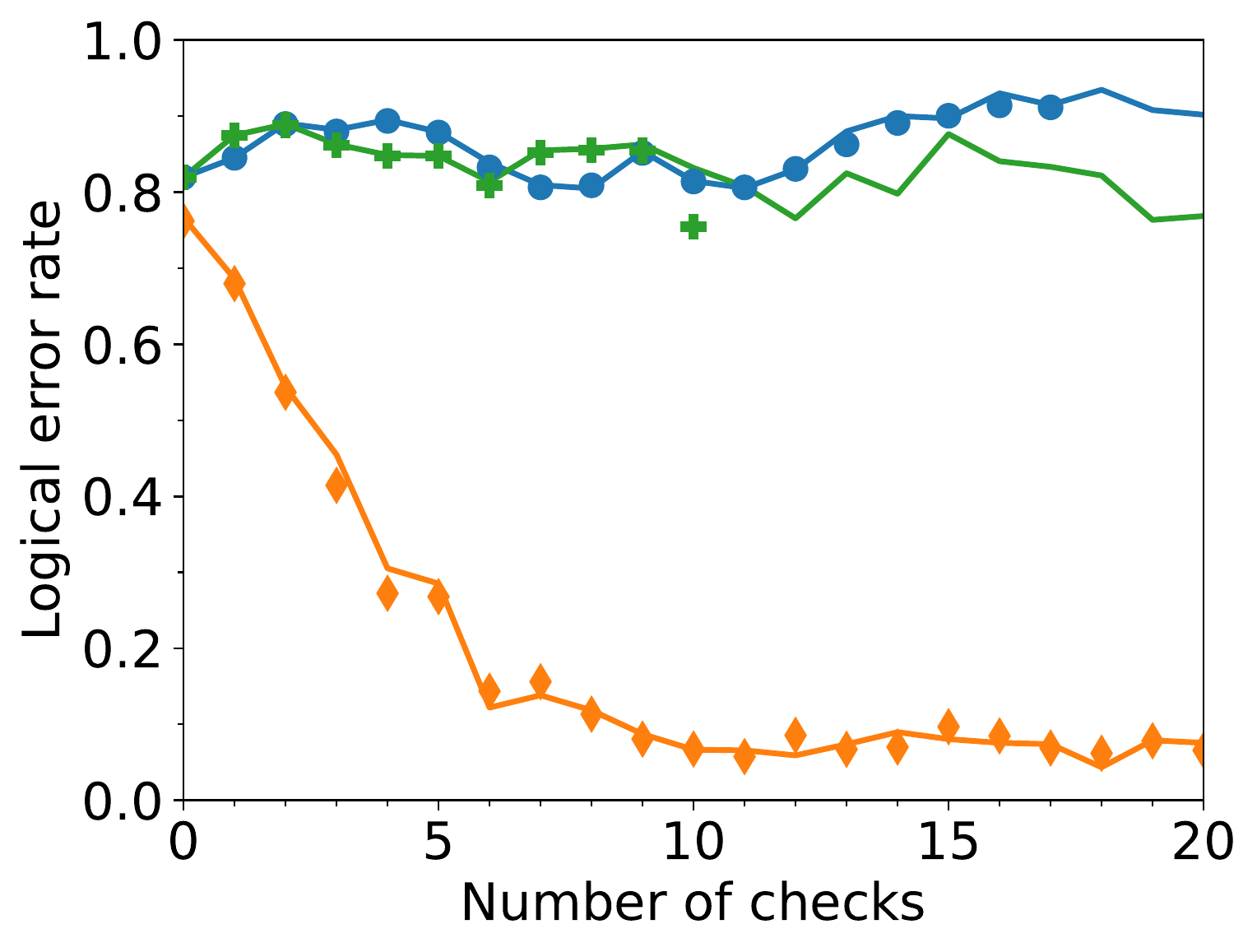}\\[3pt]
\includegraphics[width=0.3\textwidth]{./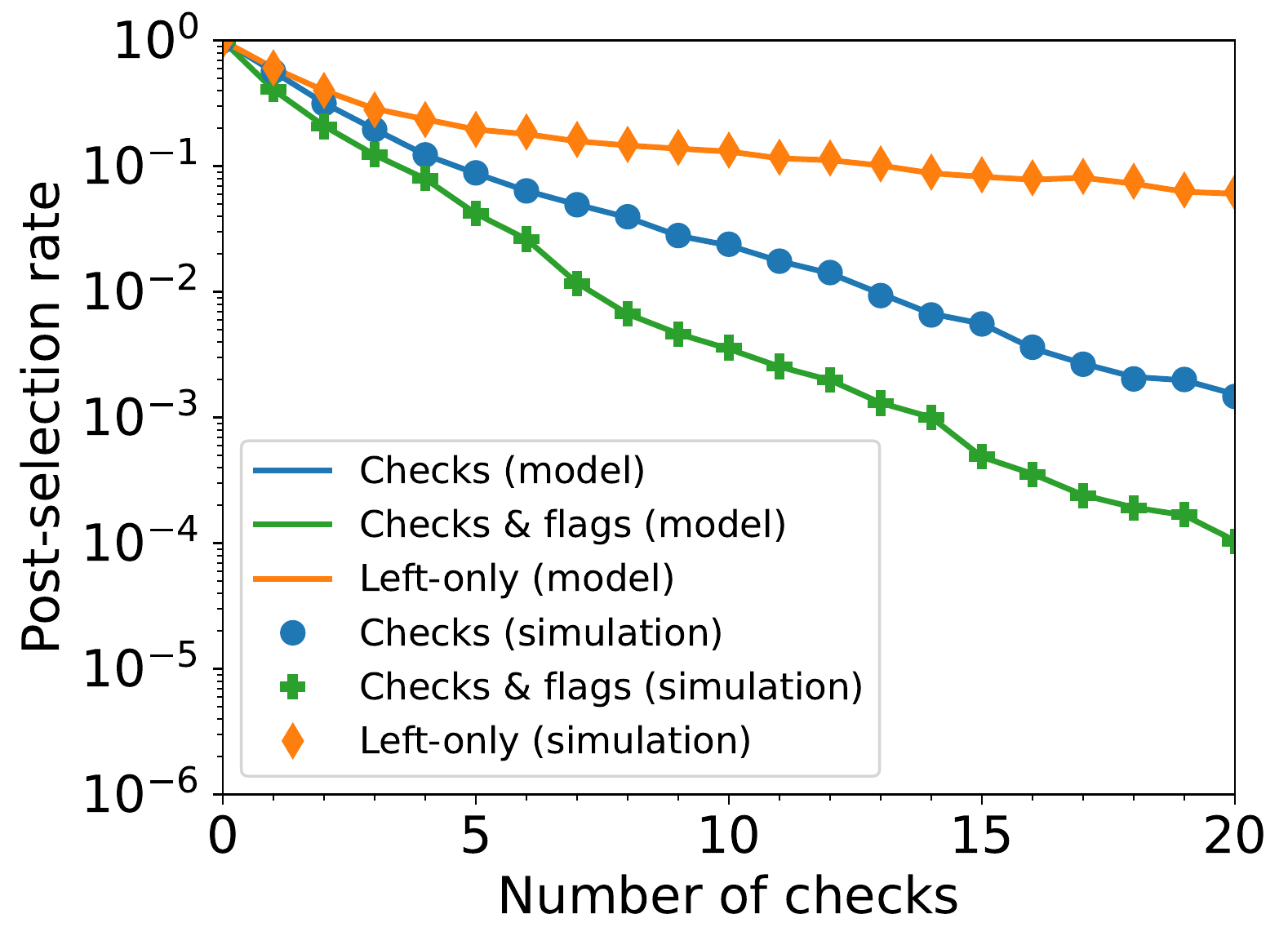}&
\includegraphics[width=0.3\textwidth]{./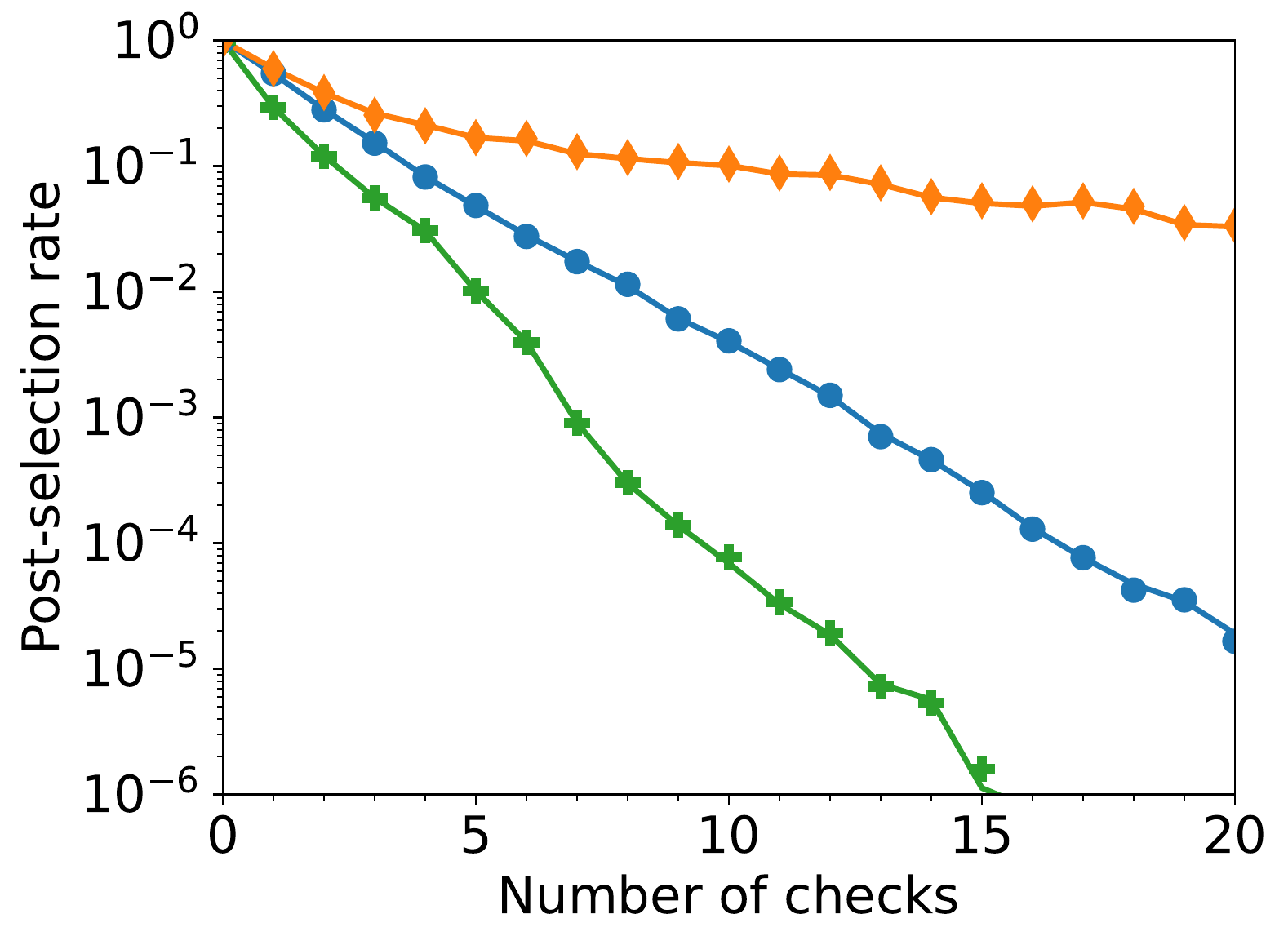}&
\includegraphics[width=0.3\textwidth]{./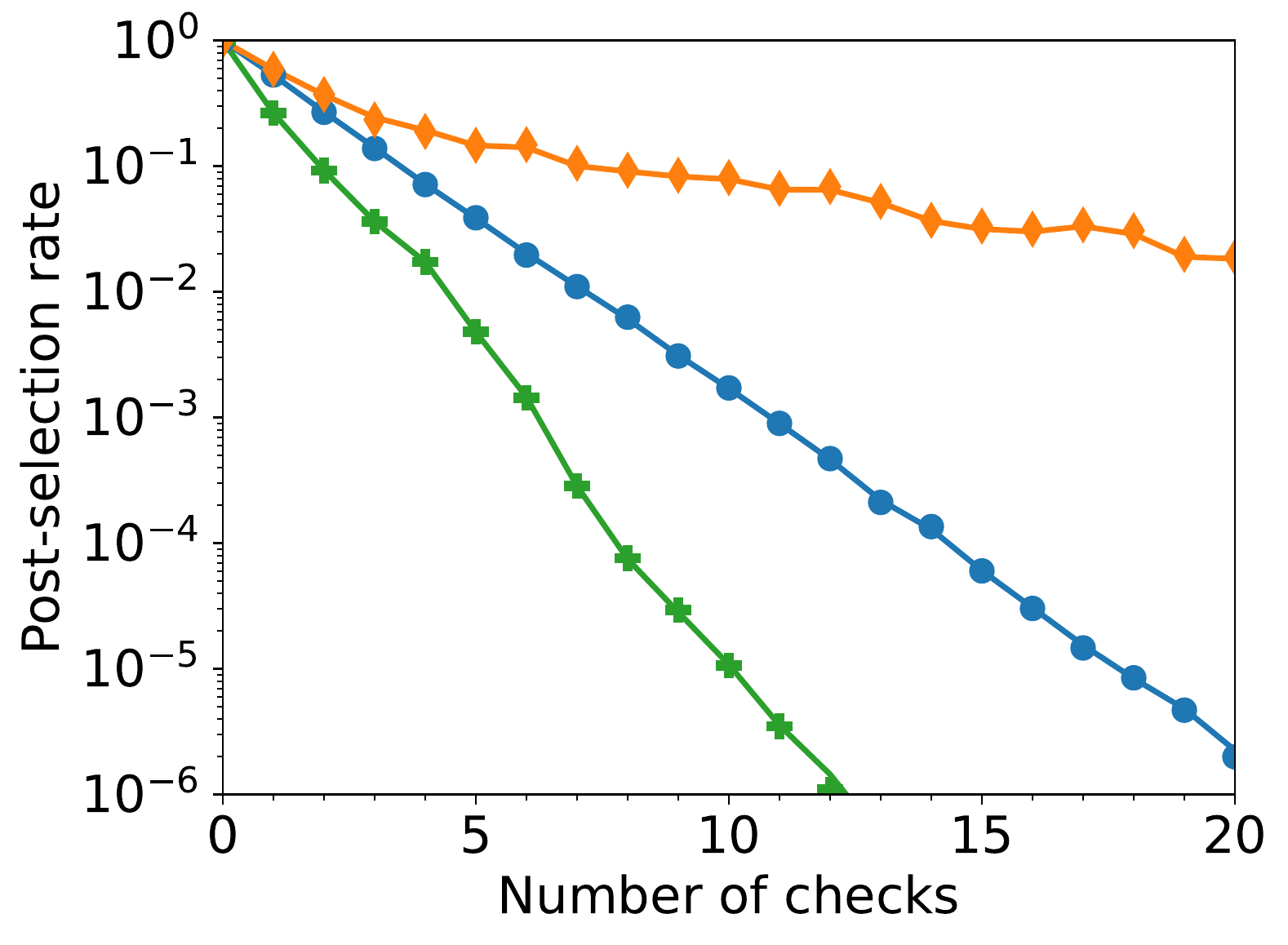}\\[3pt]
({\bf{a}}) & ({\bf{b}}) & ({\bf{c}})
\end{tabular}
\caption{The effects of scaling the $T_1$ and $T_2$ values by factors (a) 0.2, (b) 0.6, and (c) 1.0, on simulated and modeled logical error (top) and post-selection rates (bottom) for a randomly sampled Clifford payload circuit on 10 qubits with all-to-all connectivity. Simulation data points are based on 10 million samples each. Simulated logical error rates with fewer than ten post-selected samples are omitted.}\label{Fig:Model003_10Q}
\end{figure}

As a second setup for all-to-all connectivity we consider a randomly sampled 30-qubit Clifford operator. The operator assumes only LNN connectivity and the resulting circuit has 2897 {\sc{cnot}} gates and a {\sc{cnot}}-depth of 328. This depth could be optimized further using recent constructions described in~\cite{bravyi2021hadamard,maslov2022cnot}, but we did not pursue this.  Given the large number of qubits involved we scale down the {\sc{cnot}} and delay noise by a factor of 10, and multiply the readout measurement error by a factor of 0.3. This gives an estimated payload error rate of 99.63\%, which reduces to 98.94\% when ignoring all Pauli-Z errors. The corresponding bounds obtained from the model are 96.75--99.79\% and 92.93--99.37\%, respectively. We simulate up to 20 checks and plot the resulting logical error and post-selection rates in Figure~\ref{Fig:Model003_30Q} along with the modeled results. In this case, the modeled results are not as close to the simulated results as before. However, replacing the modeled upper bound on the payload error rate with the sampled one (zero checks) again yields quite a good agreement. Using 20 one-sided checks reduces the logical error rate from 98.95\% down to 4.44\%, with a post-selection rate of 0.23\%. Two-sided checks have a much lower post-selection rate, especially when flags are added. The best logical error rate obtained without flags is around 31\% when using 29 checks, and 33\% with 15 checks and flags. These numbers cannot be expected to be very accurate due to the low post-selection rates and should therefore be taken as rough estimates. For small numbers of checks, we see that adding flag qubits helps to reduce the logical error rate at the cost of a modest decrease in the post-selection rate.

\begin{figure}[!t]
\centering
\begin{tabular}{cc}
\includegraphics[width=0.45\textwidth]{./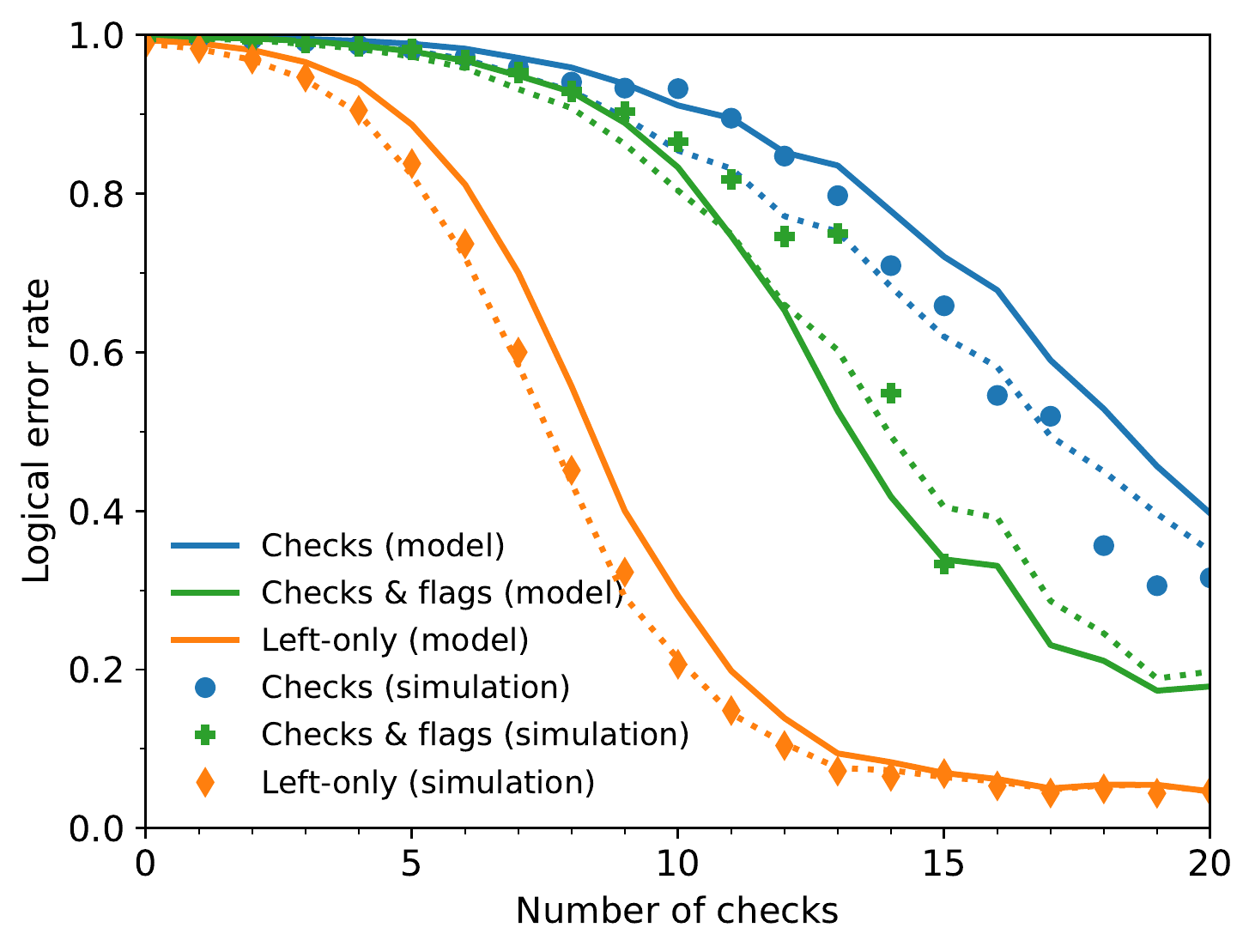}&
\includegraphics[width=0.45\textwidth]{./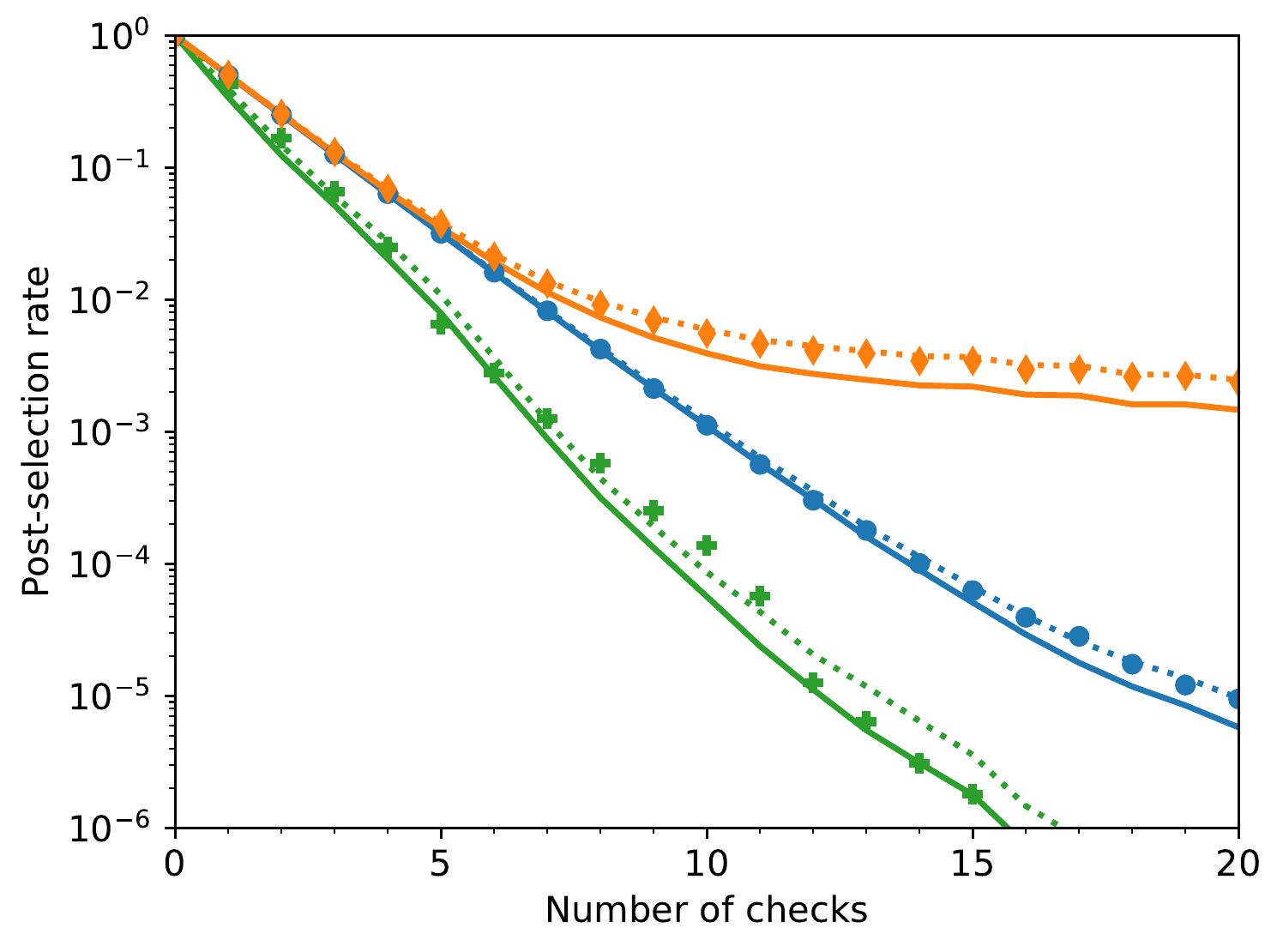}\\
({\bf{a}}) & ({\bf{b}})
\end{tabular}
\caption{The (a) logical error rate and (b) post-selection rate using all-to-all checks on a 30-qubit random Clifford circuit with LNN implementation. Noise on {\sc{cnot}} and delay gates are reduced by a factor of 10, the measurement error rate is multiplied by a factor $0.1$. The markers indicate simulated results based on 10 million samples. Results with fewer than 10 post-selected samples are omitted.
The solid and dotted lines represent the modeled results using payload error rates given respectively by the modeled upper bound and the estimate obtained from simulation without checks.}\label{Fig:Model003_30Q}
\end{figure}

\begin{figure}[!t]
\centering
\begin{tabular}{ccc}
\includegraphics[width=0.3\textwidth]{./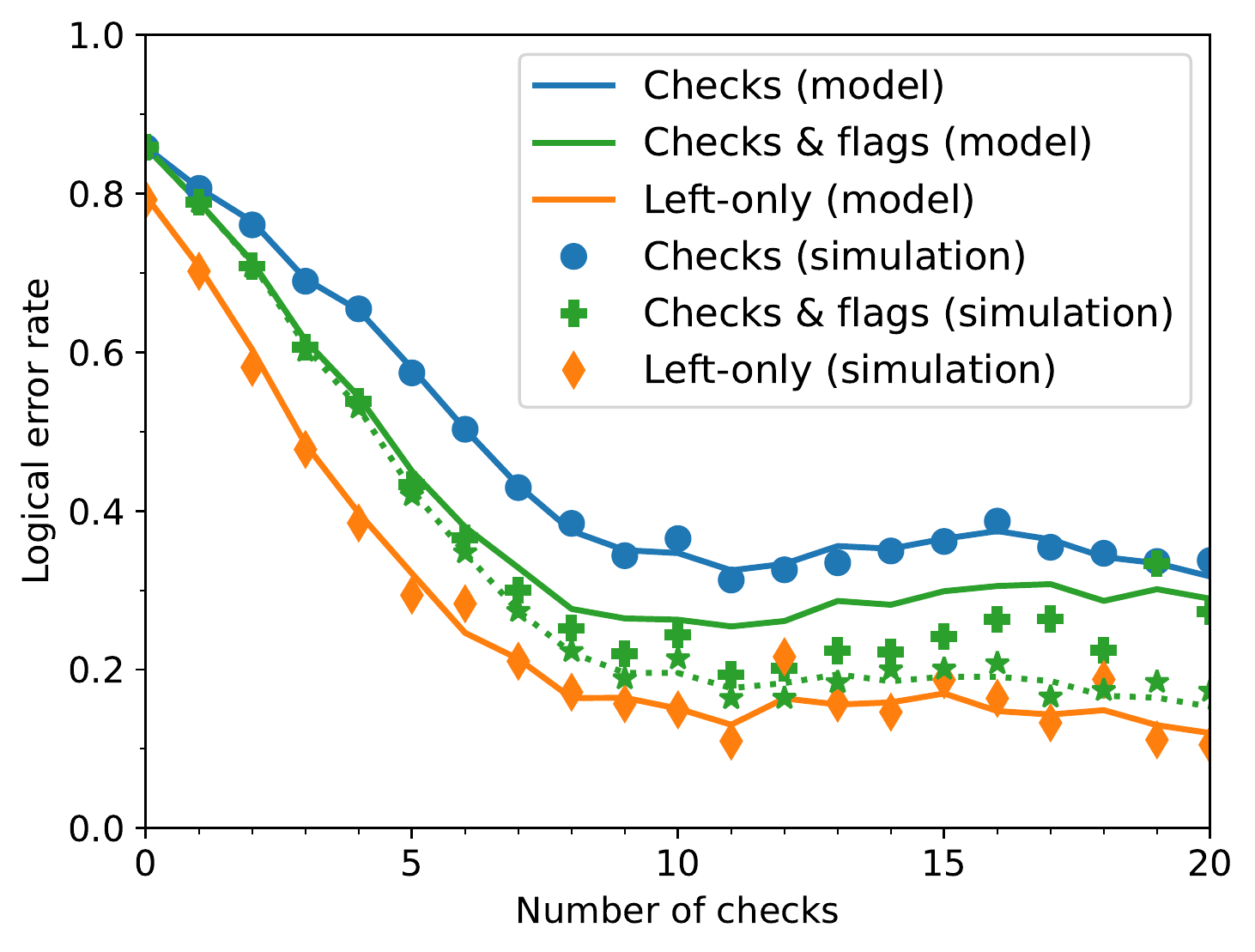}&
\includegraphics[width=0.3\textwidth]{./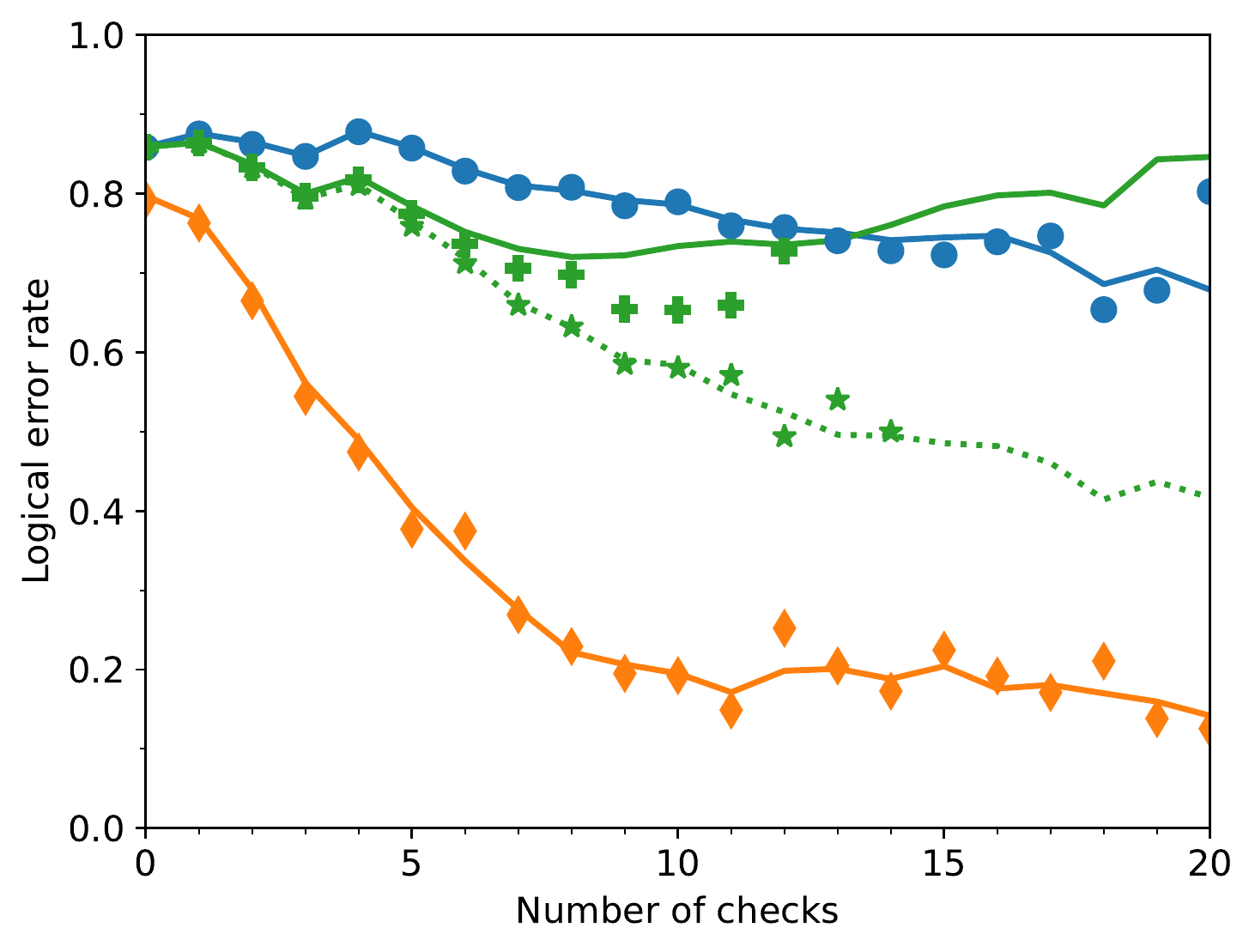}&
\includegraphics[width=0.3\textwidth]{./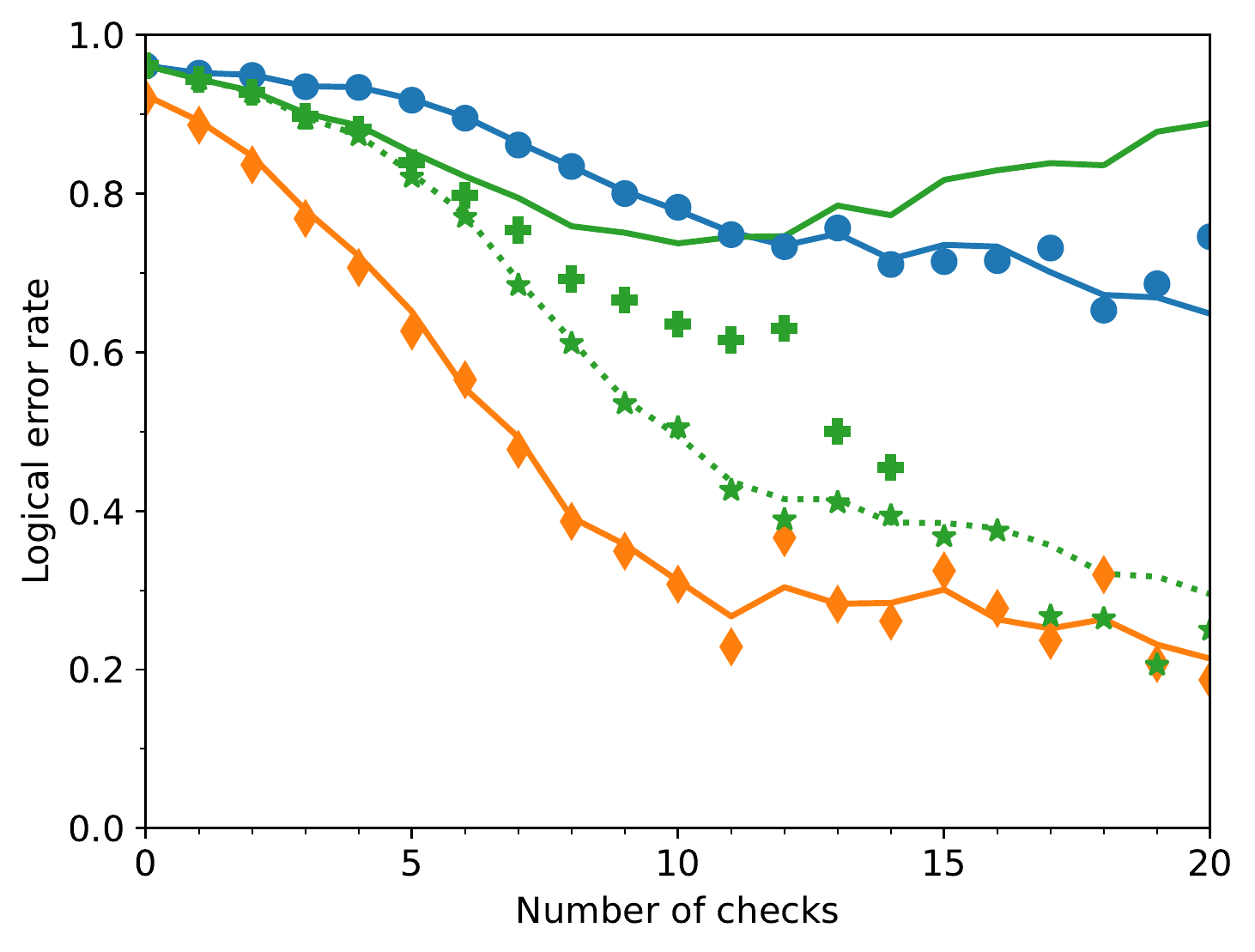}\\
\includegraphics[width=0.3\textwidth]{./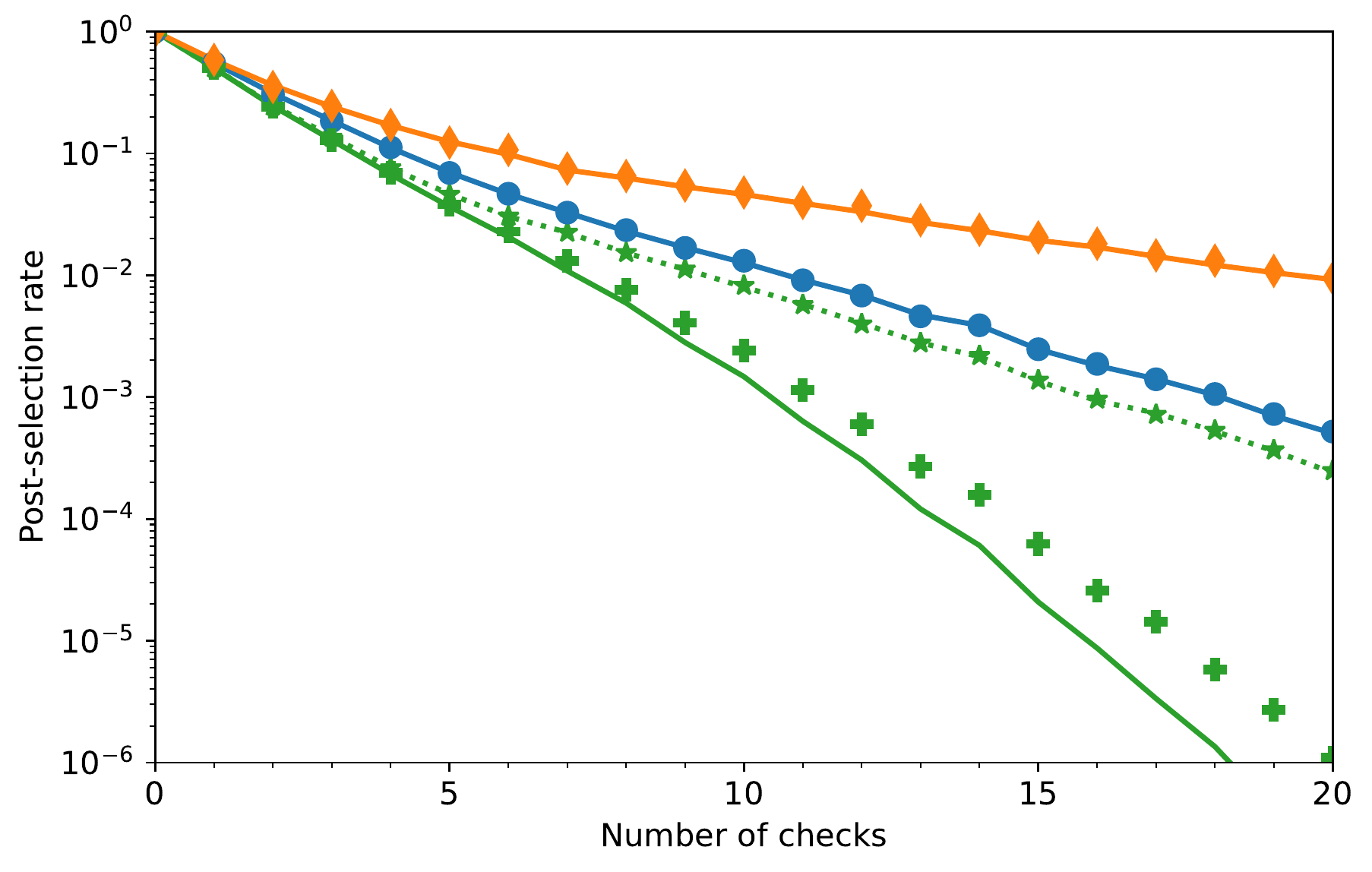}&
\includegraphics[width=0.3\textwidth]{./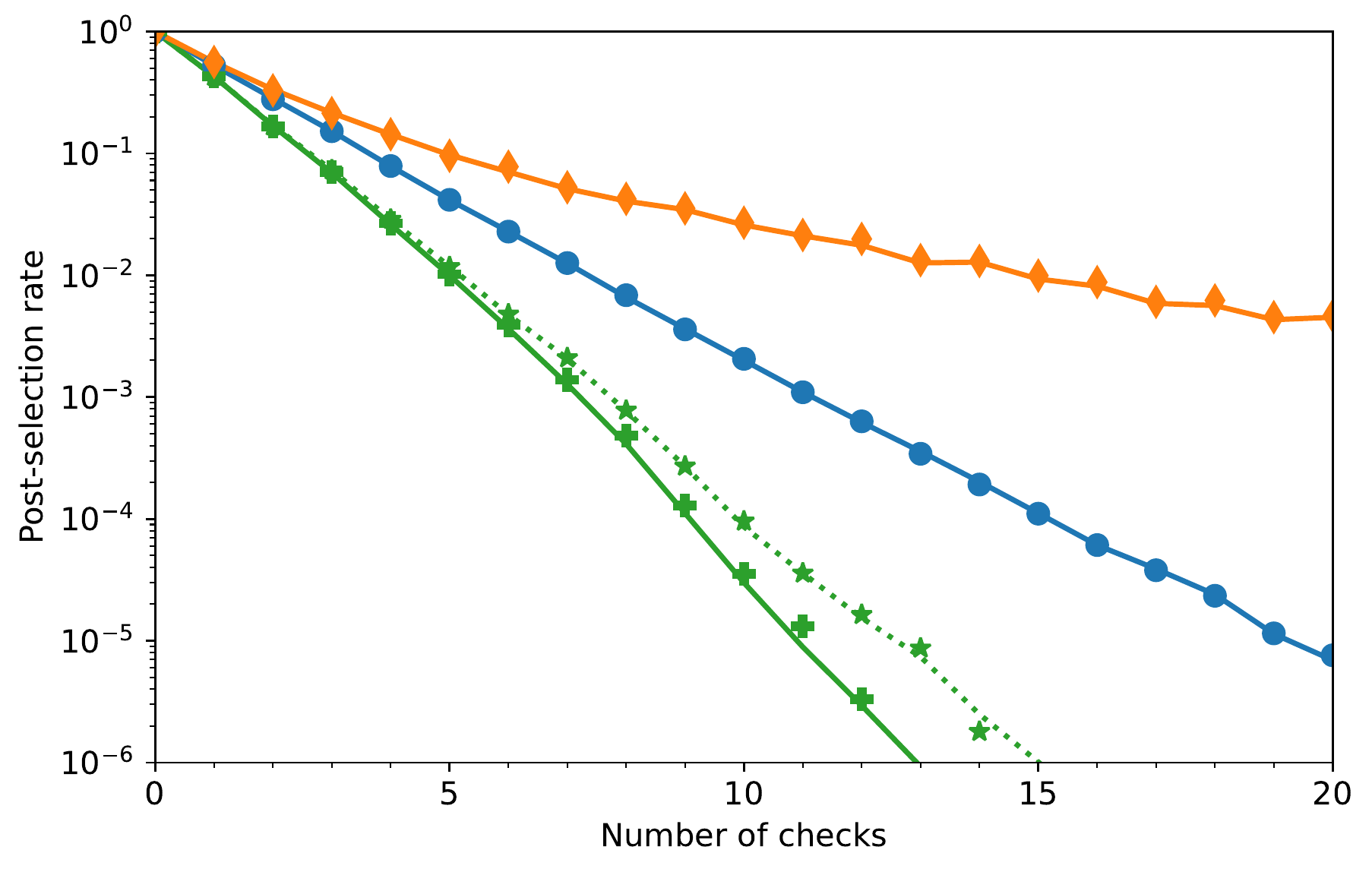}&
\includegraphics[width=0.3\textwidth]{./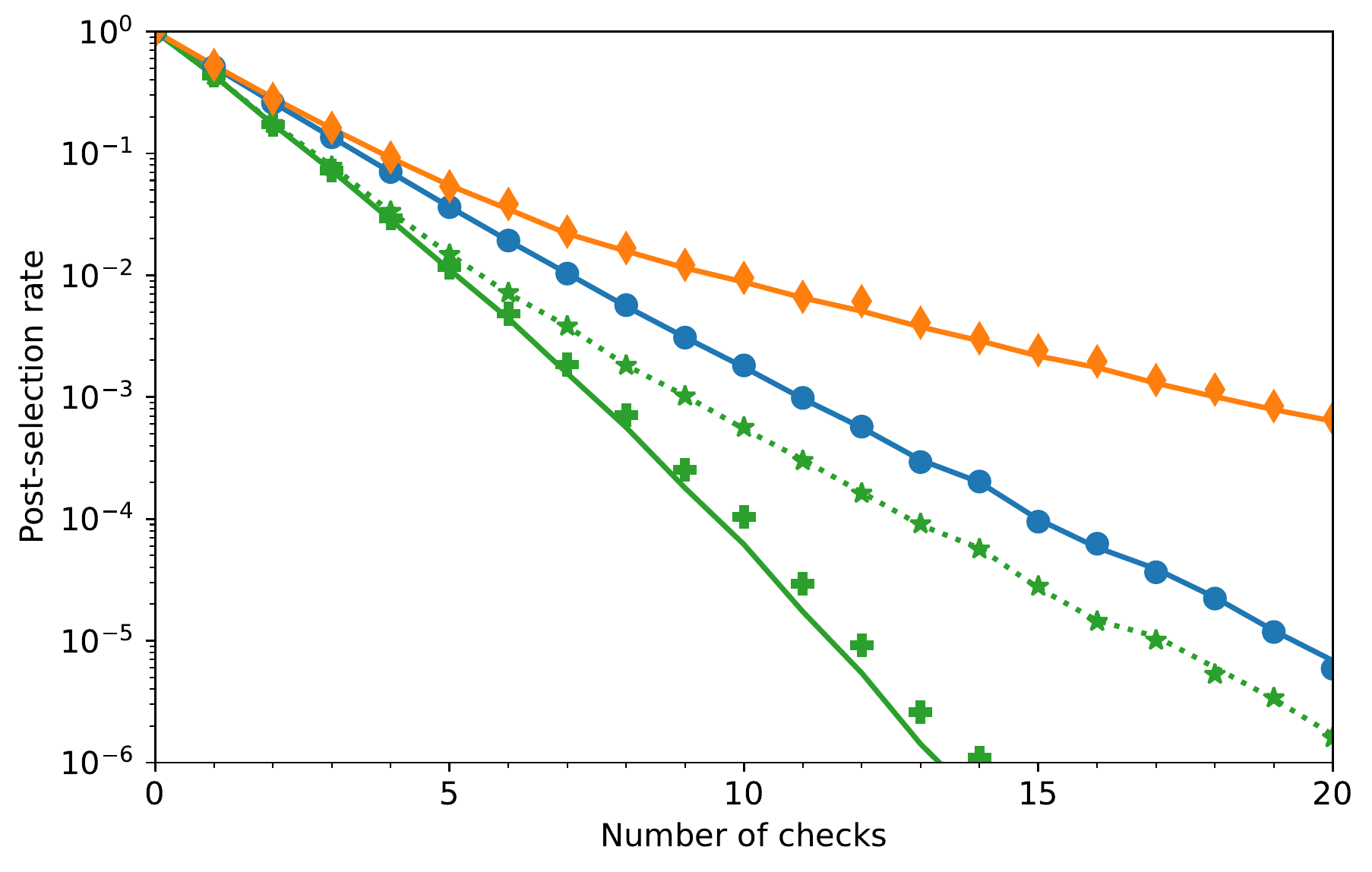}\\
({\bf{a}}) & ({\bf{b}}) & ({\bf{c}})
\end{tabular}
\caption{The logical error rates (top) and post-selection rates (bottom) for a random 10-qubit Clifford circuit with up to 20 checks on an LNN topology. The {\sc{cnot}} and delay noise terms are scaled respectively by (a) 0.3 and 0, (b) 0.3 and 0.2, and (c) 0.5 and 0. The markers represent simulated data based on 10 million samples, solid lines show the modeled results using the modeled upper bound on the payload error rate. The dotted line and asterisks show the modeled and simulated rates when using flagged checks when no noise is present on the \swapgate\ gates used to implement the flags. Data points with fewer than 10 post-selected samples are omitted.}\label{Fig:Model002_10Q}
\end{figure}

\subsubsection{Linear nearest neighbor connectivity}

The qubit connectivity on \ibmwashington\ is restricted to the heavy-hex pattern, which means that any circuit with {\sc{cnot}} gates on arbitrary qubit pairs first needs to be expressed using only those gates that match the actual qubit connectivity. Doing so may require a potentially large number of $\swapgate$ operations, and we therefore proposed the optimized check implementation in Section~\ref{Sec:LNNCircuit}. For our next setup, we find a linear chain of 50 qubits on \ibmwashington, for which the gate noise is available directly from the noise model of the backend, provided by Qiskit Aer. We again sample a random 10-qubit Clifford operator and consider the performance of coherent Pauli checks with increasing numbers of checks. The results obtained using simulation and modeling for this setting are plotted in Figure~\ref{Fig:Model002_10Q} for various levels of noise.

For the flagged setting, we see that the modeled results start to deviate from the simulated results as the number of checks increases. As mentioned in Section~\ref{Sec:PerformanceModel}, this is likely due to the assumption of independent noise on the qubits for each {\sc{swap}} gate used in the LNN implementation of the flags. To verify this, we also ran the simulation and modeling without any noise on the {\sc{swap}} gates, aside from any single-qubit delay noise that follows the gates. Under this assumption the noise becomes separable and the model assumptions hold. This is reflected in the results of the two methods which now closely match, as seen by the dotted line and asterisks in Figure~\ref{Fig:Model002_10Q}.

\section{Experiments}

So far, we have restricted the performance evaluation of coherent Pauli checks using simulations and modeling. However, the real purpose of coherent Pauli checks is, of course, to improve the logical error rate of payload circuits that are run on actual noisy quantum processors.

\begin{figure}
\centering
\begin{tabular}{c|c}
\raisebox{-0.5\height}{\includegraphics[width=0.32\textwidth]{./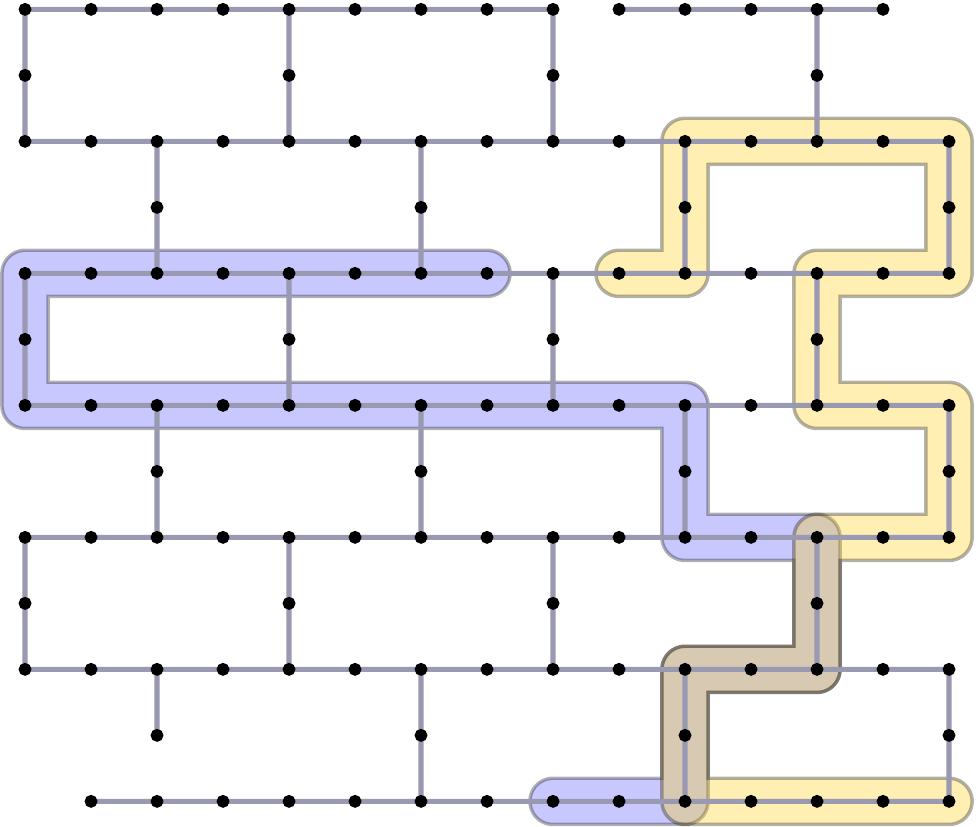}}\ \ \ %
&
\ \ \ {\small\begin{tabular}{lp{3.25cm}p{3.25cm}}
& Chain 1 & chain 2\\
\hline
Qubits & 
{\small{28, 29, 30, 31, 32, 35, 36, 46, 47, 49, 50, 51, 55, 68, 69, 70, 74, 87, 88, 89, 93, 104, 105, 106, 111, 122, 123, 124, 125, 126}}
&
{\small{37, 38, 39, 40, 41, 42, 43, 44, 52, 56, 57, 58, 59, 60, 61, 62, 63, 64, 65, 66, 73, 85, 86, 87, 93, 104, 105, 106, 111, 120, 121, 122}}
\\[3pt]
\hline
$T_1$ ($\mu$s)
& 32.45 -- 160.27 (94.96)
& 51.44 -- 147.63 (93.92)
\\
$T_2$ ($\mu$s)
& 3.23 -- 219.60 (96.80)
& 15.76 -- 249.81 (101.53)
\\
Meas.~error (\%)
& 0.17 -- 3.26 (0.91)
& 0.17 --17.96 (1.86)
\\
{\sc{cnot}} fidelity (\%)
& 89.4 -- 99.4 (98.4)
& 95.5 -- 99.4 (98.5)
\\
{\sc{cnot}} time (ns)
& 313 -- 1088 (517)
& 356 -- 981 (527)
\\
{\sc{sx}}, {\sc{x}} time (ns)
& 35.6
& 35.6
\\
\hline
\end{tabular}}
\end{tabular}
\medskip
\caption{The topology of \ibmwashington\ with qubits 0--126 labeled left-to-right from top to bottom along with selected qubit chains 1 (yellow) and 2 (blue). The table on the right lists the qubit indices of the chains and provides a snapshot of the device properties on the selected qubits, providing the minimum, maximum, the average values (in brackets). The exact $T_1$ and $T_2$ times and error rates vary over time and are therefore only illustrative.}\label{Fig:IBM_Washington}
\end{figure}

\subsection{Pauli checks}

The heavy-hex topology of \ibmwashington, illustrated in Figure~\ref{Fig:IBM_Washington}, means that we have to focus on the linear nearest neighbor implementation of Pauli checks.
For this we select two chains of qubits with high-fidelity readout and {\sc{cnot}} gates, as shown in Figure~\ref{Fig:IBM_Washington}. We then define a simple payload circuit on even numbers of qubits, consisting of 48 successive {\sc{cnot}} gates on alternating qubits pairs, such that all qubits are either the control or the target of a {\sc{cnot}} gate.
Starting with one-sided checks, we add up to 15 randomly sampled checks to the payload and then accumulate 250,000 shots for each circuit, starting from the zero initial state. We repeat this process 10 times, each time with newly sampled checks, and plot the resulting logical error and post-selection rates in Figure~\ref{Fig:Exp029B}.
For both qubit chains we see a marked reduction in the logical error rate on payload circuits on up to 10 qubits, summarized by the table in Figure~\ref{Fig:Exp029B}(e). Increasing the number of checks beyond 15 may still further reduce the logical error rate for our payloads of size 6, 8, and 10 qubits, but the number of shots required to accurately show this may become prohibitive. As seen in the plots, the variance in the logical error and post-selection rates estimates grows as the number of checks increases and the rates decrease. As shown in Figure~\ref{Fig:Exp029B}, the post-selection rates for both qubit chains follow very similar trajectories.

\begin{figure}[!tb]
\centering
\begin{tabular}{cc}
\includegraphics[width=0.45\textwidth]{./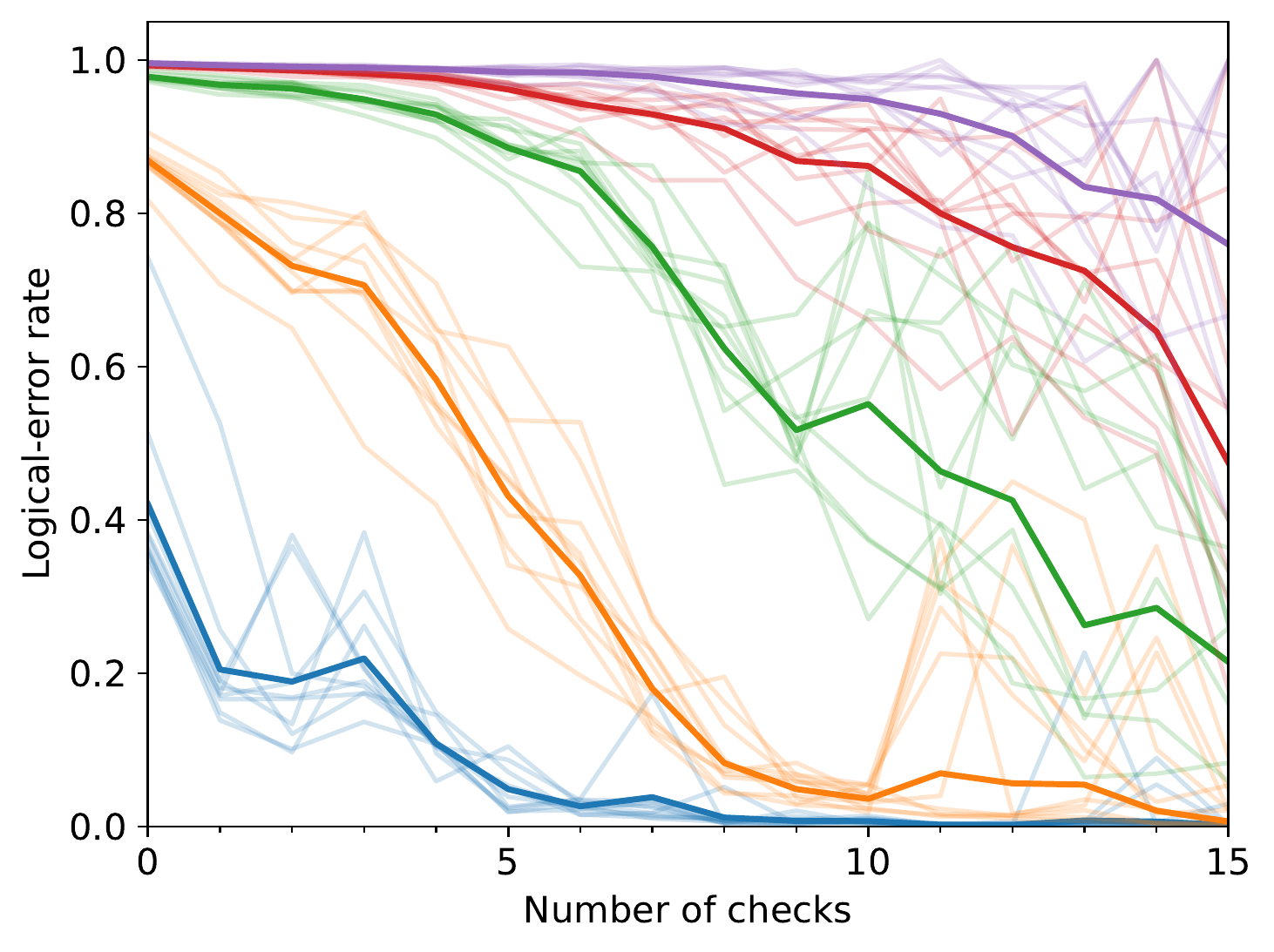}&
\includegraphics[width=0.45\textwidth]{./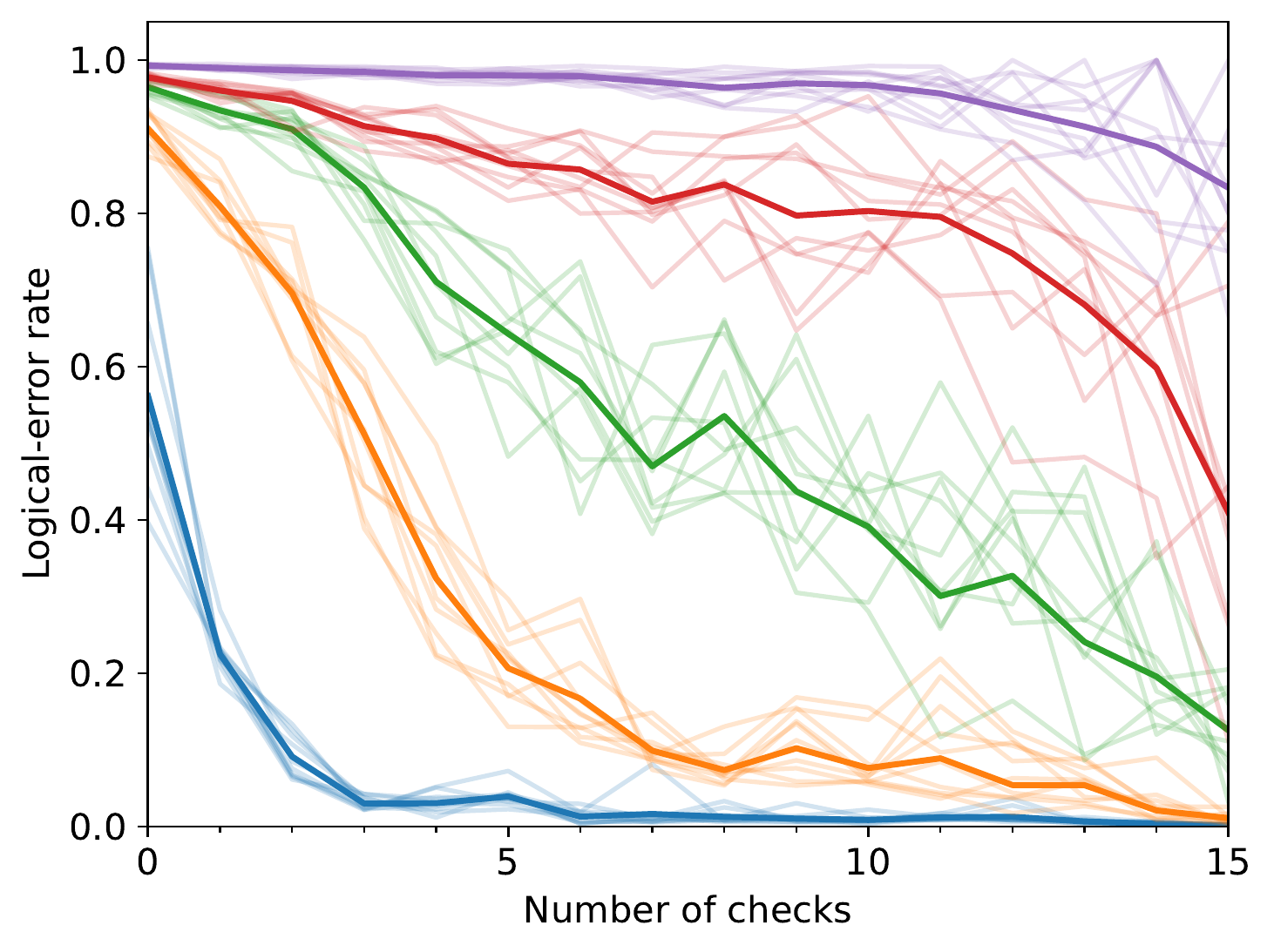}\\[-4pt]
({\bf{a}}) & ({\bf{b}}) \\[3pt]
\includegraphics[width=0.45\textwidth]{./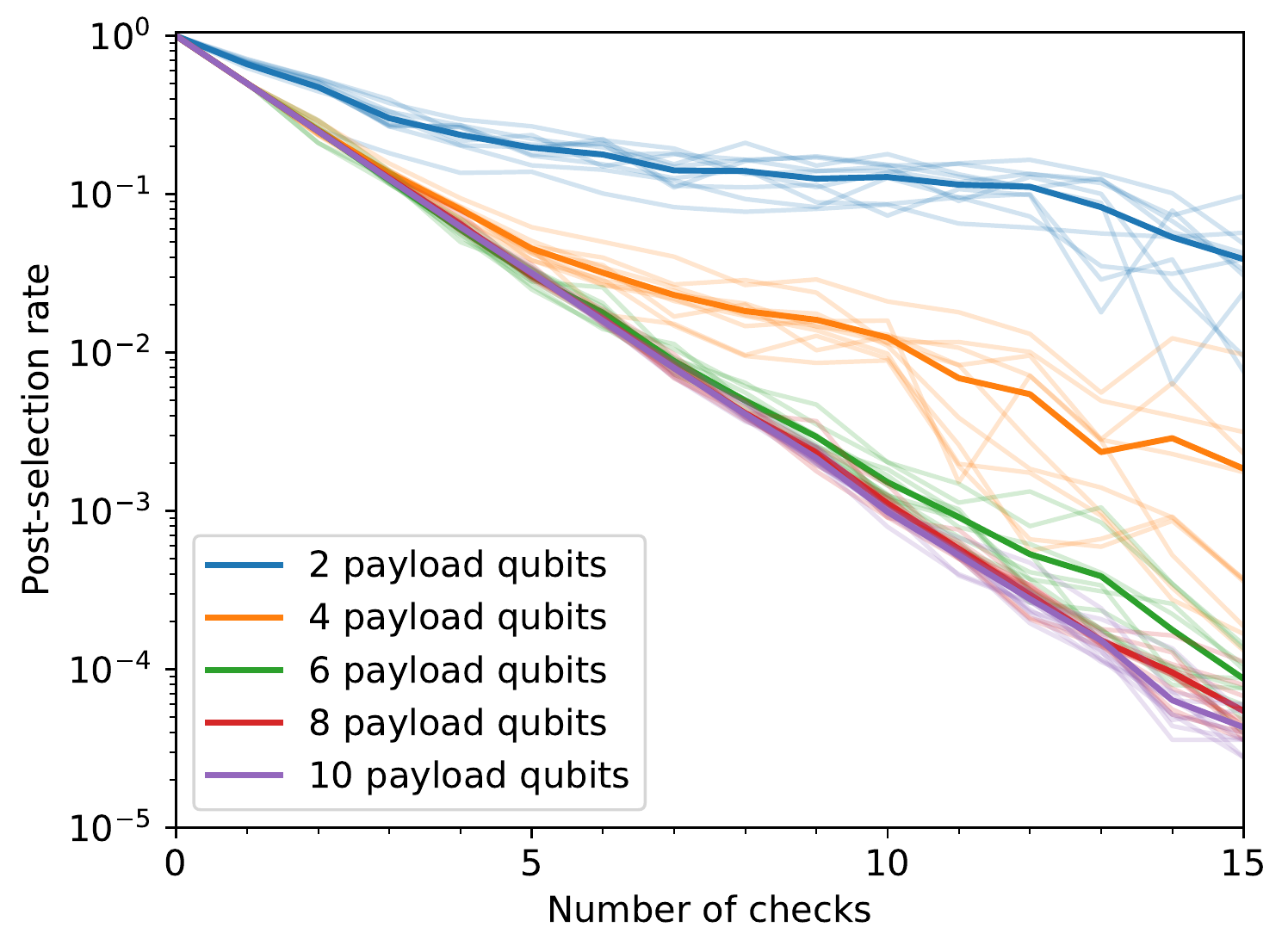}&
\includegraphics[width=0.45\textwidth]{./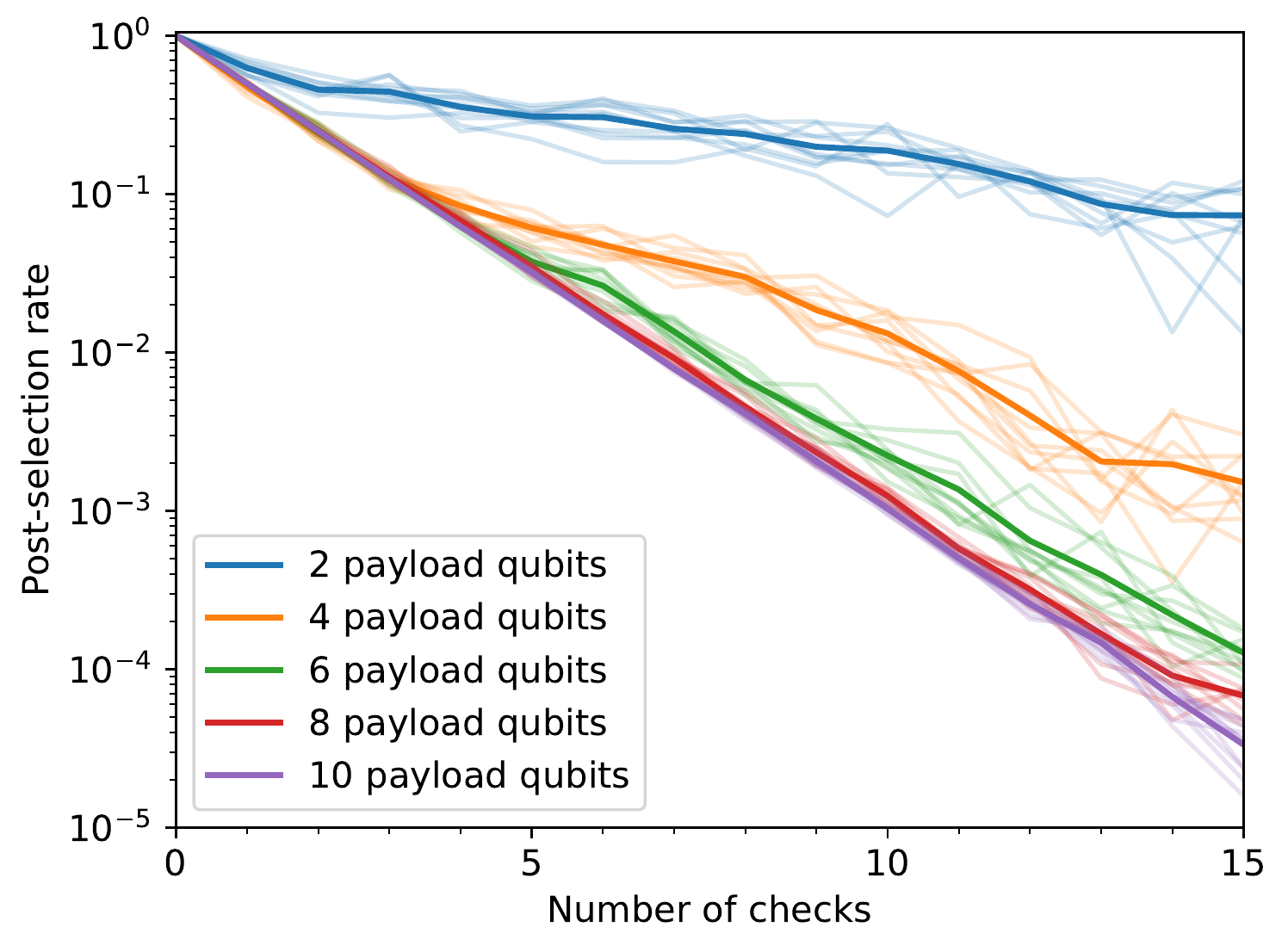}\\[-4pt]
({\bf{c}}) & ({\bf{d}})\\[10pt]
\multicolumn{2}{c}{
\small
\begin{tabular}{lrrrrrrrrrrr}
&\multicolumn{5}{c}{Chain 1} & & \multicolumn{5}{c}{Chain 2}\\ 
\cline{2-6}
\cline{8-12}
\\[-10pt]
Payload qubits &  2&4&6&8&10 & & 2&4&6&8&10  \\
\hline
Logical error rate (\%), no checks
&  42.02&86.88&97.82&99.32&99.59 &
& 56.17&91.06&96.46&97.75&99.28  \\
Logical error rate (\%), 15 checks
&  0.20&0.67&21.46&47.45&75.93 &
&  0.15&1.10&12.54&40.94&83.33  \\
Post-selection rate (\%), 15 checks
&  3.89& 0.18& 0.009& 0.005& 0.004 &
 & 7.35& 0.15& 0.013& 0.007& 0.003\\
\hline
\end{tabular}}\\
\\[-10pt]
\multicolumn{2}{c}{({\bf{e}})}
\end{tabular}
\caption{Performance of one-sided Pauli checks on  \ibmwashington\ for payload circuits of 2 to 10 qubits, consisting of 48 repeated {\sc{cnot}} gates on alternating pairs of qubits, with (a,b) the logical error rates and (c,d) the post-selection rates for qubit chains 1 and 2. Each point on the faint curves is based on randomly sampled checks and represents the estimated rate based on 250,000 shots with zero initial state. The solid line represents the rate obtained by combining the shots from ten such instances (each of which is shown as a separate faint curves of the same color). The table in (e) lists the initial and final logical error rates and the final post-selection rates.}\label{Fig:Exp029B}
\end{figure}

While useful for performance evaluation, a payload consisting of repeated {\sc{cnot}} gates simply implements an identity operation and therefore has no use in practical applications. On the other hand, randomly sampled Clifford operators are a much more representative choice for the payload operator. However, for such operators it is generally difficult to determine whether an individual measured bit string on the data qubits is affected by errors or not. Take for instance the payload circuit that applies a Hadamard gate on each of the data qubits. In this case each bit string would represent a valid outcome, making it impossible to detect errors unless we look at the distribution, which quickly becomes intractable. As a trade-off, we therefore generate Clifford circuits that implement random qubit permutations. Although these circuits are typically shallower in depth than general Clifford circuits, they have the advantage that we can easily determine the desired outcome. To make the experiments more interesting, we entangle alternating pairs of neighboring data qubits by initializing them as EPR pairs. Combined with the permutation circuit, this creates possibly long-distance entanglement between certain pairs of qubits. We therefore know that measurements of qubits at permuted indices should match. In this case, we only miss errors that simultaneously affect both qubits of one or more pairs. For good overall performance we sample 10 random payload circuits and Pauli check instances, for up to 15 checks. The results for the two qubit chains are shown in Figure~\ref{Fig:Exp033B}. The logical error rate decreases for payloads over 4 and 6 qubits, with checked circuit {\sc{cnot}} depth reaching 72 and 90, respectively. The error rate for the two-qubit payload circuits was small to start with and after an initial increase for a small number of checks it returns to roughly the same value as that without checks. The logical error rate for the 8-qubit payloads gradually decreases and may require more checks to reduce significantly. However, given the small post-selection rate at that point, this may require a prohibitively large number of shots. The table shown in Figure~\ref{Fig:Exp033B}(d) provides a detailed overview of the setting and the results.

\begin{figure}
\centering
\begin{tabular}{ccc}
\includegraphics[width=0.305\textwidth]{./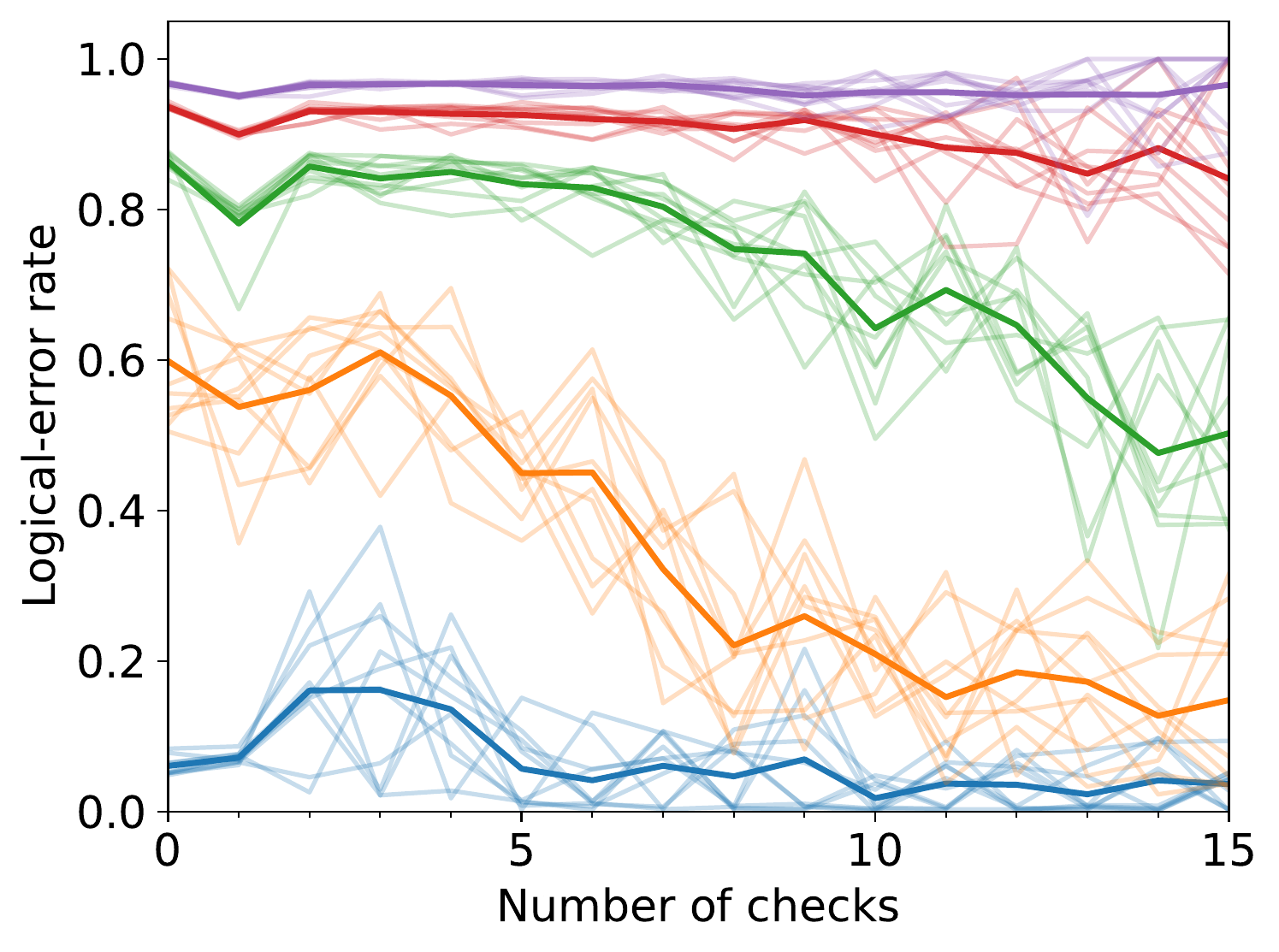}&
\includegraphics[width=0.305\textwidth]{./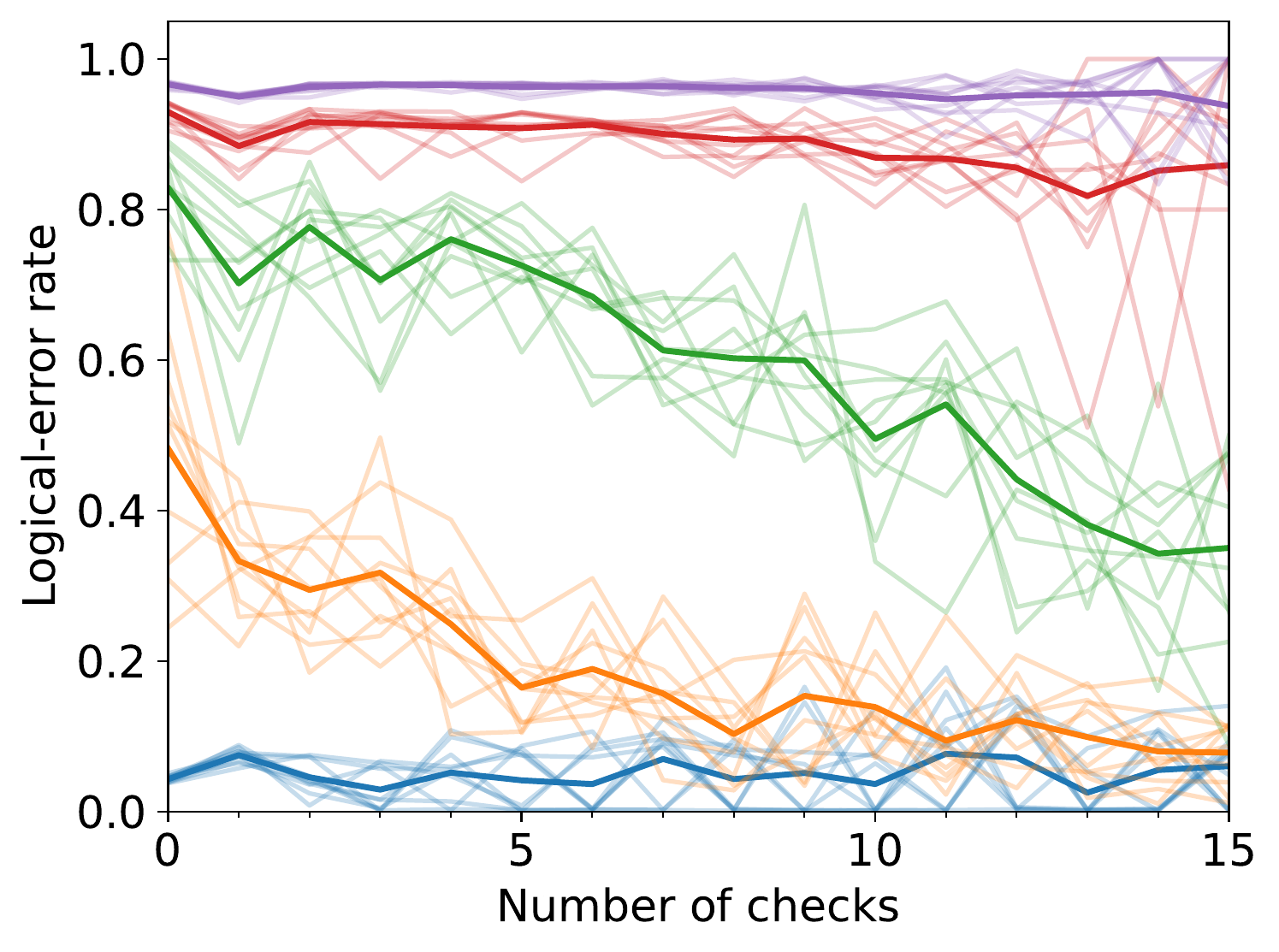} &
\includegraphics[width=0.305\textwidth]{./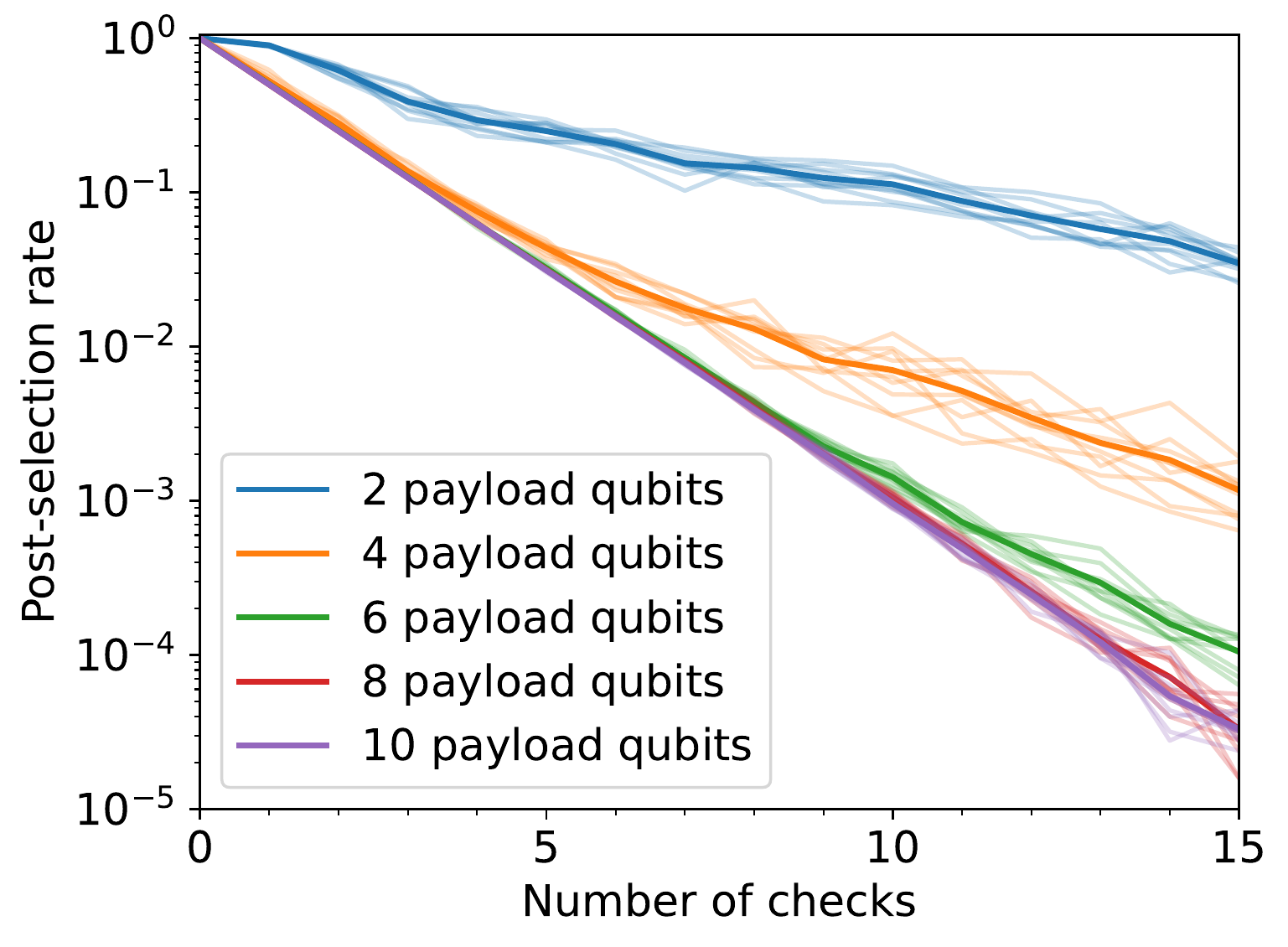}\\
({\bf{a}}) & ({\bf{b}}) & ({\bf{c}})\\
\\
\multicolumn{3}{c}{
\small
\begin{tabular}{lrrrrr}
\hline
\\[-9pt]
\multicolumn{6}{c}{Payload and circuit}\\[2pt]
\hline
Payload qubits & 2 & 4 & 6 & 8 & 10 \\
Payload {\sc{cnot}} depth
& 3--6 (5) & 10--22 (19) & 25--33 (29) & 27--46 (41) & 45--60 (52)\\
 Payload {\sc{cnot}} count
 & 3--6 (5) & 15--30 (24) & 48--64 (55) & 76--115 (101) & 131--173 (157)\\
Circuit {\sc{cnot}} depth, 15 checks
& 42--49 (45) & 60--72 (68) & 82--90 (85) & 87--111 (104) & 114--126 (120)\\
Circuit {\sc{cnot}} count, 15 checks
 & 72--79 (76) & 165--180 (174) & 269--295 (282) & 372--423 (405) & 516--555 (535)\\
\hline
\\[-9pt]
\multicolumn{6}{c}{Qubit chain 1}\\[2pt]
\hline
Logical error rate (\%), no checks  
& 6.08 & 59.88 & 86.36 & 93.66 & 96.73 \\
Logical error rate (\%), 15 checks
& 3.67 & 14.83 & 50.29 & 84.09 & 96.59 \\
Post-selection rate (\%), 15 checks
& 3.47 & 0.12 & 0.011 & 0.003 & 0.003 \\
\hline
\\[-9pt]
\multicolumn{6}{c}{Qubit chain 2}\\[2pt]
\hline
Logical error rate (\%), no checks
& 4.33 & 48.27 & 82.89 & 92.96 & 96.62\\
Logical error rate (\%), 15 checks
& 6.05 & 7.84 & 35.05 & 85.91 & 93.77\\
Post-selection rate (\%), 15 checks
& 5.90 & 0.29 & 0.015 & 0.004 & 0.003 \\
\hline
\end{tabular}}\\
\\[-8pt]
\multicolumn{3}{c}{({\bf{d}})}
\end{tabular}
\caption{The performance of Pauli checks on random 10-qubit permutation operators, with successive qubit pairs initialized as EPR pairs. For each combination of number of checks and qubits we independently sample 10 payload circuits and Pauli checks. We connect results with the same instance number by lines for better visibility of the fluctuations in the rates. The average logical error rate over the different instances is indicated by the dark solid line. Plots (a) and (b) show the logical error rate for qubit chains 1 and 2, plot (c) shows the post-selection rate for the first qubit chain. The results for the second qubit chain are overall slightly better but otherwise similar and are omitted.}\label{Fig:Exp033B}
\end{figure}

We now look at the Pauli check performance on randomly sampled, but otherwise fixed permutations on 4, 6, and 8 qubits. We randomly sample four series of checks and acquire 250,000 shots for each checked circuit instance. The resulting logical error rates, along with the average obtained by summing all post-selection and success counts in Figure~\ref{Fig:Exp48B_50B}(a). The average logical error rates reduce from 27.20\% to 1.86\% on 4 qubits, from 84.23\% to 38.61\% on 4 qubits, and from 91.74\% to 83.78\% on 8 qubits. For comparison, we also ran two-sided checks with the right Pauli-Z checks implemented as part of the quantum circuit and plot the results in Figure~\ref{Fig:Exp48B_50B}(b). As with the simulated and modeled results in Figure~\ref{Fig:Model003_30Q}, we see that the logical error rates for the two-sided checks deteriorate with increasing numbers of checks. Given that the one- and two-sided experiments were run several days apart, we ascribe the slight differences in the error rates without checks to automatic recalibration of the gates and gradual changes in the noise levels. In Figure~\ref{Fig:Exp48B_50B}(c) we show the post-selection rates for both settings.

\begin{figure}
\centering
\begin{tabular}{ccc}
\includegraphics[width=0.30\textwidth]{./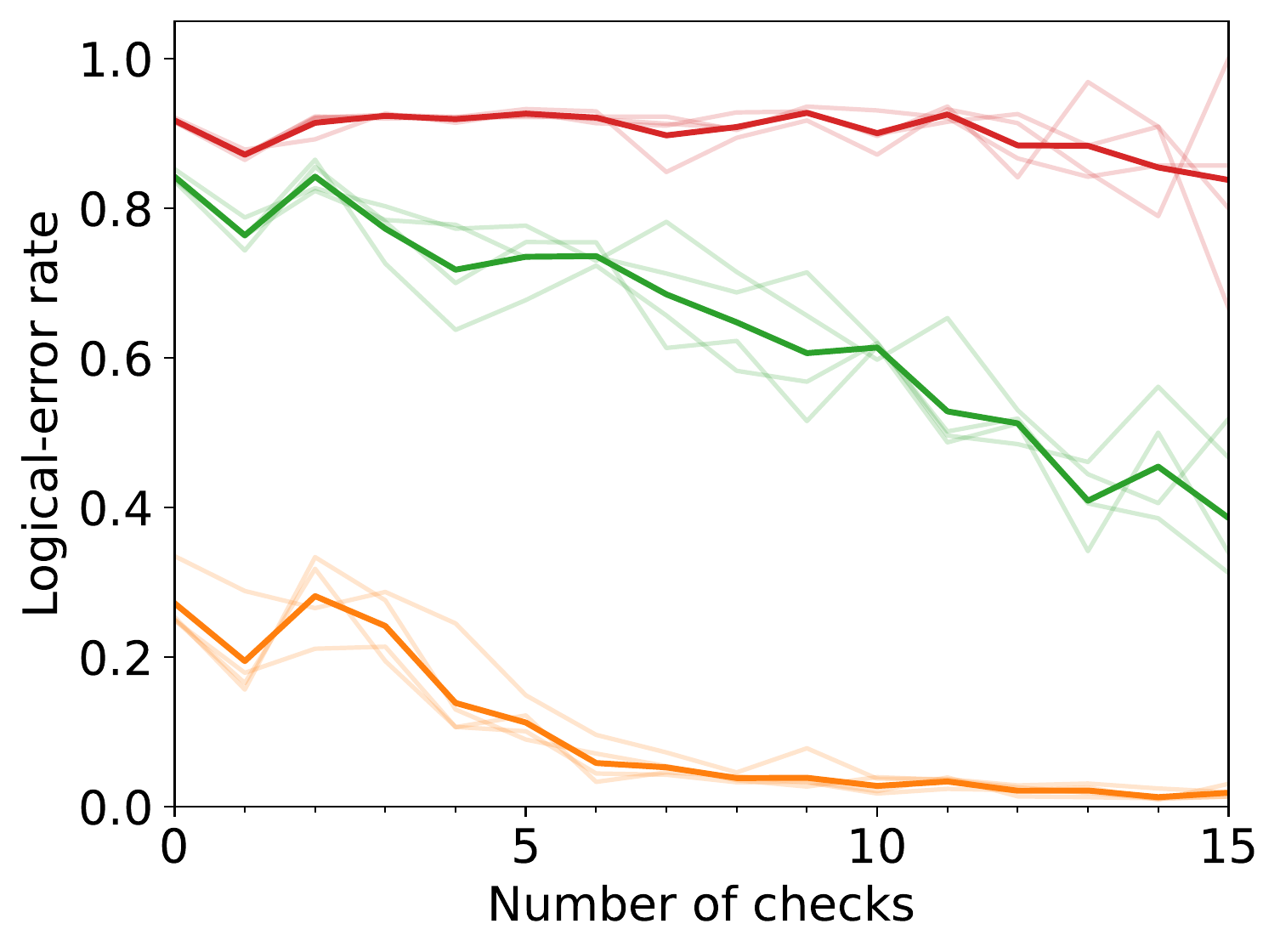} & \includegraphics[width=0.30\textwidth]{./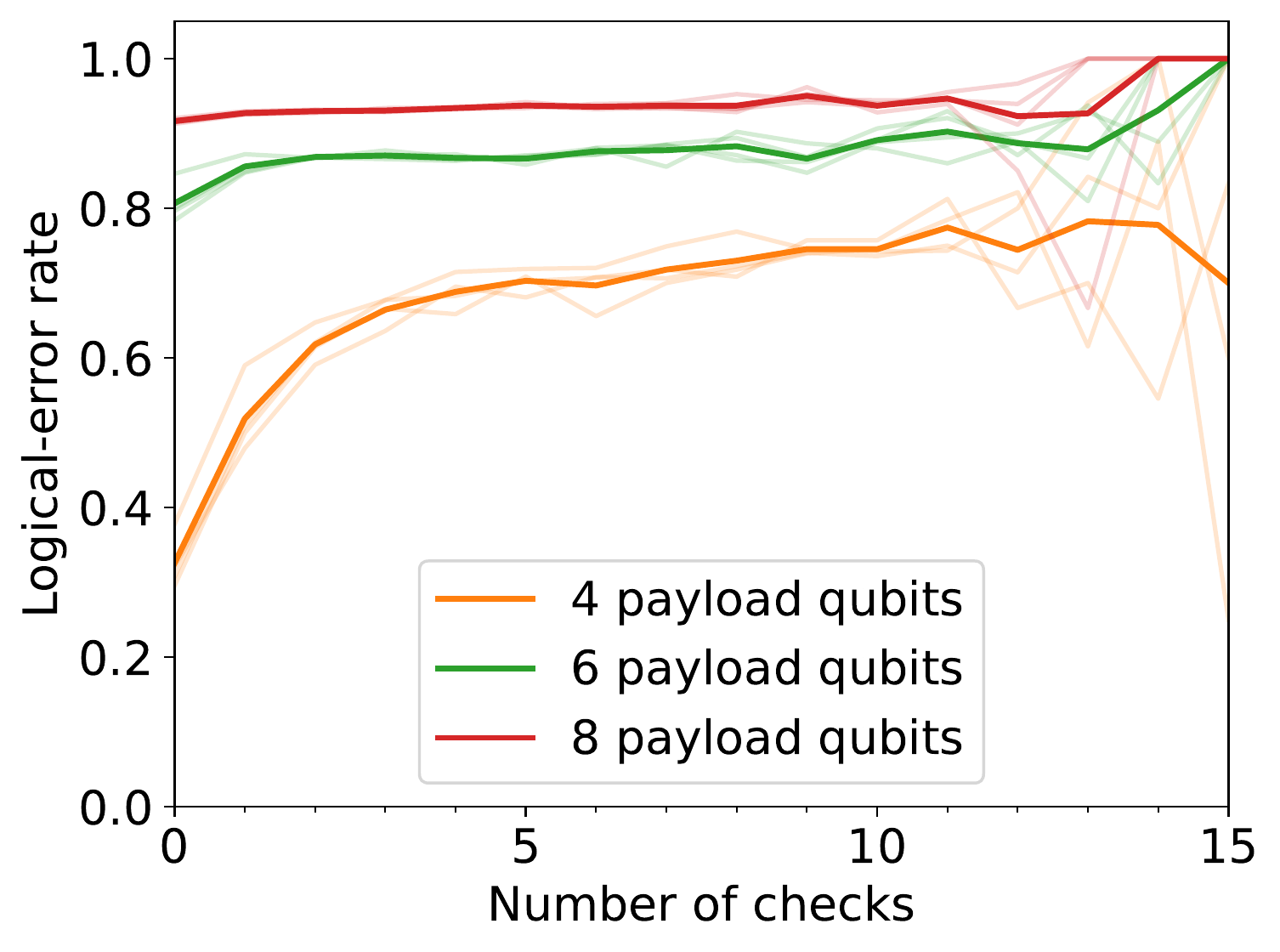} & \includegraphics[width=0.30\textwidth]{./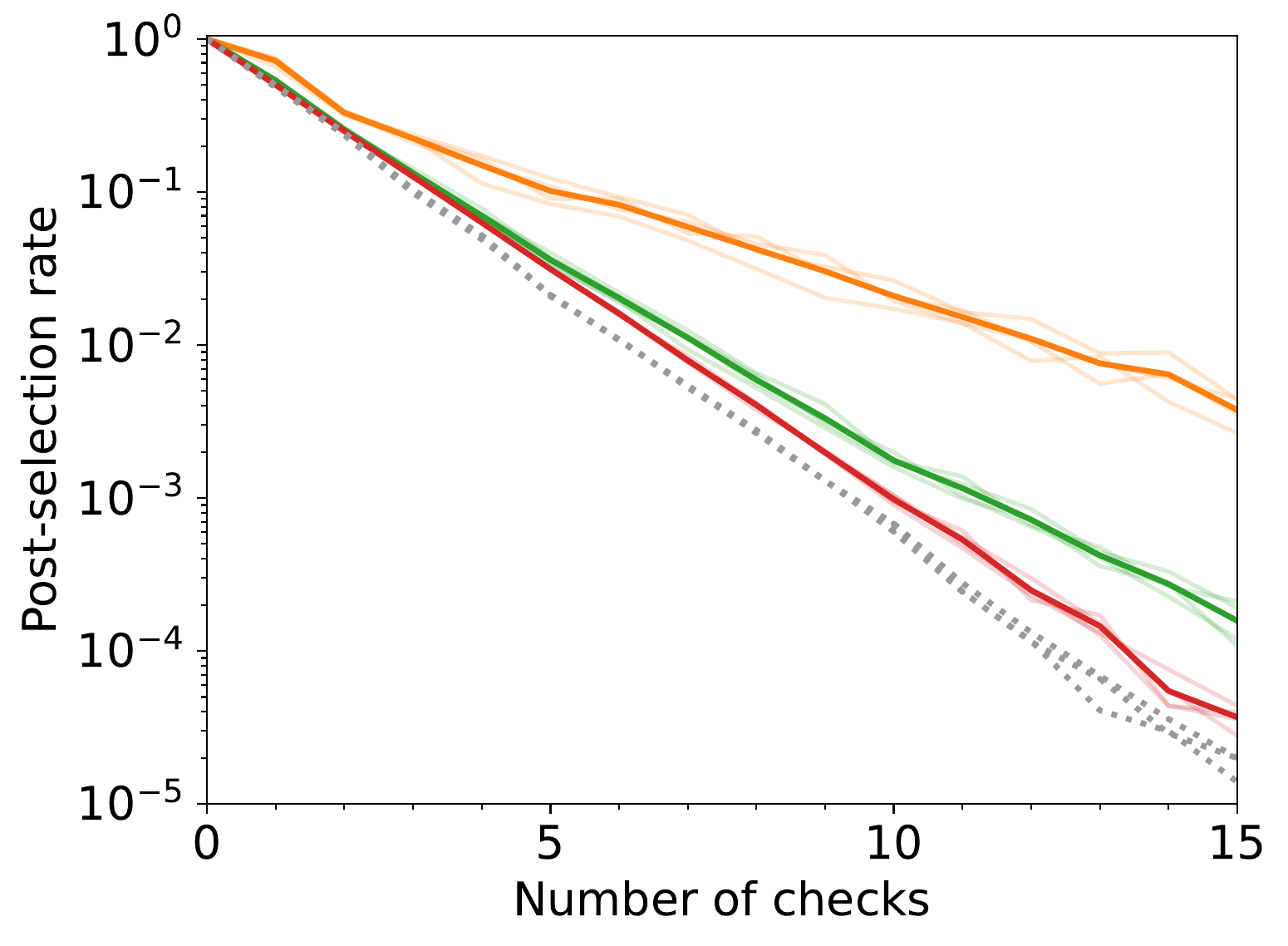}\\ 
({\bf{a}}) & ({\bf{b}}) & ({\bf{c}})\\[3pt]
\end{tabular}
\caption{Logical error rates using (a) left-only and (b) two-sided Pauli-Z checks on a randomly sampled permutation operation on 4, 6, and 8 qubits. The post-selection rates for the left-only check are shown in (c) along with those for the two-sided check (dotted gray lines).}\label{Fig:Exp48B_50B}
\end{figure}

\subsection{Repeated readout}

For our readout mitigation experiments, we use the chain of 11 qubits on IBM's 65-qubit superconducting quantum processor \ibmithaca, illustrated in Figure~\ref{Fig:Ithaca}(a). For our purposes, the chain facilitates the readout of a single qubit with up to 10 checks. As a start, we measure the readout-error rates of the qubits in the chain. Throughout this section we apply readout twirling by randomly applying an {\sc{i}} or an {\sc{x}} gate just prior to measurement with equal probability and classically undoing it. Doing so symmetrizes the readout channel of a single qubit and removes bias in the transitions. For basic benchmarking of the individual qubit readout rates we prepare four circuits: two twirl instances for each of the \ket{0} and \ket{1} initial states. We acquire 8192 measurements for each circuit instance to estimate the readout errors, which are tabulated in Figure~\ref{Fig:Ithaca}(b). Since simultaneous readout of several qubits may change the readout error on the individual qubits we also measure all qubits simultaneously and derive their error rates. For this we prepare 256 circuit instances where, randomly choose the initial state and twirl gates, such that each of the four combinations on each qubit appears in exactly 64 circuits. For each circuit instance, we gather 1024 shots, and list the results in Figure~\ref{Fig:Ithaca}(b).
Throughout the remainder of this section, we always measure all qubits, to avoid large changes in the readout noise and consider only the values on the qubits of interest.

We estimate the {\sc{cnot}} error rates by preparing each of the neighboring pairs of qubits in the chain, indexed by $(i,i+1)$, to the four computational states and applying a {\sc{cnot}} gate followed by simultaneous measurement of the qubits with a readout twirl. For each initial state, this gives an empirical probability distribution $\hat{p}$ of the four possible outcomes. These distributions are affected by the readout noise and we correct them using the estimated $2\,{\times}\, 2$ stochastic readout transition matrices $\{A_i\}$ for each qubit index $i$ of the chain. Following~\cite{MAC2020ZOa}, we then find the distribution $p$ whose mapping under the readout transition matrices most closely matches $\hat{p}$ in Euclidean norm:
\[
\mathop{\mbox{minimize}}_{p\in\mathbb{R}^4}\quad\half\Vert (A_i\otimes A_{i+1})p - \hat{p}\Vert_2^2\quad\mbox{subject to}\quad p\geq 0,\ \Vert p\Vert_1 = 1.
\]
Combined, this gives a $4\,{\times}\, 4$ stochastic matrix for each of the {\sc{cnot}} gates. The total error probability per column of these matrices is listed in Figure~\ref{Fig:Ithaca}(b).

\begin{figure}
\centering
\begin{tabular}{c|c}
\raisebox{-0.5\height}{\includegraphics[width=130pt]{./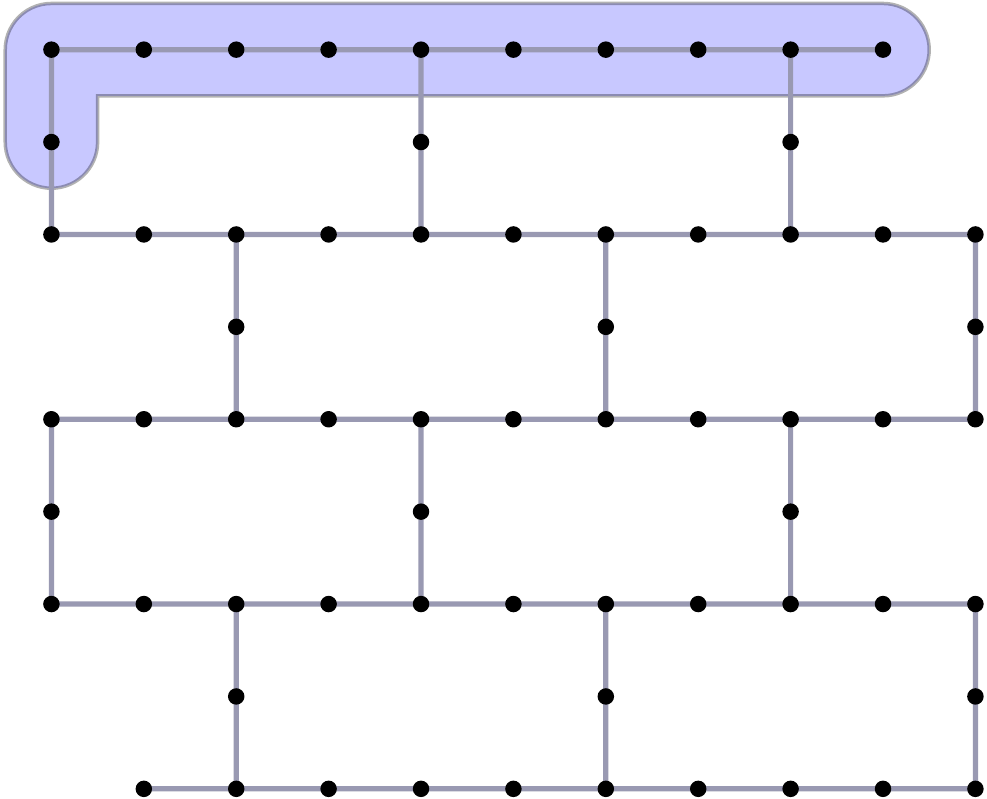}}\ \ \ &
\ \ \ {\small\setlength{\tabcolsep}{3.5pt}\begin{tabular}{llcrrrrrrrrrrr}
\hline
\multicolumn{2}{l}{Qubit} &&  10&0&1&2&3&4&5&6&7&8&9\\
\hline
Measure & \ket{0} &&
3.08&1.17&1.73&1.19&0.74&2.50&2.38&1.22&0.98&2.08&3.13\\
individual &\ket{1} &&
2.30&1.38&1.70&1.09&0.62&2.43&2.32&1.32&1.11&2.34&2.13\\
\hline
Measure & \ket{0} && 2.57&1.16&1.73&1.11&0.72&2.46&2.49&1.41&1.11&2.76&2.91\\
simultaneous & \ket{1} && 2.52&1.17&1.79&1.07&0.69&2.47&2.49&1.38&1.15&2.25&3.04\\
\hline
{\sc{cnot}} & \ket{00} &&
1.22&1.30&0.44&1.00&0.45&2.36&0.87&0.96&0.56&1.04\\
& \ket{01} &&
1.23&1.24&0.73&0.88&0.78&2.61&1.28&1.05&2.52&1.06\\
& \ket{10} &&
1.29&1.19&0.47&0.96&1.12&2.32&1.60&0.97&1.09&1.53\\
& \ket{11} &&
1.16&1.30&0.38&1.38&0.95&2.45&1.43&0.97&1.72&1.87\\
\hline
\end{tabular}}\\
\multicolumn{2}{c}{}\\[-5pt]
\multicolumn{1}{c}{({\bf{a}})} &
\multicolumn{1}{c}{({\bf{b}})}
\end{tabular}
\caption{The (a) processor topology of \ibmithaca\ along with the selected chain of 11 qubits and (b) estimated error rates of the different operations on the given qubits (in percent). Measurement errors are determined either by measuring individual qubits, or by measuring all qubits simultaneously. The {\sc{cnot}} operations use the given qubit as control and the qubit in the column that follows as target.}\label{Fig:Ithaca}
\end{figure}

\begin{figure}
\centering
\begin{tabular}{ccc}
\includegraphics[width=0.3\textwidth]{./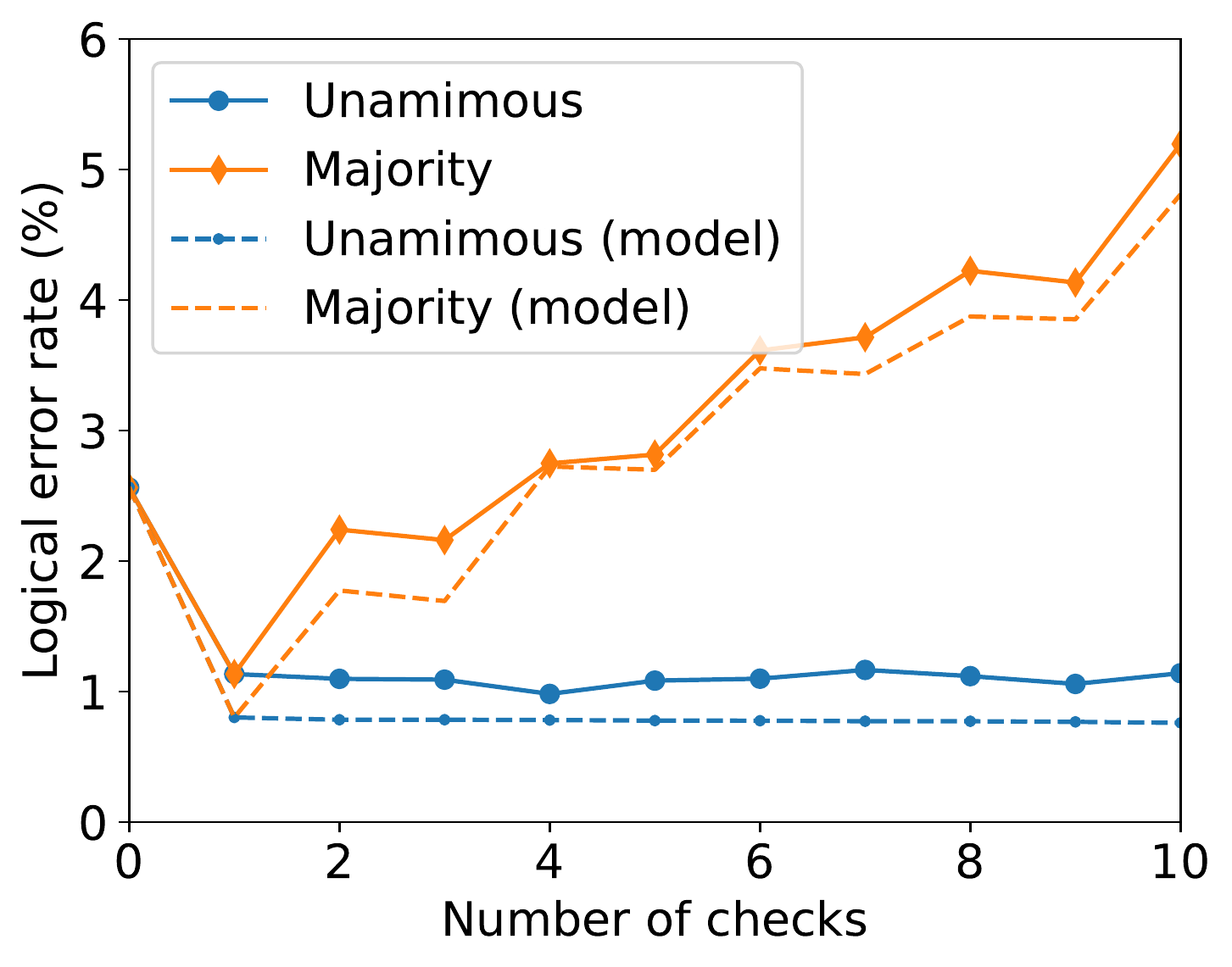}&
\includegraphics[width=0.3\textwidth]{./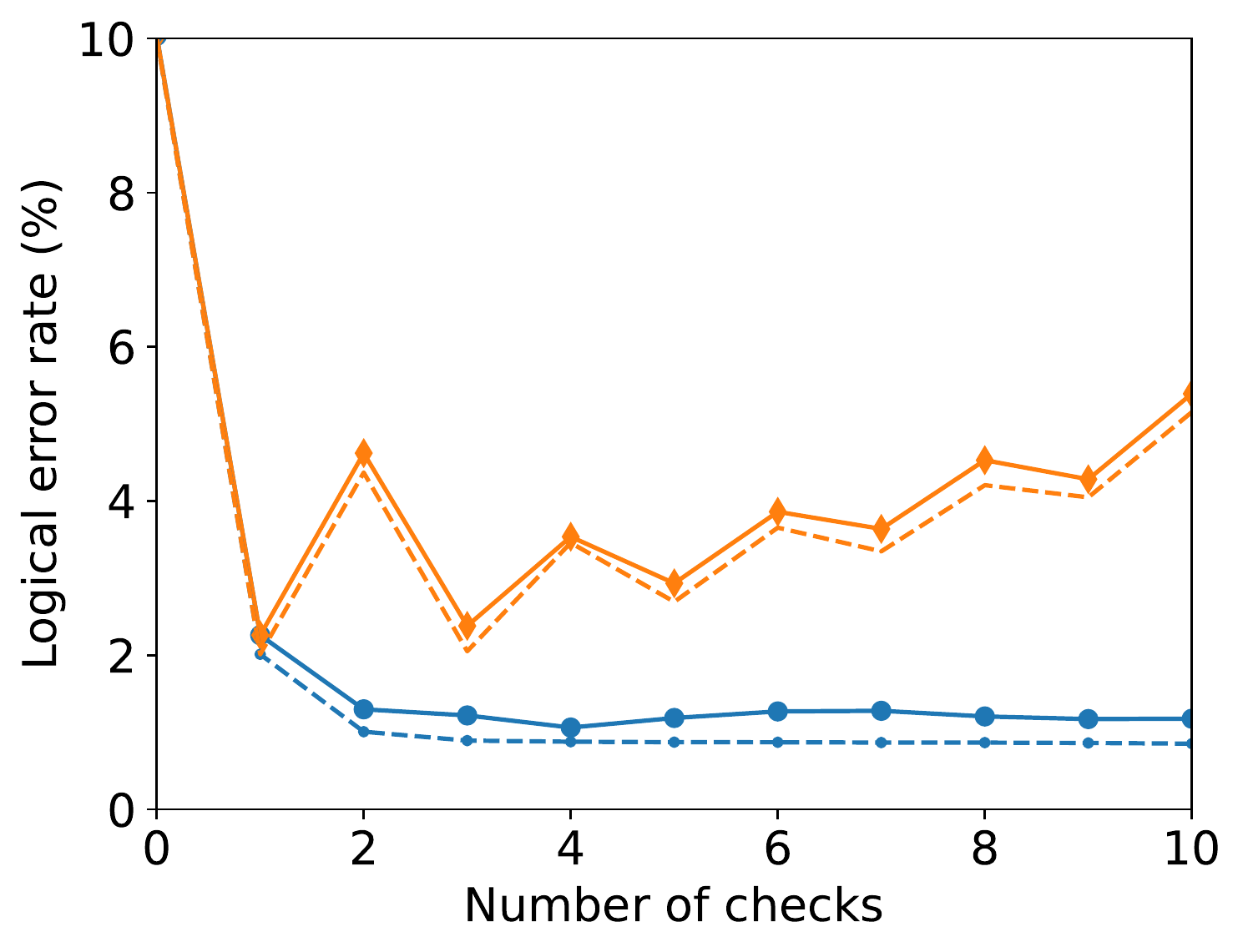}&
\includegraphics[width=0.3\textwidth]{./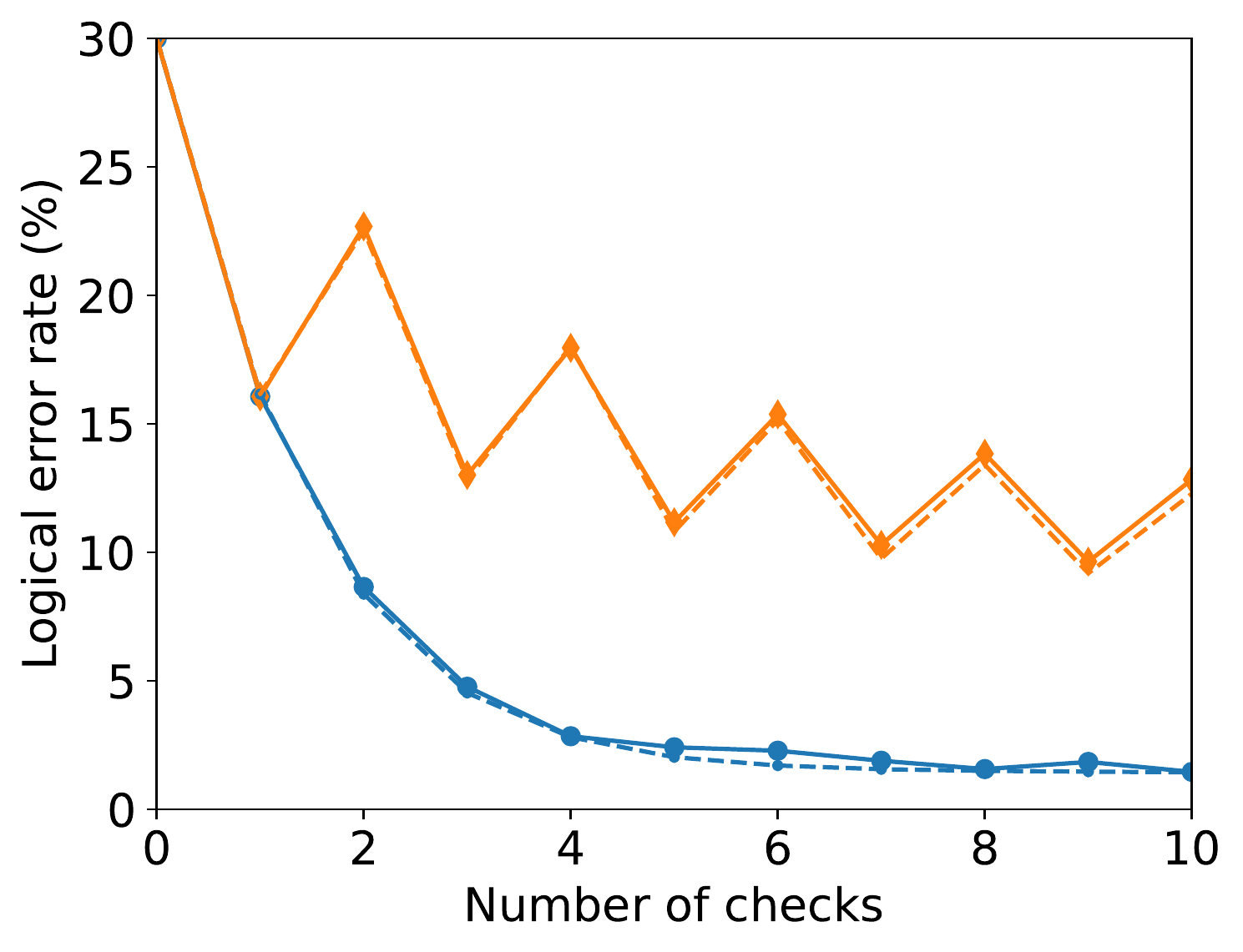}\\[3pt]
\includegraphics[width=0.3\textwidth]{./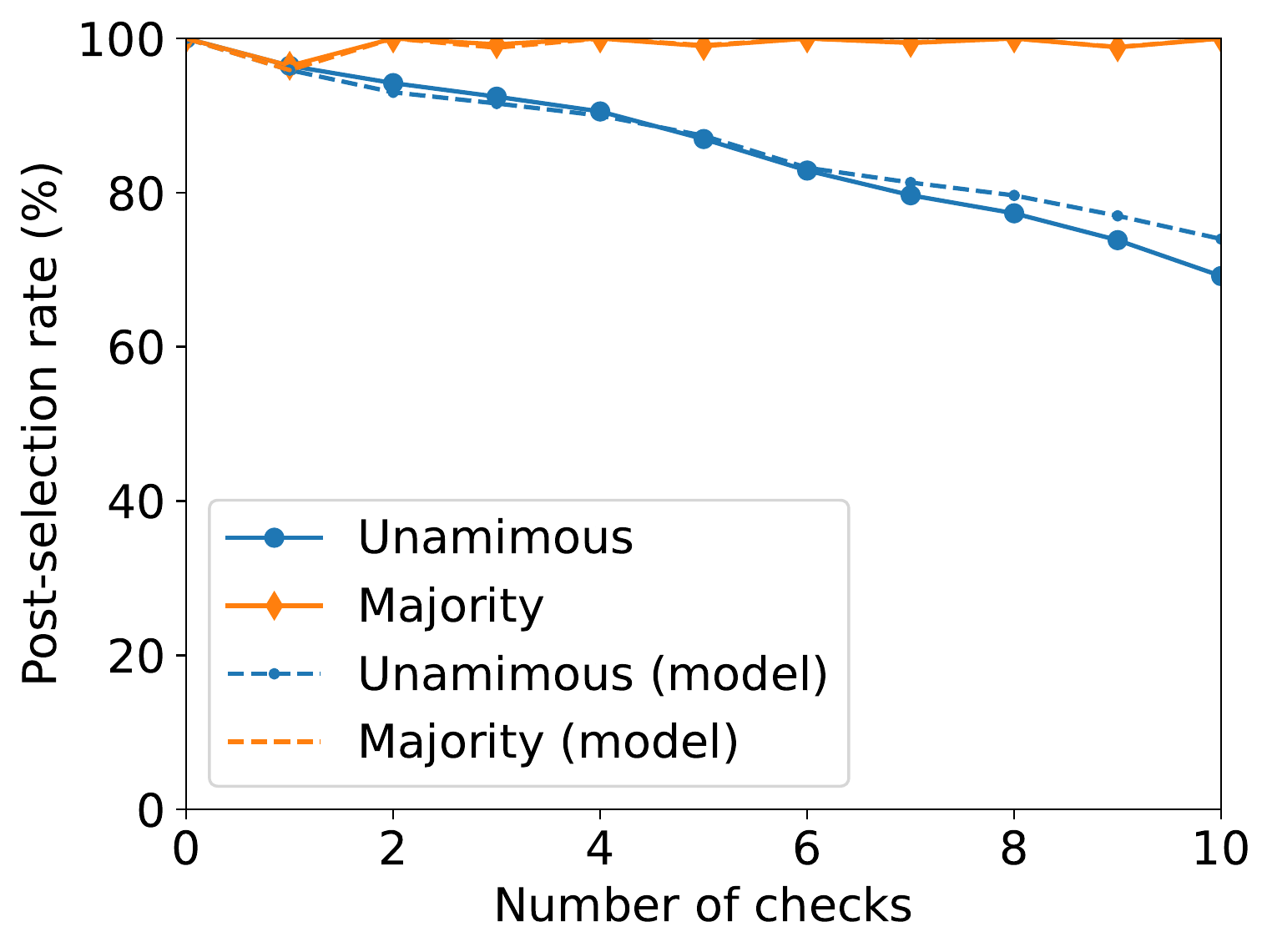}&
\includegraphics[width=0.3\textwidth]{./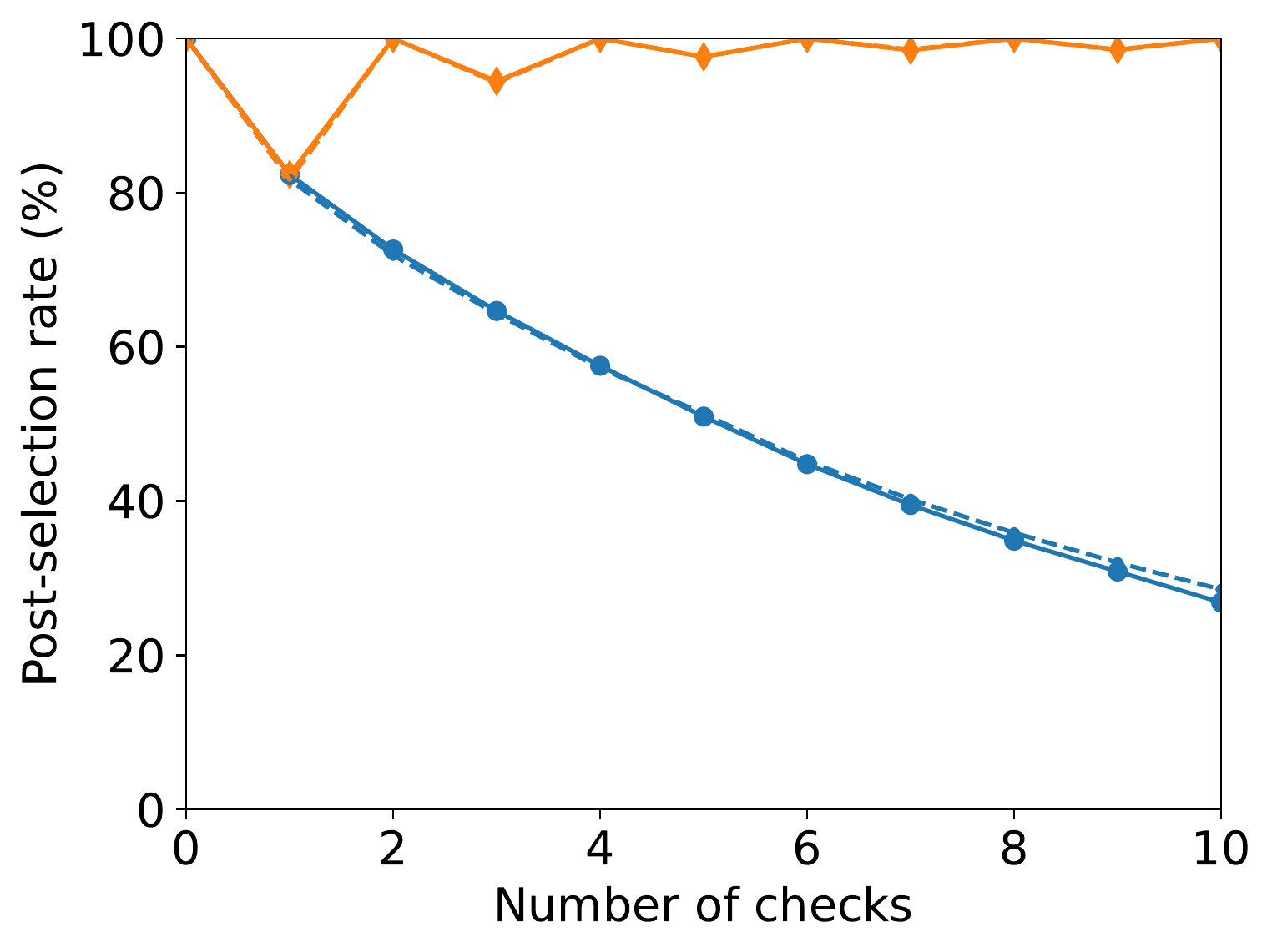}&
\includegraphics[width=0.3\textwidth]{./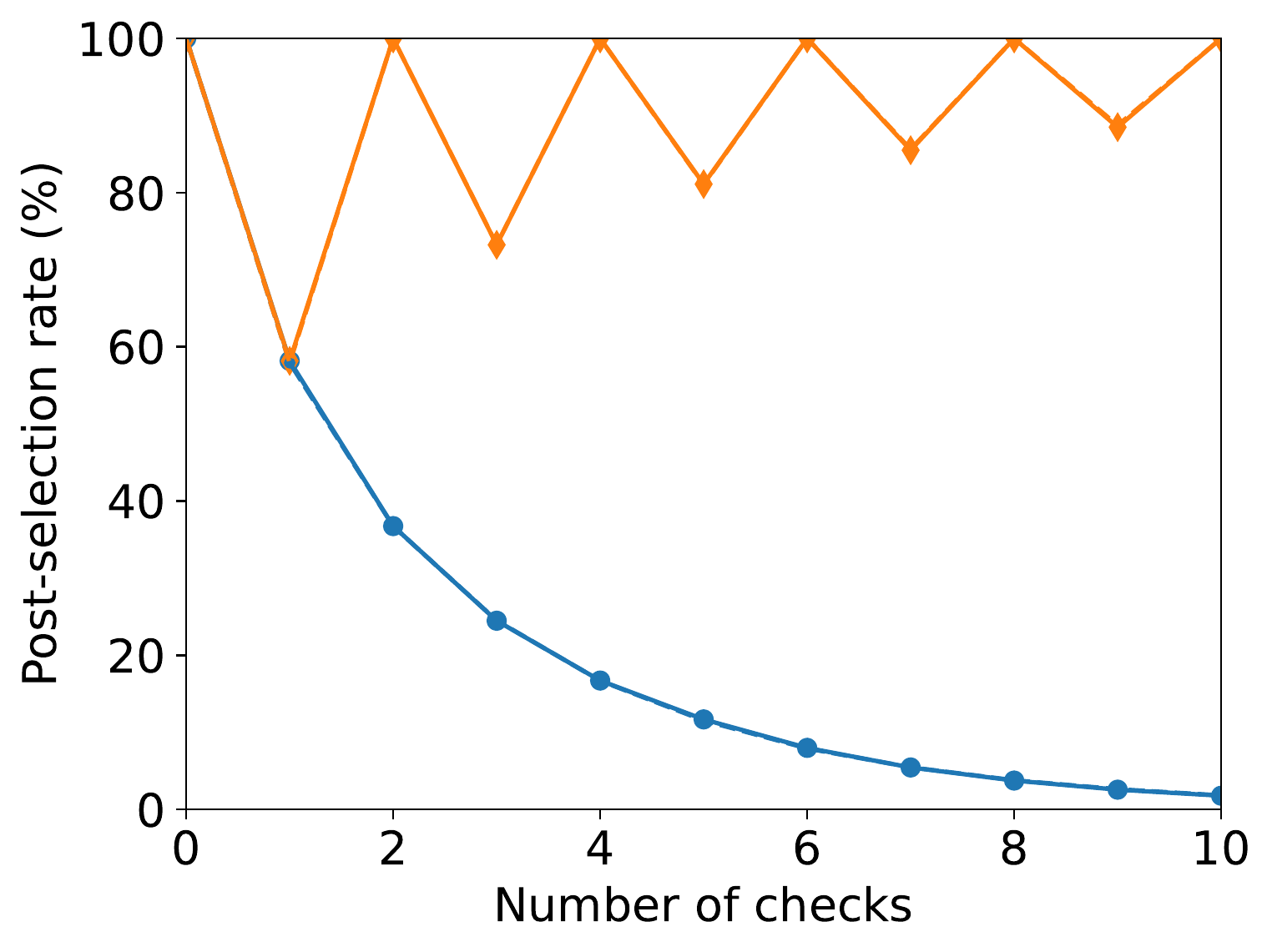}\\
({\bf{a}}) & ({\bf{b}}) & ({\bf{c}})
\end{tabular}
\caption{The (top) logical readout-error rates and (bottom) post-selection rates as a function of number of checks using all-equal and majority decoding. The results in (a) are based directly on experimental data, and modeled results using the estimated {\sc{cnot}} and measurement error rates. Plots (b) and (c) show the results obtained when classically boosting the experimental readout noise per qubit to approximately 10\% and 30\%, respectively.}\label{Fig:ReadoutMitigationExp}
\end{figure}

With all ingredients ready we can now measure the target qubit (the first qubit in the chain) with up to 10 checks. We consider three measurement settings. In the first, we measure a qubit directly following the last gate that applies to the qubit. In the second setting we apply a barrier following the gates on the qubits and measure all qubits simultaneously. The third setting matches the second, but applies dynamical decoupling on the idle time between the gates and the measurements that arises as a result of the {\sc{cnot}} ladder. A preliminary inspection of the results shows that there is little difference between the various settings and we therefore focus on the second setting.
For decoding of the measured bit strings, we use the unanimous approach in which all bits must match, as well as majority decoding in which the overall value is set to the largest number of 0 or 1 bits. In case of a tie, which possible only with an even number of qubits and therefore an odd number of checks, we reject the measurement. Figure~\ref{Fig:ReadoutMitigationExp}(a) plots the results obtained by 128 randomly twirled circuit instances with the target qubit initialized in the \ket{0} state, each sampled 1024 times. (Results obtained with the target state initialized to \ket{1} are similar and therefore omitted.)
For 0 or 1 checks the results of the two decoding schemes are identical. Beyond that, we see that majority decoding not only has a larger logical error rate, but also one that can exceed the rate without any checks. For comparison, we also plot the results obtained by simulating the respective probability distributions using the stochastic transition matrices for the {\sc{cnot}} and measurement operations.

We expect that in the future the fidelity of {\sc{cnot}} gates may be much higher than that of measurements. Therefore we like to study this disparity in accuracy. Although we cannot directly lower the {\sc{cnot}} error rate, what we can do is amplify the readout noise. Given a single qubit with readout-error rate $p$ and a target rate $r \geq p$, we can apply a classical bit-flip channel with probability $q$ such that the overall bit-flip probability $p(1-q) + q(1-p) = r$, which gives $q = (r-p) / (1-2p)$. For each qubit, we can compute the corresponding $q_i$ and apply the noise channel to each measured shot. The results obtained this way for readout errors amplified to 10\% and 30\% are shown in Figure~\ref{Fig:ReadoutMitigationExp}(b,c). As expected, with  unanimous decoding, the logical error rate decreases to levels far below the readout-error rate. Although majority decoding manages to lower the logical error rate, it does not quite attain the error rates obtained using unanimous decoding. Finally, in Figure~\ref{Fig:ScaledCX}, we plot the results obtained from modeled probability distributions when scaling the error terms in the {\sc{cnot}} transition matrices by a factor $\alpha$ and renormalizing the columns, while leaving the readout errors at their original levels.
At $\alpha {=} 1$ we obtain the original experimental results, while for $\alpha {=} 0$ we obtain the projected results when all {\sc{cnot}} gates are noiseless. As $\alpha$ decreases, so do the logical error rates; the post-selection rates, on the other hand, increase.

\begin{figure}[!h]
\centering
\begin{tabular}{cc}
\includegraphics[width=0.4\textwidth]{./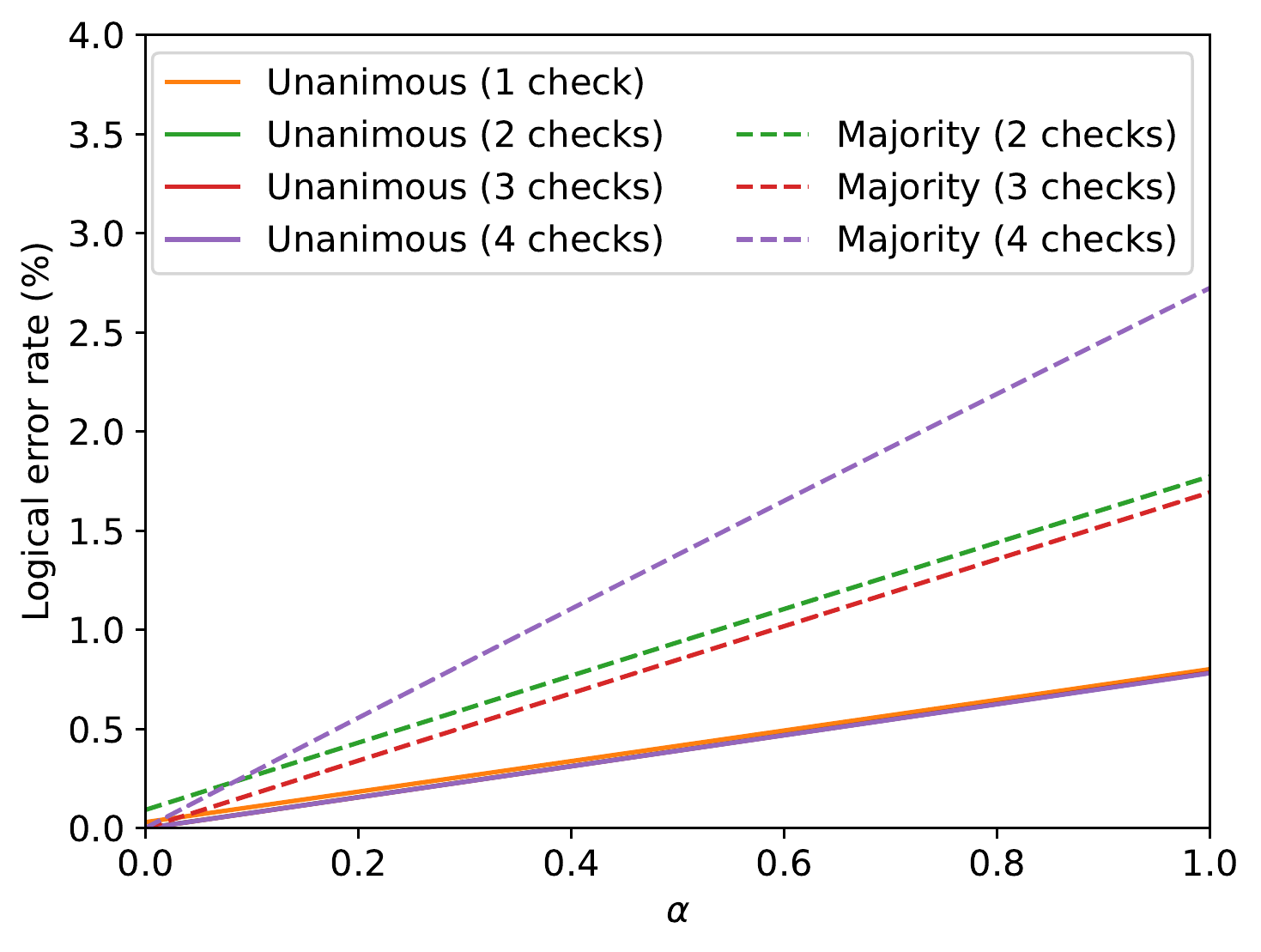}&
\includegraphics[width=0.4\textwidth]{./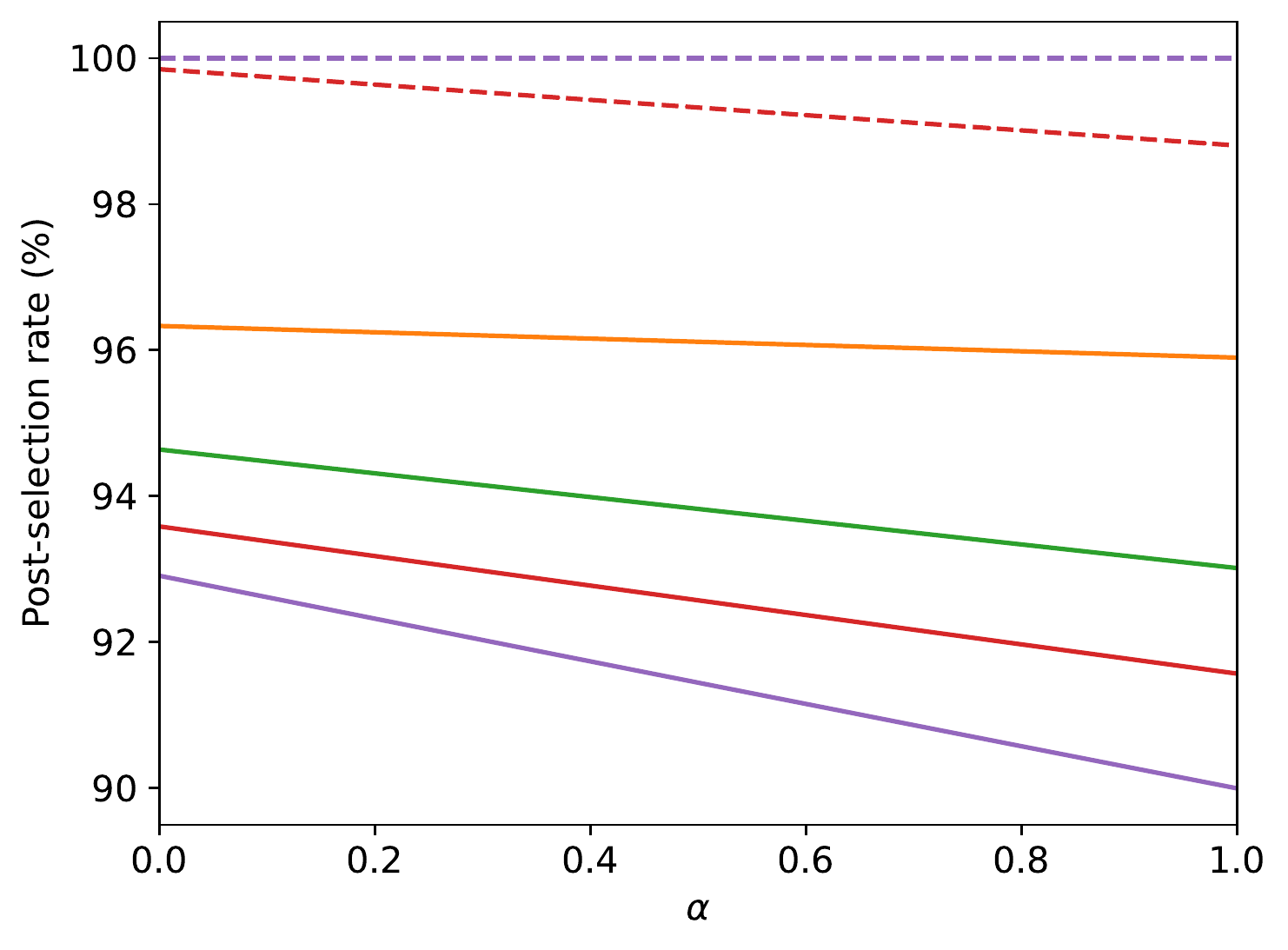}\\
({\bf{a}}) & ({\bf{b}})
\end{tabular}
\caption{Modeled (a) logical error rates and (b) post-selection rates for different scaling parameters $\alpha$ for the {\sc{cnot}} noise.}\label{Fig:ScaledCX}
\end{figure}

\section{Conclusion}
Our work has demonstrated, both analytically and experimentally,
that error mitigation methods can be successfully applied to quantum circuits with a  single-shot readout. It was shown that 
coherent Pauli checks provide a partially fault-tolerant implementation of Clifford circuits
with a small qubit and gate overhead. A large sampling overhead associated with
the postselection can be potentially avoided  if the measured syndromes
are used to correct errors in the payload circuit, rather than only detect them.
We leave the development of such error correction protocols for future work.  
While our main goal was error mitigation for Clifford payload circuits, we note that the same methods
can be applied to any layer of Clifford gates embedded into a larger, possibly non-Clifford, circuit.
As a concrete example, our methods can be straightforwardly applied to conjugated
Clifford circuits proposed by Bouland et al.~\cite{bouland2017complexity}.
Such circuits have a form $LUL^{-1}$, where $U$ is a random $n$-qubit Clifford operator
and $L$ is a layer of single-qubit SU$(2)$ gates. As shown in~\cite{bouland2017complexity},
sampling the output distribution of conjugated Clifford circuits is classically hard, under
plausible complexity-theoretic assumptions. Since single-qubit gates can be implemented
with high fidelity on almost all available quantum processors, the dominant source of errors
is likely to be the Clifford layer $U$ and our methods can be applied to mitigate these errors
in the single-shot setting. An interesting direction for future work is generalizing coherent Pauli
checks to  measurement-based models of quantum computation~\cite{raussendorf2001one}.
For example, the Pauli Based Computation introduced in~\cite{bravyi2016trading}
can efficiently simulate a universal quantum computer by initializing a register
of $n$ qubits in the tensor product of single-qubit magic states and performing
a suitable sequence of Clifford gates and measurements. In this example the dominant
source of errors is likely to be the Clifford part of the computation since the initial magic
states can be prepared with high fidelity by the single-qubit gates. 
Finally, our theoretical Markov chain model and the numerical simulation results indicate that near-term quantum processors with error rates in the range 
$0.1\%$ will be capable of sampling the output distribution of 
near-Clifford circuits  on approximately $50$ qubits.

\section*{Acknowledgements}

We would like to thank Shelly Garion and Ben Zindorf for help with compilation of Clifford circuits on linear qubit connectivity. This material is based upon work supported by the U.S. Department of Energy, Office of Science, National Quantum Information Science Research Centers, Quantum Science Center.

\end{document}